\documentclass{article} 
\usepackage{nips13submit_e,times}
\usepackage{hyperref}
\usepackage{url}
\usepackage{amsmath}
\usepackage{amssymb}
\usepackage{amsthm}

\usepackage{hyperref}
\usepackage{url}
\usepackage{amsmath}
\usepackage{amssymb}
\usepackage{amsthm}

\newtheorem{definition}{Definition}
\newtheorem{lemma}{Lemma}
\newtheorem{assumption}{Assumption}
\newtheorem{prop}{Proposition}
\newtheorem{cor}{Corollary}

\newcommand{\hsa}{\hspace{0.2in}}
\newcommand{\hsb}{\hspace{0.06in}}
\newcommand{\hsc}{\hspace{0.3in}}

\newcommand{\mS}{{\mathcal M}}
\newcommand{\setR}{{\mathcal R}}

\newcommand{\setAd}{{\mathcal A}^{(d)}}
\newcommand{\setAs}{{\mathcal A}^{(s)}}

\newcommand{\Add}{\setAd}
\newcommand{\Ads}{\setAs}

\newcommand{\setC}{{\mathcal C}}

\newcommand{\dC}{C_{\dl}}
\newcommand{\dCk}{C_{\dl,k}}

\newcommand{\ddC}{C_{\dl}}

\newcommand{\ddCs}{C^*_{\dl}}
\newcommand{\Cdd}{{C_{d}}}
\newcommand{\Cds}{{C_{s}}}
\newcommand{\Cddk}{{C_{k,d}}}
\newcommand{\Cdsk}{{C_{k,s}}}

\newcommand{\CI}{{\left (I_C,I_C \right ) }}
\newcommand{\Cs}{{\left (I_{C^*},I_{C^*} \right ) }}
\newcommand{\Cd}{{\left (\Cdd,\Cds \right ) }}

\newcommand{\Csd}{\left (\intsdd,\intssd \right ) }

\newcommand{\sC}{{\mathcal C}}
\newcommand{\sCNash}{{\mathcal C}}
\newcommand{\sCd}{{\mathcal C}}
\newcommand{\sCdNash}{{\mathcal C}}
\newcommand{\sCdd}{{\mathcal C}_{\dl}}
\newcommand{\sCddNash}{{\mathcal C}_{\dl}}
\newcommand{\dsCd}{{\mathcal C(L_C,\dl)}}

\newcommand{\CS}{{\mathcal S}}
\newcommand{\CSNash}{{\mathcal S}^*}
\newcommand{\CSd}{{\mathcal S}}
\newcommand{\CSdNash}{{\mathcal S}^*}
\newcommand{\CSdd}{{\mathcal S}_{\dl}}
\newcommand{\CSddNash}{{\mathcal S}^*_{\dl}}

\newcommand{\dCSd}{{\sCd_\dl(L_C)}}
\newcommand{\dCSdsub}{{\sCd_{\dl_0}(L_C)}}

\newcommand{\commStruc}{\left ( \sC, \{\aSC(y)\}_{y \in \setR},  \{\bSC(\cdot|y)\}_{y \in \setR} \right )}

\newcommand{\commStrucNash}{\left ( \sCNash, \{\aSCs(y)\}_{y \in \setR},  \{\bSCs(\cdot|y)\}_{y \in \setR} \right )}

\newcommand{\dcommStruc}{\left ( \sCd, \{\aSCd(y)\}_{y \in \Add},  \{\bSCd(\cdot|y)\}_{y \in \Ads} \right )}
\newcommand{\dcommStrucNash}{\left ( \sCdNash, \{\aSCsd(y)\}_{y \in \Add},  \{\bSCsd(\cdot|y)\}_{y \in \Ads} \right )}

\newcommand{\ddcommStrucNash}{\left ( \sCddNash, \{\aSCsdd(y)\}_{y \in \Add},  \{\bSCsdd(\cdot|y)\}_{y \in \Ads} \right )}

\newcommand{\Kd}{K^{(d)}}
\newcommand{\Ks}{K^{(s)}}

\newcommand{\Bbl}{\Big ( }
\newcommand{\Bbr}{\Big ) }
\newcommand{\Bsbl}{\Big [ }
\newcommand{\Bsbr}{\Big ] }
\newcommand{\Bsl}{{\Big \{ }}
\newcommand{\Bsr}{{\Big \} }}

\newcommand{\dl}{{\delta}}
\newcommand{\Dl}{{\Delta}}
\newcommand{\dld}{{\delta_d}}
\newcommand{\dls}{{\delta_s}}

\newcommand{\Id}{{I_{C}}}

\newcommand{\intd}{C_\dl}
\newcommand{\intdd}{C_d}
\newcommand{\intsd}{C_s}
\newcommand{\intsdd}{\Cddk}
\newcommand{\intssd}{\Cdsk}

\newcommand{\Ld}{{L(\intd)}}
\newcommand{\LCd}{{L_{C}}}
\newcommand{\Ldd}{{L\big (\intdd \big )}}

\newcommand{\LSd}{{L^*_{\dC}}}
\newcommand{\midd}{{mid(\intd)}}
\newcommand{\middd}{{mid(\intdd)}}

\newcommand{\aC}{{\alpha_{C}}}

\newcommand{\aSC}{{\alpha_{\setC}}}
\newcommand{\aSCs}{{\alpha^*_{\sCNash}}}

\newcommand{\aCd}{{\alpha_{\dC}}}
\newcommand{\aCsd}{{\alpha^*_{\dC}}}
\newcommand{\aSCd}{{\alpha_{\sCd}}}
\newcommand{\aSCsd}{{\alpha^*_{\sCd}}}
\newcommand{\aCdNash}{{\aCsd}}

\newcommand{\aSCdd}{{\alpha_{\sCdd}}}
\newcommand{\aSCsdd}{{\alpha^*_{\sCdd}}}

\newcommand{\bSC}{{\beta_{\setC}}}
\newcommand{\bSCs}{{\beta^*_{\sCNash}}}
\newcommand{\bCd}{{\beta_{\dC}}}
\newcommand{\bCsd}{{\beta^*_{\dC}}}
\newcommand{\bSCd}{{\beta_{\sCd}}}
\newcommand{\bSCsd}{{\beta^*_{\sCd}}}
\newcommand{\bCdNash}{{\bCsd}}

\newcommand{\bSCdd}{{\beta_{\sCdd}}}
\newcommand{\bSCsdd}{{\beta^*_{\sCdd}}}

\newcommand{\xd}{{x^*_\dl}}
\newcommand{\xCd}{{x^*_{\dC}}}
\newcommand{\xs}{{x^*}}

\newcommand{\Dy}{{\Delta^*(y)}}
\newcommand{\Ddy}{{\Delta^*_\dl(y)}}

\newcommand{\PCd}{{P_{\dC}}}

\newcommand{\QCd}{{Q_{\dC}}}
\newcommand{\QCsd}{{Q^*_{\dC}}}
\newcommand{\QCdNash}{{\QCsd}}

\newcommand{\UCd}{{U^{(d)}_C}}
\newcommand{\UCs}{{U^{(s)}_C}}

\newcommand{\FCd}{{F^{(d)}_C}}
\newcommand{\FCs}{{F^{(s)}_C}}
\newcommand{\FsCd}{{F^{(d)}_{C^*}}}

\newcommand{\UCSdNash}{{U^{(d)}_{\CSNash}}}
\newcommand{\UCSsNash}{{U^{(s)}_{\CSNash}}}

\newcommand{\UCdd}{{U^{(d)}_{\dC}}}
\newcommand{\UCsd}{{U^{(s)}_{\dC}}}
\newcommand{\FCdd}{{F^{(d)}_{\dC}}}
\newcommand{\FCsd}{{F^{(s)}_{\dC}}}
\newcommand{\FsCdd}{{F^{(d)}_{\dC^*}}}

\newcommand{\UCSdd}{{U^{(d)}_{\CSdd}}}
\newcommand{\UCSsd}{{U^{(s)}_{\CSdd}}}
\newcommand{\UCSddNash}{{U^{(d)}_{\CSddNash}}}
\newcommand{\UCSsdNash}{{U^{(s)}_{\CSddNash}}}

\newcommand{\erNash}{$\epsilon-$equilibrium }
\newcommand{\erNASH}{$\epsilon-$Equilibrium }
\newcommand{\erNashn}{$\epsilon-$equilibrium}

\newcommand{\dlO}{\dl_0(L_C)}

\newcommand{\tl}{\tilde}

\newcommand{\ty}{\tl y}

\newcommand{\tdC}{\tl \dC}
\newcommand{\tintdd}{{\tl C_d}}
\newcommand{\tintsd}{{\tl C_s}}
\newcommand{\tCd}{(\tintdd,\tintsd)}

\newcommand{\tUCdd}{{U^{(d)}_{\tdC}}}
\newcommand{\tFCdd}{{F^{(d)}_{\tdC}}}

\title{Community Structures in Information Networks for a Discrete Agent Population}

\author{Peter Marbach\\ Department of Computer Science\\ University of Toronto \vspace{0.5in}}

\nipsfinalcopy 

\begin{document}


\maketitle

\begin{abstract}
Communities are an important feature of social networks. The goal of this paper is to propose a mathematical model to study the community structure in social networks. For this, we consider a particular case of a social network, namely information networks. We assume that there is a population of agents who are interested in obtaining content. Agents differ in the type of content they are interested in. The goal of agents is to form communities in order to maximize their utility for obtaining and producing content. We use this model to characterize the structure of communities that emerge as a Nash equilibrium in this setting. The work presented in this paper generalizes results in the literature that were obtained for the case of a continuous agent model, to the case of a discrete agent population model. We note that a discrete agent set reflects more accurately real-life information networks, and are needed in order to get additional insights into the community structure, such as for example the connectivity (graph structure) within in a community, as well as information dissemination within a community. 

\end{abstract}

\section{Introduction}\label{section:introduction}

In this paper we consider a  particular type of social network, which we refer to as an \emph{information network}, where agents (individuals) share/exchange information. Sharing/exchanging of information is an important aspect of social networks, both for social networks that we form in our everyday lives, as well as for online social networks such as Twitter. 

The work in~\cite{continuous_model_ita,continuous_model_arxiv} presents a model to study communities in information networks where agents produce (generate) content, and consume (obtain) content. Furthermore, the model allows agents to form communities in order to share/exchange content more efficiently, where agents obtain a certain utility for joining a given community. Using a game-theoretic framework, \cite{continuous_model_ita,continuous_model_arxiv} characterizes the community structures that emerge in information networks as Nash equilibria. More precisely, \cite{continuous_model_ita,continuous_model_arxiv} considers a particular family of community structures, and shows that (under suitable assumptions) there always exists a community structure that is a Nash equilibrium. 

An interesting outcome of the analysis in~\cite{continuous_model_ita,continuous_model_arxiv} is that, albeit being very simple, the model in~\cite{continuous_model_ita,continuous_model_arxiv} indeed is able to provide interesting insights into the microscopic structure of information communities. For example, the characterization of how content is being produced, i.e. which content each agent in a community produces, indeed matches what has been experimentally observed in real-life social networks.

However, the analysis in~\cite{continuous_model_ita,continuous_model_arxiv} was based on a simplified agent model, i.e. it considered the limiting case of a very dense agent population. As a result, rather then having a discrete set of agent, the agent population was given by a continuous agent density. This assumption simplifies the analysis, and provides the right insights regarding the structure of communities that emerge in an information network. However, the model has its limitations. In particular, as the agent population is given by a continuous density, it does not lend itself readily to study additional properties of communities in information networks such as for example the dynamics of how information communities form, as well as the network (graph) structure within an information community.

In this paper we address this issue by considering a discrete agent population model, i.e. a model whether the information network consists of a finite set of agents. As such, the model considered captures more accurately the situation of actual information networks that do consists of a finite set of agents. We then extend the analysis in~\cite{continuous_model_ita,continuous_model_arxiv} to the discrete agent population model, and show that the results in~\cite{continuous_model_ita,continuous_model_arxiv} can also be established for the discrete agent model. In particular, the results in~\cite{continuous_model_ita,continuous_model_arxiv} are recovered for the limiting case where the size of agent population approaches infinity.

In summary, the contribution of the paper is a technical contribution in that we extend the result in~\cite{continuous_model_ita,continuous_model_arxiv} that were obtained for a continuous agent model to the case of a discrete agent population model. While the analysis shares some similarities with the one in~\cite{continuous_model_ita,continuous_model_arxiv}, and we use some of the results in~\cite{continuous_model_ita,continuous_model_arxiv} for our analysis, the extension to the case of a discrete agent population is technically quite involved and non-trivial. We also note that the discrete agent case is an important case as it more accurately reflects real-life information networks. In addition, results for this case are needed in order to study additional questions such as for example the connectivity (graph structure) within in a community, as well as information dissemination within a community. Indeed, the motivation for this paper was that we wanted to study these questions, and the results of~\cite{continuous_model_ita,continuous_model_arxiv} were not applicable to this case.

The rest of the paper is structured as follows. In Section~\ref{section:related} we discuss existing work that is most closely related to the model and analysis presented in this paper. In Section~\ref{section:content_model} and Section~\ref{section:agent_model} we define the mathematical model that we use for our analysis. In Section~\ref{section:results} we present our main results, and  in Section~\ref{subsection:discussion_results} we discuss the insights obtained from the presented model and analysis, as well as future research. Due to space constraints, all the proofs are given in the appendix.
  
\section{Related Work}\label{section:related}
As our work builds on earlier work presented in~\cite{continuous_model_ita,continuous_model_arxiv}, the summary of related work in~\cite{continuous_model_ita,continuous_model_arxiv} also applies to this paper. For the reader's convenient, we provide here the overview as well.

There exists extensive work, both experimental and theoretical, on the macroscopic properties of social network graphs such as the small world phenomena, shrinking diameter, and power-law degree distribution. By now there are several mathematical models that describe well these properties; examples of such models are Kroenecker graphs~\cite{Kronecker} and geometric protean graphs~\cite{geo-p}. The difference between this body of work and the model presented here is that these models on the social network graph a) do not explicitly model and analyze community structures, and b) focus on macroscopic properties of the social network graph rather than microscopic properties of communities in social networks. 

 There is also a large body of work on community detection algorithms (see for example~\cite{survey_clustering} for a survey) including minimum-cut methods, hierarchical clustering, Givran-Newman algorithm, modulartiy maximization, spectral clustering, and many more. In this paper we are not so much concerned with detecting communities (clusters) in a social network (complex graph), but with modeling and characterizing the community structure that emerges in an information network. An interesting approach to community detection is taken in~\cite{game} by Chen et al. who use a game-theorectic approach to detect overlapping communities in social networks. In~\cite{game} it is assumed that there already exists an underlying social graph for the social network, and the utility that agents obtain when joining a community depends on a) which other agents joined the community and b) the underlying social graph. This is different from the approach in this paper where the utility depends on content production and consumption. In addition, the goal of the work in~\cite{game} is to develop an algorithm to detect overlapping communities, whereas in this paper we aim at characterizing the microscopic structure of information communities.

Related to the analysis in this paper is the work on content forwarding and filtering in social networks~\cite{goel,filtering,hegde}. In particular the work by Zadeh, Goel and Munagala~\cite{goel}, and the work by Hegde, Massoulie, and Viennot~\cite{hegde}. In~\cite{goel}, Zadeh, Goel and Munagala consider the problem of information diffusion in social networks under a broadcast model where content forwarded (posted) by a user is seen by all its neighbors (followers, friends) in the social graph. For this model, the paper~\cite{goel} studies whether there exists a network structure and filtering strategy that leads to both high recall and high precision. High recall means that all users receive all the content that they are interested in, and high precision means that all users only receive content they are interested in. The main result in~\cite{goel} shows that this is indeed the case under suitable graph models such as for example Kronecker graphs. In~\cite{hegde},  Hegde, Massoulie, and Viennot study the problem where users are interested in obtaining content on specific topics, and study whether there exists a graph structure and filtering strategy that allows users to obtain all the content they are interested in. Using a game-theoretic framework (flow games), the analysis in~\cite{hegde} shows that under suitable assumptions there exists a Nash equilibrium, and selfish dynamics converge to a Nash equilibrium. The main difference between the model and analysis in~\cite{goel,hegde} and the approach in this paper is that model and analysis in~\cite{goel,hegde} does not explicitly consider and model community structures,  and the utility obtained by users under the models in~\cite{goel,hegde} depends only on the content that agents receive, but not on the content agents produce. The work by Gupta et al.~\cite{filtering} does take into account how content forwarding (re-posting) by a user affects the utility of neighbors (followers, friends), and studies how different filtering strategies  affect content dissemination. The main result in~\cite{filtering} shows that under suitable social graph models (random graph models) content propagation exhibits a threshold behavior: ``high-quality'' content spreads throughout the network whereas ``low-quality'' content does not. Again, this work does not explicitly consider and model community structures. Despite these differences, we believe that the analysis and results presented in~\cite{goel,filtering,hegde} are very relevant to the model and results presented in this paper, and can potentially be used to extend our results presented here to study content filtering and distribution. Doing this extension is interesting future research.

There exists an interesting connection between the modeling assumption made by Zadeh, Goel and Munagala in~\cite{goel} and by Hegde, Massoulie, and Viennot in~\cite{hegde}, and a result obtained in~\cite{continuous_model_ita,continuous_model_arxiv} and in this paper (Proposition~\ref{prop:dnash}, Section~\ref{section:dnash}). Both papers~\cite{goel,hegde} make the modeling assumption that users produce content only on a small subset of content that they are interested in receiving. Zadeh, Goel and Munagala support this assumption in~\cite{goel} through experimental results obtained on Twitter data that shows that Twitter users indeed tend to produce content on a narrower set of topics than they consume. The results presented in~\cite{continuous_model_ita,continuous_model_arxiv} and in this paper provide a formal validation/explanation for this assumption as it shows that under the proposed model it is optimal for agents (users) to produce content on a small subset of the content type that they are interested in consuming. This result illustrates that the proposed model is able to capture and explain important microscopic properties of information networks and communities.

\section{Mathematical Model}\label{section:content_model}
In this section we introduce the mathematical model that we use for our analysis. The model is an extension of the model in~\cite{continuous_model_ita,continuous_model_arxiv} to the case of a discrete agent population.
We first present the model for the content that agents can produce in an information community. We assume that each content item that is being produced belongs to a particular content type. One might think of a content type as a topic, or particular interest that agents have. Furthermore, we assume that there exists  a structure that relates the different content types with each other. In particular, we assume that  there exists a measure of ``closeness'' between content types that characterizes how strongly related two content types are. For example ``basketball'' and ``baseball'' are both sports and one would assume that these two topics are stronger related with each other than for example ``basketball'' and ``mathematics''.

More formally, we assume that the type of a content item is given by a point $x$ in a metric space $\mS$. The assumption that content types lie in a metric space allows for a natural way to compare the closeness between content types. In particular, the distance between two content types $x,y \in \mS$ is then given by the distance measure $d(x,y)$, $x,y \in \mS$, of the metric space $\mS$.


We next model the content that a given agent is interested in. The model is based on the following intuition. We assume that agents have a main interest, i.e. for each agent there exists a content type $x \in \mS$ that they are most interested in. Moreover, agents are interested in more than one content type, i.e. agents are not only interested in getting information for their main interest $x \in \mS$ but also other type of content. However, the further away a given content type is from their center of interest, the less interested they are in this content type.

To model this situation, we associate with each agent that consumes content a center of interest $y \in \mS$. The center of interest of a given agent is the content type (topic) that an agent is most interested in. Given the center of interest  $y \in \mS$ of an agent, the interest of agent $y$ in content of type  $x \in \mS$ is then given by
$$ p(x|y) = f(d(x,y)),$$
where $d(x,y)$ is the distance between the center of interest $y$ and the content type $x$, and $f:[0,\infty) \mapsto [0,1]$ is a non-increasing function. The interpretation of the function $p(x|y)$ is as follows: when agent $y$ reads a  content of type $x$, then it finds it interesting with probability $p(x|y)$ given by
$$p(x|y) = f(d(x,y)).$$
As the function $f$ is non-increasing, this model captures the intuition that agent an $y \in \mS$ is more interested in content that is close to its center of interest $y$. 

We will refer in the following to an agent by its center of interest $y \in \mS$, i.e. when we refer to an agent $y \in \mS$ we refer to the agent whose center of interest is equal to $y$. 

Next consider a given agent that is producing content, and let $y \in \mS$ be the center of interest of the agent. The ability of agent $y$ to produce content of type $x \in \mS$ is then given by
$$ q(x|y) = g(d(x,y)),$$
where $g:[0,\infty) \mapsto [0,1]$ is a non-increasing function.  The interpretation of this function is as follows. If agent $y$ produces content of type $x$, then the content will  be relevant to content type $x$ with probability $q(x|y)$ given by
$$q(x|y) = g(d(x,y)).$$
As the function $g$ is non-increasing, this model captures the intuition that agent an $y \in \mS$ is better at producing content that is close to its center of interest $y$.

For our analysis, we make additional assumptions on the functions $f$ and $g$ that we use to define the functions $p(\cdot | y)$ and   $q(\cdot | y)$. To state these assumptions, we introduce a few more definitions.

Throughout the paper, we use the following notation. Given a real-valued function $f: \mS \mapsto R$ on a metric space $\mS$, we define the support  $supp(f)$ by
$$supp(f) = \bar A$$
where
$$A = \{ x \in \mS | f(x) \neq 0\},$$
and $\bar A$ is the closure of $A$.

Given real-valued function $f: \mS \mapsto R$ on a metric space $\mS$, we say that $f$ is symmetric with respect to  $y \in \mS$ if for $x,x' \in \mS$ such that
$$ d(x,y) = d(x',y),$$
we have that
$$ f(x) = f(x').$$

Using these definitions, we make the following assumptions for the function $f$ and $g$.
\begin{assumption}\label{ass:fg}
  The function $f: [0,L] \mapsto [0,1]$ is strictly decreasing and three times continuously differentiable on $[0,L]$, the first three derivatives are bounded first derivative on $[0,L]$, and we have that
  $$f'(0) < 0.$$  
Furthermore, the function $f$ is locally strictly concave, i.e. there exists a constant $b$, $0 < b \leq L$, such that 
$$f''(x) < 0, \qquad x \in [0,b].$$
The function $g: [0,L] \mapsto [0,1]$ is non-increasing on $[0,L]$, and strictly concave and twice  continuously differentiable on its support $supp(g)$ with
$$g(0) > 0$$
and
$$g'(0) = 0.$$
\end{assumption}
These assumptions on the functions $f$ and $g$ are a technical assumptions used in the proofs of our results.

\subsection{Content Space $\setR$}\label{section:setR}
In addition, we consider for our analysis a metric space that has a particular structure. More precisely, we consider a one dimensional metric space with the torus metric. The reason for using  this structure is that it simplifies the analysis and allows us to obtain simple expressions for our results, that can easily been interpreted. 

More formally, we consider in the following one-dimensional metric space  for our analysis. The metric space is given by an interval  $\setR = [-L, L) \in R$, $0 < L$, with the torus metric, i.e. the distance between two points $x,y \in \setR$ is given by 
$$d(x,y) = ||x-y|| = \min \{ |x-y|, 2L - |x-y|\},$$
where $| x |$ is the absolute value of $x \in (-\infty,\infty)$.

Note that we have that
$$||x-y|| \leq L, \qquad x,y \in \setR.$$
Furthermore, we have the following two properties,
\begin{enumerate}
\item for $x,y \in \setR$, the addition of $x$ and $y$ is given by
$$ x+ y =
\left \{ \begin{array}{ll}
x +y, & x+y \in [-L,L). \\
-2L + x + y, & x+y \geq L. \\
2L + x + y, & x+y < -L.
\end{array} \right .$$
\item  we have that
$$x < y, \qquad x,y \in \setR.$$
if there exists a point $b$, $b \in (0,L)$, such that
$$ x = y - b.$$
\end{enumerate}

Using the torus metric for the content space $\setR$ eliminates ``border effects'', in the sense are no points that have a ``special'' position as it would be for example the case if we would an interval $[-L,L]$ as the content space. This simplifies the analysis, and leads to simpler expressions for our results.

\section{Agent Population Model}\label{section:agent_model}

We assume that there exists a finite number of agents that exclusively produce content, as well as a finite number of agents that exclusively consume content. More precisely, the set of agents that consume content is given by finite set $\setAd \subset \setR$, of $\Kd$, $\Kd \geq 2$, agents that consume content, and a set $\setAs \subset \setR$, of $\Ks$, $\Ks \geq 1$, agents that produce content. 
The subscript $d$ in $\Cdd$  refers to ``demand'', and the  subscript $s$ in $\Cds$ refers to ``supply''.
For our analysis, we assume that the agents that consume content are ``uniformly distributed'' at distance $\dld$,
$$ \dld = \frac{2L}{\Kd},$$
over $\setR$, and we denote this set of agents by $\Add$. More precisely, the set $\Add$ of agents that consume content is given by 
$$\setAd = \Add = \{y_1,...,y_{\Kd}\}$$
such that
$$ y_{k+1} = y_k + \dld, \qquad k=1,...,\Kd-1.$$

Similarly, we assume that the agents that produce content are ``uniformly distributed'' at distance $\dls$,
$$ \dls = \frac{2L}{\Ks},$$
over $\setR$, and we denote this set of agents by $\Ads$. More precisely, the set $\Ads$ of agents that produce content is given by 
$$\setAs = \Ads = \{y_1,...,y_{\Ks}\}$$
such that
$$ y_{k+1} =  y_k + \dls, \qquad k=1,...,\Kd-1.$$

While we consider here the case where there exists a set of content producers, and a set content consumers, the results in this paper can easily be extended to the case where each agent both produces and consumes content. The results obtained in this paper also hold for this case, requiring only notational changes in the proofs.

\subsection{Information Community}\label{section:information_community}
Having defined the set $\Add$ of agents that consume content, and the set $\Ads$ of agents that produce content, we model a information community as follows. An information community $\dC = (\Cdd,\Cds)$ is defined  by the set $\Cdd\subseteq \Add$ of agents that consume content in the community $\dC$, and the set $\Cds \subseteq \Ads$ of agents that produce content in the community $\dC$. 

For a given a community $\dC= (\Cdd,\Cds)$, we assume that each agent $y \in \Cds$  can decide how much effort it puts into producing content of type $x$. We model this situation as follows. We let $\bCd(x|y)$, $x \in \setR$,  be the rate at which agent $y$ produces content of type $x$ in community $\dC$ where $\bCd(\cdot|y)$ is a non-negative function, i.e. for $y \in \Cds$ we have that
$$\bCd(x|y) \geq 0, \qquad x \in \setR.$$
The total rate (over all agents $y \in \Cds$) at which content of type $x$ is generated in the community $\dC$ is then given by
$$\bCd(x) = \sum_{y \in \Cds}  \bCd(x|y).$$
Recall that content of type $x$ that is generated by agent $y \in \Cds$ is relevant to $x$ with probability $q(x|y)$, and that total rate (over all agents $y \in \Cds$) at which relevant content is generated in $C$ is equal to
$$\QCd(x) = \sum_{y \in \Cds}  \bCd(x|y) q(x|y).$$
We refer to $\QCd(x)$ as the content supply function of the community $\dC$.

For the content consumption we assume that each $y \in \Cdd$ can decide on the fraction of  time it allocates to consume content that is being produced in community $\dC$. Let $\aCd(y)$, $0 \leq \aCd(y) \leq 1$,  be the fraction of time with which agent $y \in \Cdd$ consumes content in community $\dC$, and let the function $\PCd(x)$ be given by
$$\PCd(x) = \sum_{y \in \Cdd}  \aCd(y) p(x|y).$$
We refer to $\PCd(x)$ as the content demand function of  community $\dC$.

For the reward and cost for consuming content, we make the following assumptions. Agents pay a cost $c$ for consuming a content item, where $c$ is a processing cost that reflects the effort/time required by an agent to  process (read) a content item (and decide whether it is of interest or not). If the content item is of interest, then the agent receives a reward equal to 1; otherwise the agent receives a reward equal to 0. The  rate $\mu_C(x|y)$ at which an agent $y \in \Cds$ receives content of type $x$ in community $\dC$ that is of interest to agent $y$ is then given by
\begin{eqnarray*}
\mu_{\dC}(x|y) &=& \aCd(y) \sum_{z \in \Cds} \bCd(x|z) q(x|z) p(x|y) \\
&=& \aCd(y) \QCd(x) p(x|y).
\end{eqnarray*}
The rate at which agent $y$ reads content of type $x$ in the community $C$ is given by 
$$\aCd(y)\bCd(x).$$
As agent $y$ pays a cost of $c$ for each content item, the cost rate that is incurred on agent $y$ is
$$\aCd(y)\bCd(x).$$
Combining the above results, the (time-average) utility  rate (``reward minus cost'') for content consumption $\UCdd(y)$ of agent $y \in \Cdd$ in community $\dC$ is given by
$$ \UCdd(y) = \aCd(y) \int_{\setR} \Bsbl \QCd(x) p(x|y) - \bCd(x) c \Bsbr dx.$$

The (time-average) utility rate for content production  $\UCsd(y)$ of agent $y \in \Cds$ in community $\dC$ is given by
\begin{eqnarray*}
\UCsd(y) = \int_{\setR} \bCd(x|y) \sum_{z\in \Cdd} \aCd(z) \Bsbl  q(x|y) p(x|z)- c \Bsbr dx \\
= \int_{\setR} \bCd(x|y) \Bsbl q(x|y) \PCd(x) - \aCd c \Bsbr dx.
\end{eqnarray*}
where
$$\aCd = \sum_{y \in \Cdd} \aCd(y).$$
This utility rate has the following interpretation. Note that $[q(x|y) p(x|z)- c]$ is the expected reward that agent $z \in \Cdd$ receives from content of type $x$ that is being produced by agent $y \in\ \Cds$. Therefore in an economic setting  $[q(x|y) p(x|z)- c]$ is the amount (price) that agent $z$ is willing to pay agent $y$ for obtaining from agent $y$ content of type $x$. In this sense, one interpretation of the utility rate of a content producer $y \in \Cds$ is that it reflects  the revenue that $y$ would obtain for the content that $y$ produces in the community $\dC$. An alternative interpretation, and the one we adopt in this paper, is that the utility rate of content producer $y \in \Cds$ reflects the reputation, or ``reputation score'', of agent $y$ in the community  $\dC$, i.e. it captures how beneficial the contributions of  a content producer $y$ are for the community $\dC$.

\subsection{Community Structure}\label{section:dnash}
In the previous section  we introduced the model of a community in an information network, and characterized the utility rates that agents obtain when consuming  and producing content  in a given community. Next, we consider the situation where there are agents form several communities in an information network. In particular, we define for this situation how agents decide on how to produce and consume content in the different communities. We refer to the set of communities in an information network, combined with the characterization of how  agents decide on how to produce and consume content in the different communities, as a community structure in an information network.

More precisely, a community structure in an information network is given by a triplet $\CS = \dcommStruc$  where $\sCd$ is a set of communities $\dC = (\Cdd,\Cds)$ as defined in the previous section, and 
$$\aSCd(y) = \{ \aCd(y) \}_{\dC \in \setC} \mbox{ and } \bSCd(y) = \{ \bCd(\cdot |y) \}_{\dC \in \setC}$$
indicate the  rates that content consumers and producers allocate to the different communities $\dC \in \sCd$.

In the following require that each agent is part of at least one community in a community structure. We refer to such a community structure as a covering community structures. More formally, we assume that each agent $y \in \Add$, $y \in \Ads$ belong to at least one community $\dC \in \sCd$, and we have that
$$\cup_{\dC \in \sCd} \Cdd = \Add$$
and
$$\cup_{\dC \in \sCd} \Cds = \Ads.$$
Furthermore, we assume that the total content consumption and production rates of each agent can not exceed a given threshold, and we have that 
$$|| \aSCd(y) || = \sum_{\dC \in \sCd} \aCd(y) \leq E_p, \qquad y \in \Add,$$
where
$$0 < E_p \leq 1,$$
and
$$|| \bSCd(y) || = \sum_{\dC \in \sCd} || \bCd(\cdot|y) || \leq E_q, \qquad y \in \Ads,$$
where
$$ || \bCd(\cdot|y) || = \int_{x \in R}  \beta_C(x|y) dx$$
and
$$0 < E_q.$$
Finally, we require that for a given community structure   $\dcommStruc$, we have for $\dC=(\Cdd,\Cds) \in \sCd$ that
$$ \aCd(y) > 0, \qquad y \in \Cdd,$$
and
$$|| \bCd(y)|| > 0, y \in \Cds,$$
i.e. each agent  $y \in \Cdd$ has a positive content consumption rate $\aCd(y)$ in community $C$, and each agent $y \in \Cds$ has a positive total content production rate $|| \bCd(y)||$ in community $\dC$.

\subsection{\erNASH}
Having defined a community structure in an information network, we next consider the situation where agents consume and produce content in the different communities  in order to maximize their utility rates. For this situation we use a game-theoretic approach to characterize the community structures that emerge in an information network. In particular, we  characterize the community structure that emerges using the concept of an approximate Nash equilibrium, or a \erNashn. To do that, we describe how agents decide on  how to consume and produce content in the different communities  in order to maximize their utility rates.

Given a community structure $\CSd = \dcommStruc$, let $\UCSdd(y)$, $y \in \Add$, the be the total utility rate (over all communities) that agent $y$ receives under this community structure. More precisely, let $\UCSdd(y)$, $y \in \Add$, be given by
$$ \UCSdd(y) = \sum_{\dC \in \setC} \UCdd(y) =  \sum_{\dC \in \sCd} \aCd(y) \int_{\setR} \Bsbl \QCd(x) p(x|y) - \bCd(x) c \Bsbr dx$$
where
$$ \QCd(x) = \sum_{y \in \intsd} \bCd(x|y) q(x|y),$$
and
$$ \bCd(x) = \sum_{y \in \intsd} \bCd(x|y).$$
Similarly,
let $\UCSsd(y)$, $y \in \Ads$, the be the total utility rate (over all communities) that agent $y$ receives under this community structure. More precisely, let $\UCSsd(y)$, $y \in \Ads$, be given by
$$\UCSsd(y) = \sum_{\dC \in \sCd} \UCsd(y) =  \sum_{\dC \in \setC} \int_{\setR} \bCd(z|y) \Bsbl q(x|y)\PCd(x) - \aCd c \Bsbr dx $$
where
$$\PCd(x) = \sum_{y \in \intsd} \aCd(y) p(x|y),$$
and
$$ \aCd = \sum_{y \in \intsd} \aCd(y).$$

Using these definitions, we  analyze for a given a community structure
$\CSd$ 
the situation where an agent $y \in \Add$ changes its rate allocation from $\aSCd(y)$ to $\aSCd'(y)$ given by
$$\aSCd'(y)=  \{ \aCd'(y) \}_{\dC \in \sCd}$$
such that
$$ \sum_{\dC \in \sCd} \aCd'(y) \leq E_p.$$
More precisely,
let $ \UCSdd(\aSCd'(y)|y)$ be the utility that agent $y \in \Add$ obtains under the new allocation $\aSCd'(y)$ (while all other agents keep their rate allocation fixed) given by
$$  \UCSdd(\aSCd'(y)|y) =
\sum_{\dC \in \sCd} \aCd'(y) \int_{x \in \mS} \left [ \sum_{z \in C_s}  \bCd(x|z) \Bsbl q(x|z) p(x|y) - c \Bsbr \right ] dx.$$



Similarly, given a community structure $\CSd$ we  analyze the situation where an agent $y \in \Ads$ changes its rate allocation $\bSCd(y)$ to $\bSCd'(y)$ given by
$$\bSCd'(y)=  \{ \bCd'(\cdot|y) \}_{\dC \in \sCd},$$
such that
$$ || \bSCd'(y) || \leq E_q.$$
Let $ \UCSsd(\bSCd'(y)|y)$ be the utility rate that agent $y$ receives under the new allocation $\bSCd'(y)$ (while all other agents keep their rate allocation fixed) given by
$$  \UCSsd(\bSCd'(y)|y) = \sum_{\dC \in \sCd} \int_{x \in \mS}  \bCd'(x|y) \left [ \sum_{z \in C_d} \aCd(y) \Bsbl q(x|y)p(x|z)  - c \Bsbr \right ] dx.$$

Given a community structure  $\CSd = \dcommStruc$, ideally agents want to choose allocations for consuming and producing content in order to maximize their utility rates.
Here we use a slightly weaker criteria of  an approximate Nash equilibrium, or a \erNash, where  agents change their current allocations only if the new allocations provides an increase in their utility rate at least by a factor $\epsilon$, $\epsilon>0$.


More formally, we call a community structure
$$\CSdNash = \dcommStrucNash$$
a \erNash if  
\begin{enumerate}
\item[a)] for all agents  $y \in \Add$ we have that
$$\UCSddNash(\aSCd(y)|y) - \UCSddNash(y) < \epsilon,$$ 
where
$ \aSCd(y) = \arg \max_{\aSCd'(y): || \aSCd'(y) || \leq E_p}  \UCSddNash(\aSCd'(y)|y)$, 
\item[b)] for all agents $y \in \Ads$ we have that
$$\UCSsdNash(\bSCd(y)|y) - \UCSsdNash(y) < \epsilon,$$
where
$ \bSCd(y) = \arg \max_{\bSCd'(y): || \bSCd'(y) || \leq E_q}  \UCSsdNash(\bSCd'(y)|y)$.
\end{enumerate}
In the following we study whether there exists a \erNash, and characterize the community structure of a \erNashn. For our analysis we consider a  particular metric space, and agent population model, that we describe in the next subsection.

\section{Discrete Interval Community and Community Structure $\dCSd$}\label{section:dinterval_community}
For our analysis of an \erNash, we focus on a particular class of community structures, to which we refer as a discrete interval community structure$\dsCd$. In this section, we formally introduce this class of community structures.

\subsection{Discrete Interval Community}
To formally define a discrete interval community structure $\dsCd$, we use the following definitions. Given an interval $I_C$ in $\setR$, let
$$|I_C| = \int_{I_C} dy$$
be the length of the interval, let
$$ L_C = \frac{|I_C|}{2}$$
be the half-length of the interval $I_C$, and let
$$mid(I_C) = \frac{1}{I_C} \int_{I_C} z dy$$
be the midpoint of the interval $I_C$.
Note that when $I_C$ is a closed interval then we have that
$$I_C = [ mid(I_C) - L_C, mid(I_C) + L_C ].$$

Given an interval $I_C$ in $\setR$ with half-length $L_C$ and midpoint $mid(I_C)$,
a discrete set  $\intd$ on $I_C$ with distance $\dl$, $\dl>0$, is then given by a set $\intd$, 
$$ \intd = \{y_1,...,y_K\},$$
such that
\begin{enumerate}
\item[a)] $K \geq 2$,
\item[b)] $ y_{k+1} - y_k = \dl$, $k=1,...,K-1,$ or
  $$ y_{k+1} = y_k + \dl, \qquad k=1,...,K-1,$$ 
\item[c)] $ ||y_1 - (mid(I_C) - L_C)|| \leq \dl$, and
\item[d)]$||y_K - (mid(I_C) + L_C)||  \leq \dl$.
\end{enumerate}
Note that this definition implies that the $K$ points in $\intd$ are evenly spread out over the interval $I_C$ with distance $\delta$ between neighboring points. Furthermore, we have that the endpoints $y_1$ and $y_K$ in $\intd$ are at a distance of at most $\delta$ from the endpoints $mid(I_C) - L_C$ and $mid(I_C)+L_C$ of the interval $I_C$. 

Note that given an interval $I_C \subset \setR$, and the set of agents $\Add$ and $\Ads$ that consume and produce content, respectively, we  have that
$$\intdd = \Add \cap I_C$$
is a discrete set on $I_C$ with distance $\dld$, and
$$\intsd = \Ads \cap I_C$$
is a discrete set on $I_C$ with distance $\dls$.

Given an interval $I_C$ on $\setR$ and a discrete set
$$\intd = \{y_1,...,y_K\}$$
on $I_C$ with distance $\dl$,
the midpoint $\midd$ of $\intd$ is given by
$$\midd =  \frac{1}{K} \sum_{k=1}^K y_k = \frac{1}{K} \sum_{y \in \intd } y = \frac{y_1 + y_K}2.$$
Furthermore, the half-length $\Ld$ of the discrete set $\intd$ is given by
$$ \Ld = \frac{y_K - y_1}{2},$$
and we have that
$$ y_1 = \midd - \Ld$$
and
$$y_K = \midd + \Ld.$$
Finally, note that we have  that 
$$\Ld \leq L_C$$
and
$$ | \Ld - L_C | \leq \dl.$$

Using the definition of an discrete interval set, we define a discrete interval community as follows. 
Given an interval  $I_C$  on $\setR$, a discrete interval community  on $I_C$ with distance $\dl$, $\dl>0$, is then given by a community $\dC= (\intdd,\intsd)$ such that both sets  $\intdd$ and $\intsd$ are discrete sets on $I_C$ with distance smaller than $\dl$.

In the following, we say that a community $\dC = \Cd$ is a discrete interval community with distance $\dl$, if there exists a interval $I_C$ on $\setR$ such that $\dC=\Cd$ is a discrete interval community on $I_C$ with distance $\dl$. 

Similarly, we say that a community structure $\dcommStruc$ is a discrete interval community structure with distance $\dl$  if every community in $\dC \in \sCd$ is an interval community with distance $\dl$. 

Finally, we say that a \erNash $\dcommStrucNash$ is a discrete interval \erNash  with distance $\dl$ if $\dcommStrucNash$ is a discrete interval community with distance $\dl$.

\subsection{Community Structure $\dCSd$}
Using the above notation, we define the class $\dCSd$ of discrete interval community structures with distance $\dl$ as follows.

\begin{definition}
Let $L_C$ be such that
$$0 < 2L_C < \min \{ b, L \},$$
where  $b$ is the constant of Assumption~\ref{ass:fg}. The class $\dCSd$ then consists of all community structures
$$\dcommStruc$$
where
$$\sCd = \{\dCk \}_{k=1,...,K}$$
that are defined on a set of agents $\Add$ with distance $\dld$ such that
$$ 0 < \dld < \dl,$$
and set of agents $\Ads$ with distance $\dls$ such that
$$0 < \dls < \dl,$$
and have the following properties.

There exists set a $\{I_{C_k} \}_{k=1,...,K}$  of mutually non-overlapping intervals in $\setR$ of length $2 L_C$, i.e. we have that 
\begin{enumerate}
\item[1)] $ I_{C_k} \cap I_{C_{k'}}  = \emptyset$, $k \neq k'$, 
\item[2)] $\cup_{k=1,..,K} I_{C_k} = \setR$,
\item[3)] $ |I_{C_k}| = 2 L_C$, $k=1,..,K$,
\end{enumerate}
such that the community $\dCk = \Cd \in \sCdNash = \{\dCk \}_{k=1,...,K}$, is given by
  $$\intdd = \Add \cap I_{C_k}$$
  and
  $$\intsd = \Ads \cap I_{C_k}.$$
  Furthermore, for $\dC = \Cd \in \sCd = \{\dCk \}_{k=1,...,K}$ we have that
\begin{enumerate}
\item[a)] $\aCd(y) = E_p$, $y \in \intdd$, and
\item[b)] $ \bCd(x|y) = E_q \delta(\xd(y) - x)$, $y \in \intsd, x \in \setR$, 
where
$$ \xd(y) = \arg \max_{x \in \setR} \left [ q(x|y)\PCd(x) \right ] = \arg \max_{x \in \setR} \left [ q(x|y) E_p \sum_{y \in \intdd} p(x|y) \right ]$$
and $\delta(\cdot)$ is the Dirac delta function.  
\end{enumerate}
\end{definition}  
Note that each community structure
$$\CSd = \dcommStruc \in \dCSd$$
consists of a set of discrete interval communities on an interval of equal length of $2 L_C$. Furthermore, we have that each agent $y \in \Add$, and each agent $y \in \Ads$, belongs to exactly one community in $\sCd$. Finally, given a community $\dC = \Cd \in \sCd$, each agent $y \in C_d$ allocates all its effort to consume content in the community $C$, and each each $y \in C_s$ allocates all its effort to produces content of type $\xd(y)$.

In the next section, we show that there always exists a  class $\dCSd$ of discrete interval community structures with distance $\dl$ such that all community structures $\CSd \in \dCSd$ are a \erNashn.

\section{Results}\label{section:results}
In this section we present  the main results of our analysis. The proofs for the results are given in the appendix.

Our first results shows that there always exists a \erNashn. 

\begin{prop}\label{prop:dnash}
Suppose that
$$ f(0) g(0) - c > 0,$$
then there exists a $L_C$,
$$0 < 2L_C < \min \{ b, L \},$$
where  $b$ is the constant of Assumption~\ref{ass:fg}, such that the following is true. For every $\epsilon > 0$ there exists a $\dl > 0$ such that all discrete interval community structures  $\CSdNash \in \dCSd$ are a \erNashn.
\end{prop}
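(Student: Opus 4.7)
The plan is to leverage the Nash-equilibrium result from the continuous-agent model of \cite{continuous_model_ita,continuous_model_arxiv} by showing that a discrete interval community structure is a $\dl$-perturbation of that exact equilibrium, uniformly over all admissible unilateral deviations. Under the assumption $f(0) g(0) - c > 0$ together with Assumption~\ref{ass:fg}, the continuous-agent analysis fixes an interval half-length $L_C$ with $0 < 2L_C < \min\{b, L\}$ for which the partition of $\setR$ into $K = L/L_C$ non-overlapping intervals of length $2L_C$, together with full-effort consumption ($\alpha = E_p$) and producers concentrating their effort at the maximizer of $q(\cdot | y)\, P_{C_k}(\cdot)$, forms an exact Nash equilibrium in the continuous-density limit. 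Fix this $L_C$ for the remainder of the argument.

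Next, for each community $C_k$ in the candidate discrete structure, the demand function $P_{C_k}(x) = E_p \sum_{y \in \Cddk} p(x|y)$ is a Riemann sum of $E_p \, p(x | \cdot)$ over $I_{C_k}$ with mesh at most $\dld \leq \dl$; the bounded-derivative part of Assumption~\ref{ass:fg} yields a uniform error bound of order $\dl$ between $\dld\, P_{C_k}(x)$ and the continuous integral $E_p \int_{I_{C_k}} p(x|y)\, dy$, and similarly between $\dls\, Q_{C_k}(x)$ and the corresponding continuous supply. Consequently, for any candidate deviation $\aSCd'(y)$ of a consumer $y \in \Add$ across the $K$ communities, the utility difference $\UCSdd(\aSCd'(y)|y) - \UCSdd(y)$ differs from its continuous analogue by $O(\dl)$; since the continuous analogue is non-positive by the Nash property, choosing $\dl$ small enough forces the discrete gain below $\epsilon$.

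For a producer $y \in \Ads$, the integrand multiplying $\beta'(x|y)$ in $\UCSsd(\bSCd'(y)|y)$ is a bounded continuous function of $x$, so the best response concentrates all mass $E_q$ at a single point in a single community: the effective deviation space reduces to the finite union $\{(k, x) : k = 1, \ldots, K, \, x \in \setR\}$. On each such $(k, x)$ the relevant integrand is again a $\dl$-perturbation of the continuous one, so the Nash property in the continuous model yields an $\epsilon$-bound by the same argument. Taking the maximum over the finitely many agents and deviation types determines a single uniform $\dl$ that serves both the consumer and producer clauses of the $\epsilon$-equilibrium definition.

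The main obstacle is controlling the producer-side optimization over $x$: the discrete and continuous maximizers of $q(x|y) P_{C_k}(x)$ need not coincide, so we cannot directly bound the discrete deviation gain by the continuous one pointwise. The remedy is that $2L_C < b$ places the problem in the strict-concavity regime of $f$ guaranteed by Assumption~\ref{ass:fg}, making $x \mapsto q(x|y) P_{C_k}(x)$ strictly concave with a well-separated maximum in both models; the uniform $O(\dl)$ approximation of $P_{C_k}$ then keeps the two maxima within $O(\sqrt{\dl})$ of each other, which is $o(1)$ as $\dl \to 0$. A symmetric argument handles consumer deviations across community boundaries, where the same strict concavity bounds the advantage of a boundary switch away from zero in the continuous limit, dominating the $O(\dl)$ discretization error.
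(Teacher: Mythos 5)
Your approach is essentially the same as the paper's: fix the interval half-length $L_C$ from the continuous-agent Nash equilibrium of~\cite{continuous_model_arxiv}, use Riemann-sum approximation (bounded-derivative hypotheses from Assumption~\ref{ass:fg}) to control the discrete-vs-continuous error in $\PCd$ and $\QCsd$, and invoke the exact Nash property of the continuous model so that any discrete deviation gain collapses to at most the approximation error. The paper's proof of Lemma~\ref{lemma:dnash} follows precisely this telescoping decomposition, and its Lemmas~\ref{lemma:xd_xs}~and~\ref{lemma:qP_C} serve the role that you assign to strict concavity in keeping the discrete and continuous maximizers of $q(x|y)\PCd(x)$ close.

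One technical point you should state more carefully: the discrete utility $\FCdd(y)= E_pE_q\sum_{z\in\intsd}[\cdots]$ is a sum over $\Theta(2L_C/\dls)$ producers and hence scales like $1/\dls$, not like the $O(1)$ continuous integral. The correct comparison is between $\dls\FCdd(y)$ and the continuous $\FCd(y)$ (cf.\ Lemma~\ref{lemma:Riemann_FCdd}), so your claim that ``the utility difference differs from its continuous analogue by $O(\dl)$'' is only meaningful once you normalize the discrete utilities by $\dls$. Without that normalization the deviation gain is $O(1/\dls)$ times the normalized bound, and the $\epsilon$-bound does not follow immediately from ``$O(\dl)$ error plus nonpositive continuous gain.'' The paper writes the decomposition with the $\dls$ factor in place, and any fully rigorous version of your argument needs to do the same.
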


Proposition~\ref{prop:dnash} states that there always exists a \erNash $\CSdNash$ given that  distance $\dld$, and $\dls$, of the agents sets $\Add$, and $\Ads$, are small enough, i.e. we have that $\dld,\dls < \dl$,
where $\dl$ is as given  in the statement of Proposition~\ref{prop:dnash}.

The result of Proposition~\ref{prop:dnash} that a \erNash consists of discrete interval communities captures that intuition that the content that is being produced and a consumed in a community should be ``aligned'' in the sense that the content producers in the community  should have a high ability to produce content that the content consumers in the community are most interested in.

Furthermore, from the definition of the class $\dCSd$ we have that under a Nash equilibrium as given in Proposition~\ref{prop:dnash} each content producer $y \in \intsd$ in a given community $\dC = \Cd$ focuses on producing a single type of content given by
$$\xd(y) =  \arg \max_{x \in \setR} q(x|y)\PCd(x).$$
This result is interesting as experimental results suggest that this property indeed holds in real-life information networks. We will discuss this in more details in Section~\ref{subsection:discussion_results}.

In the following we characterize in more details the properties of a \erNash as given by Proposition~\ref{prop:dnash}.

\subsection{Optimal Content Type $\xd(y)$}
 Proposition~\ref{prop:dnash} states that under a discrete interval \erNash as given in Proposition~\ref{prop:dnash}, each agent $y \in \Ads$ produces a single type of content. More precisely, if $\dC = \Cd$ is a community in a discrete interval \erNash $\CSdNash$ as given by Proposition~\ref{prop:dnash}, then we have for
$y \in \intsd$
that
$$ \bCd(x|y) = E_q \delta(\xd(y) - x), \qquad x \in \setR,$$
where
$$ \xd(y) = \arg \max_{x \in \setR} q(x|y)\PCd(x).$$

In this subsection we characterize in more details the function $\xd(y)$  for a given community $\dC = \Cd$ in a \erNashn.
In particular, we have the following result.
\begin{prop}\label{prop:xd}
Let $\dCSd$ be a class of discrete interval community structures with distance $\dl$ as given by Proposition~\ref{prop:dnash}, i.e. we have that all community structures  $\CSdNash \in \dCSd$ are a \erNashn.
Then for  every $\Dl_{x^*}$, $0 < \Dl_{x^*} < L_C$, there exists a class $\dCSdsub \subseteq \dCSd$, $0 < \dl_0 \leq \dl$, of discrete interval community structures with distance $\dl_0$ such that for all community structures
$$\dcommStrucNash \in \dCSdsub$$
the following is true.
Given a community $\dC = \Cd \in \sCdNash$, let the interval $\Id  =   [ mid(I_C) - L_C, mid(I_C) + L_C) \subset \setR$ as given in Proposition~\ref{prop:dnash}, i.e. we have that
$$\intdd = \Add \cap \Id
\mbox{ and }\;
\intsd = \Ads \cap \Id.$$
Then the solution $x^*_{\dl}(y)$ to the optimization problem 
$$ \xd(y) = \arg \max_{x \in \setR} q(x|y) \PCd(x), \qquad y \in \Id,$$
where
$$\PCd(x) = E_p \sum_{y \in \intdd} p(x|y),$$
has the properties that
\begin{enumerate}
\item[(a)] there exists a unique optimal solution $\xd(y)$ for $y \in \Id$.

\item[(b)] for $y \in [mid(\Id) - \LCd, mid(\Id) - \Dl_{x^*}]$, we have that 
$$ \xd(y)  \in (y, mid(\Id)) \cap supp(q(\cdot|y)).$$

\item[(c)] for  $y \in [mid(\Id) + \Dl_{x^*},mid(\Id) + \LCd]$,  we have that 
$$ \xd(y)\in (mid(\Id),y) \cap supp(q(\cdot|y)).$$

\item[(d)] the function $\xd(y)$ is strictly increasing and differentiable on 
$$[ mid(\Id) - \LCd, mid(\Id) - \Dl_{x^*}] \cup [mid(\Id) + \Dl_{x^*}, mid(\Id) + \LCd].$$
\end{enumerate}
\end{prop}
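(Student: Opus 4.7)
I plan to analyze $h_\dl(x;y)=q(x|y)\PCd(x)$ by comparison with its continuous counterpart $h^{cont}(x;y)=q(x|y)\,E_p\int_{\Id}p(x|z)\,dz$, along the lines of the continuous analysis in \cite{continuous_model_ita,continuous_model_arxiv}, and then transfer the relevant features to the discrete objective by choosing $\dl_0\le\dl$ small enough that the properties implying (a)--(d) survive the discrete perturbation.

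\textbf{Continuous case.} Set $P^{cont}(x)=E_p\int_{\Id}f(d(x,z))\,dz$. Since $2\LCd<b$, for $x,z\in \Id$ the argument $d(x,z)$ lies in $[0,b)$ where $f$ is strictly concave, so $P^{cont}$ is strictly concave on $\Id$ and, by the torus symmetry of $\Id$ about $mid(\Id)$, symmetric about $mid(\Id)$. Hence $(P^{cont})'(mid(\Id))=0$, $(P^{cont})'>0$ on $(mid(\Id)-\LCd,mid(\Id))$, and $(P^{cont})'<0$ on $(mid(\Id),mid(\Id)+\LCd)$. Using $g'(0)=0$, the $x$-derivative of $h^{cont}(\cdot;y)$ at $x=y$ equals $g(0)(P^{cont})'(y)$, which is strictly positive for $y<mid(\Id)$; at $x=mid(\Id)$ it equals $g'(mid(\Id)-y)P^{cont}(mid(\Id))$, strictly negative for $y<mid(\Id)$ since $g$ is strictly decreasing on $supp(g)\setminus\{0\}$ and $mid(\Id)-y\in supp(g)$ when $\LCd$ is small. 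Any maximizer $x^*(y)$ therefore lies in $(y,mid(\Id))\cap supp(q(\cdot|y))$. Uniqueness follows from strict log-concavity: $\log P^{cont}$ is strictly concave (log of a positive, strictly concave function), and $\log g(d(\cdot,y))$ is strictly concave on $supp(q(\cdot|y))$, with $g'(0)=0$ making $g(d(\cdot,y))$ smooth through $x=y$.

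\textbf{Discrete transfer.} Since $f\in C^3$ with uniformly bounded derivatives (Assumption~\ref{ass:fg}), standard Riemann-sum estimates give, uniformly for $x$ in a neighborhood of $\Id$ away from the points of $\intdd$ and for $y\in\Id$,
\begin{equation*}
|\dld\,\PCd(x)-P^{cont}(x)|=O(\dld),\qquad |\dld\,\partial_x^k\PCd(x)-\partial_x^k P^{cont}(x)|=O(\dld),\ \ k=1,2.
\end{equation*}
Multiplying $h_\dl$ by the positive constant $\dld$ does not change its argmax, so the rescaled discrete objective is a $C^2$-small perturbation of $h^{cont}$ uniformly in $y$. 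Strict log-concavity and the strict signs of $\partial_x h^{cont}$ at $x=y$ and $x=mid(\Id)$ established in the continuous case persist under such perturbations on compact subsets of $\Id$; therefore, for $\dl_0$ small enough, the discrete maximizer $\xd(y)$ exists, is unique, and lies in $(y,mid(\Id))\cap supp(q(\cdot|y))$ for every $y$ with $y\le mid(\Id)-\Dl_{x^*}$. This gives (a) and (b), and (c) follows by the reflection symmetry of the problem about $mid(\Id)$.

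\textbf{Monotonicity, differentiability, and main obstacle.} For (d), I would apply the implicit function theorem to the first-order condition $F_\dl(x,y):=\partial_x\log h_\dl(x;y)=0$. The uniform $C^2$-closeness to $h^{cont}$ forces $\partial_x F_\dl(\xd(y),y)<0$ for $\dl_0$ small, so $\xd$ is $C^1$. On the left branch $x>y$ one has $d(x,y)=x-y$ and $\partial_y\partial_x\log g(x-y)=-(\log g)''(x-y)>0$, so $\partial_y F_\dl>0$ and therefore $d\xd/dy>0$; the right branch is symmetric. The main obstacle is controlling the discrete-to-continuous passage uniformly in $y$: $\PCd$ has kinks at each point of $\intdd$, so one must verify that $\xd(y)$ does not coincide with a kink and that the uniform $C^2$ bounds hold between kinks; and the sign of $\xd(y)-mid(\Id)$ flips at $y=mid(\Id)$, so the $\Dl_{x^*}$ buffer is essential to keep $y$ (and hence $\xd(y)$) uniformly away from this degenerate center so that the strict inequalities of the continuous analysis carry over to the discrete case.
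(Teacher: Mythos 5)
Your overall template (analyze the continuous objective $q(x|y)P_C(x)$, then transfer to the discrete $q(x|y)\PCd(x)$ by showing they are close) matches the paper, but the transfer mechanism you use is genuinely different from the paper's, and the difference matters. The paper establishes closeness of the maximizers $\xd(y)\approx\xs(y)$ (Lemma~\ref{lemma:xd_xs}) by a contradiction argument built on a quantitative strict-optimality ``separation'' lemma (Lemma~\ref{lemma:qP_C}): if $\|\xs(y)-x\|\ge\Dl_{x^*}$ then the objective gap $q(\xs(y)|y)P_C(\xs(y))-q(x|y)P_C(x)$ is at least some $\Dl_Q>0$, and sup-norm closeness $|\dld\PCd-P_C|<\Dl_Q/2$ (Lemma~\ref{lemma:Riemann_PCd}) then forbids $\xd$ from being that far from $\xs$. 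Crucially that step uses only \emph{function-value} closeness, no derivatives of $\PCd$, so the kinks you worry about are irrelevant there. Your route instead asserts uniform $C^2$-closeness $|\dld\partial_x^k\PCd-\partial_x^kP_C|=O(\dld)$, $k=1,2$, and invokes persistence of strict second-order conditions plus the implicit function theorem --- a legitimate strategy, but one that puts the discrete non-smoothness front and center, and your own final paragraph concedes you have not closed that gap. Worth noting: your kink concern is real --- since Assumption~\ref{ass:fg} imposes $f'(0)<0$ (unlike $g'(0)=0$), the map $x\mapsto p(x|y)=f(d(x,y))$ is not differentiable at $x=y$, so $\PCd$ has concave kinks at each $y_k\in\intdd$ --- and this same issue is not explicitly addressed by the paper's Lemma~\ref{lemma:PCd}(a), which claims $\PCd$ is twice differentiable on $\setR$. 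Once location closeness $\|\xd-\xs\|<\Dl_b$ is in hand, the paper proves (a)--(d) by combining the monotonicity/concavity of $\PCd$ from Lemma~\ref{lemma:PCd} with the known properties of $\xs$ from the continuous analysis, citing~\cite{continuous_model_arxiv} rather than rerunning the argument; your log-concavity plus IFT derivation of (d) is a self-contained alternative that would need the same care about where $\xd(y)$ sits relative to the grid. In short: a correct high-level plan and a different, workable technical route, but the separation-lemma step is the piece of the paper's proof that your strategy does not reproduce and that would have let you avoid the derivative estimates you flag as your main difficulty.
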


Proposition~\ref{prop:xd}  states that the function $\xd(y)$ is strictly increasing on $\Id \backslash (mid(\Id) - \Dl_{x^*}, mid(\Id) + \Dl_{x^*})$. This result implies that two different agents $y,y' \in \Id \backslash (mid(\Id) - \Dl_{x^*}, mid(\Id) + \Dl_{x^*})$, $y \neq y'$, produce different type of contents, i.e. we have that
$$\xd(y) \neq \xd(y'), \qquad y \neq y', y,y' \in \Id \backslash (mid(\Id) - \Dl_{x^*}, mid(\Id) + \Dl_{x^*}).$$
This result is interesting as it states that under each agent $y$ in $\Id \backslash (mid(\Id) - \Dl_{x^*}, mid(\Id) + \Dl_{x^*})$ produces a unique content type $\xd(y)$, i.e. we have that the content type $\xd(y)$  is not produced by any other agent in $\Id \backslash (mid(\Id) - \Dl_{x^*}, mid(\Id) + \Dl_{x^*})$.

Another interesting aspect of Proposition~\ref{prop:xd} is that the function $\xd(y)$ has the property that
$$ \xd(y)  \in (y, mid(\Id)) \cap supp(q(\cdot|y))$$
and
$$ \xd(y)\in (mid(\Id),y) \cap supp(q(\cdot|y)).$$
This result states that agents $y \in \Id \backslash (mid(\Id) - \Dl_{x^*}, mid(\Id) + \Dl_{x^*})$  produce content that is closer to the center of interest $mid(\Id)$ of the community $C$ than their center of interest $y$. To get a more detailed understanding of how agents adapt the type of content that they produce towards the center of interest $mid(\Id)$ of the community $\dC$, we next study the function $\Ddy$ given by
$$ \Ddy = || y - \xd(y)||, \qquad y \in \Id.$$
The function $\Ddy$  characterizes the absolute value of the ``displacement'' of the optimal content $\xd(y)$ that agent $y$ produces, and content $y$ that the agent is best at producing which is equal to content type $y$. Or in other words, the function $\Ddy$ characterizes by  how much an agent $y$ adapts its content $\xd(y)$ towards the center of interest of  the community $\dC$, i.e. by how much agent $y$ produces content $\xd(y)$ that is closer to the center of interest $mid(\Id)$ of the community than its own center of interest.

In addition, the function $\Ddy$ can be used to characterize the quality of the optimal content $\xd(y)$ that agent $y$ produces as we have that
$$ q(\xd(y)|y) = g( || y - \xd(y)||) = g(\Ddy).$$

We have the following result for the function $\Ddy$.
\begin{prop}\label{prop:Ddy}
Let $\dCSd$ be a class of discrete interval community structures with distance $\dl$ as given by Proposition~\ref{prop:dnash}, i.e. we have that all community structures  $\CSdNash \in \dCSd$ are a \erNashn.
Then for  every $\Dl_{x^*}$, $0 < \Dl_{x^*} < L_C$, there exists a class $\dCSdsub \subseteq \dCSd$, $0 < \dl_0 \leq \dl$, of discrete interval community structures with distance $\dl_0$ such that for all community structures
$$\dcommStrucNash \in \dCSdsub$$
the following is true.
Given a community $\dC = \Cd \in \sCdNash$, let the interval $\Id  =   [ mid(I_C) - L_C, mid(I_C) + L_C) \subset \setR$ as given in Proposition~\ref{prop:dnash}, i.e. we have that
$$\intdd = \Add \cap \Id
\mbox{ and }\;
\intsd = \Ads \cap \Id.$$

Then the function $\Ddy$ given by
$$ \Ddy = || y - \xd(y)||, \qquad y \in \Id,$$
where
$$\xd(y)= \arg\max_{x \in \setR} q(x|y) \PCd(x),$$
is  strictly decreasing and differentiable on  $[mid(\Id) - \LCd, mid(\Id) -  \Dl_{x^*}]$, and strictly increasing and differentiable on  $[mid(\Id)  + \Dl_{x^*}, mid(\Id)]+\LCd]$.
\end{prop}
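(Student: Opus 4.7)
The plan is to deduce this from Proposition~\ref{prop:xd} together with a slope computation via the implicit function theorem applied to the first-order condition defining $\xd(y)$.

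First I would note that, since $2\LCd < L$, the torus distance between any two points of $\Id$ coincides with the ordinary distance. Combining this with Proposition~\ref{prop:xd}(b)--(c), for $y \in [mid(\Id) - \LCd, mid(\Id) - \Dl_{x^*}]$ we have $y < \xd(y) < mid(\Id)$ and hence $\Ddy = \xd(y) - y$; symmetrically on the right interval, $\Ddy = y - \xd(y)$. Differentiability of $\Ddy$ on each of the two intervals is then immediate from the differentiability of $\xd$ given in Proposition~\ref{prop:xd}(d). What remains is strict monotonicity, and by the above representations this reduces to showing $(\xd)'(y) < 1$ on each of the two intervals.

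To establish this slope bound, I would apply implicit differentiation to the FOC. Writing $H(x,y) = g(\|x-y\|) \PCd(x)$, on the left-hand region we have $\|x-y\| = x-y$ in a neighbourhood of the optimum, so $H$ is smooth there and the FOC is $H_x(\xd(y),y) = 0$. Uniqueness from Proposition~\ref{prop:xd}(a), together with the strict-maximum structure used in its proof, yields $H_{xx}(\xd(y),y) < 0$. The implicit function theorem then gives
$$(\xd)'(y) \;=\; -\,\frac{H_{xy}(\xd(y),y)}{H_{xx}(\xd(y),y)},$$
and since $H_{xx} < 0$ the inequality $(\xd)'(y) < 1$ is equivalent to $H_{xx} + H_{xy} < 0$ at $(\xd(y),y)$. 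A direct computation using $q(x|y) = g(x-y)$ in the relevant region gives
$$H_{xx} + H_{xy} \;=\; g'(x-y)\,\PCd'(x) \;+\; g(x-y)\,\PCd''(x).$$

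The main obstacle is verifying that this quantity is strictly negative at $x = \xd(y)$. The first term is non-positive: by Assumption~\ref{ass:fg}, $g$ is strictly concave on $supp(g)$ with $g'(0) = 0$, so $g'(x-y) < 0$ whenever $x - y > 0$, while $\PCd'(\xd(y)) > 0$ because $\PCd$ is unimodal with peak near $mid(\Id)$ and $\xd(y) < mid(\Id)$ on the left interval. For the second term, $g(\xd(y) - y) > 0$, and the delicate point is that $\PCd$ is a finite sum $E_p \sum_{y_k \in \intdd} f(\|x - y_k\|)$ whose terms have kinks exactly at the discrete demand points $y_k$. I would handle this by choosing $\dl_0$ small enough that (i) $\xd(y)$ stays uniformly bounded away from every $y_k$ on the closed subset $[mid(\Id) - \LCd, mid(\Id) - \Dl_{x^*}]$ (using continuity of $\xd$ from Proposition~\ref{prop:xd} and compactness), and (ii) the local strict concavity of $f$ on $[0,b]$ guaranteed by Assumption~\ref{ass:fg} passes term-by-term to yield $\PCd''(\xd(y)) < 0$ uniformly in $y$. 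The argument on the right interval is identical by the left/right symmetry of the setup. Coordinating the choice of $\dl_0$ so that it simultaneously inherits the bound of Proposition~\ref{prop:xd} and supports the concavity argument above is the key technical step beyond Proposition~\ref{prop:xd} itself.
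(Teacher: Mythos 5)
Your proof follows the same overall strategy as the paper. The paper's Lemma~\ref{lemma:Ddy} sets up the same prerequisites (Lemma~\ref{lemma:PCd} for the shape of $\PCd$, Lemma~\ref{lemma:xd_xs} to bound $\|\xd(y) - \xs(y)\|$ and thereby keep $\xd(y)$ inside the increasing, strictly concave range of $\PCd$) and then defers the actual monotonicity step to the continuous-model reference; your implicit-function-theorem / first-order-condition calculation is precisely what that step consists of. The reduction to $(\xd)'(y) < 1$, the formula $(\xd)'(y) = -H_{xy}/H_{xx}$, and the identity $H_{xx} + H_{xy} = g'\,\PCd' + g\,\PCd''$ are correct. (You can also obtain $\PCd'(\xd(y)) \geq 0$ directly from the FOC rather than from a unimodality appeal: if $\PCd'(\xd(y)) < 0$ held along with $g'(\xd(y) - y) < 0$, both summands of $q_x\PCd + q\PCd'$ would be strictly negative at $\xd(y)$, contradicting stationarity.)

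The one genuine gap is step (i) of your kink treatment. You propose to choose $\dl_0$ so small that $\xd(y)$ stays uniformly bounded away from every grid point $y_k$ on the closed subinterval. This cannot work: as $\dl_0 \to 0$ the demand grid $\{y_k\}$ becomes dense in $I_C$, so there is no uniform separation distance; and even for a fixed $\dl_0$, the image of the continuous, strictly increasing function $\xd$ over a nondegenerate closed subinterval is itself a nondegenerate interval, which will necessarily contain some grid points, so $\xd(y) = y_k$ does occur for some $y$. What actually works, and suffices, is weaker: since $\xd$ is strictly increasing, the set $\{y : \xd(y) \in \{y_1,\dots,y_{\Kd}\}\}$ is finite. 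Off this finite set $\PCd$ is smooth near $\xd(y)$, the IFT applies, and $(\xd)'(y) < 1$; continuity of $\Ddy$ together with integration of the strictly negative derivative over the complement of the finite exceptional set then yields strict monotonicity on the whole interval. (For context, the paper sidesteps this issue by asserting in Lemma~\ref{lemma:PCd}(a) that $\PCd$ is twice differentiable on $\setR$, which given $f'(0) < 0$ is not literally true at the grid points $y_k$; so your noticing the kink issue is a genuine improvement, but the repair has to go via the finite exceptional set, not via uniform separation.)
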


Proposition~\ref{prop:Ddy} states that  the function $\Ddy$ is  strictly decreasing and differentiable on  $[mid(\Id) - \LCd, mid(\Id) -  \Dl_{x^*}]$, and strictly increasing and differentiable on  $[mid(\Id)  + \Dl_{x^*}, mid(\Id)]+\LCd]$.
This implies that the further away an agent is from the center of interest $mid(\Id)$, the more it will ``adapt'' the content it produces towards to the center $mid(\Id)$ of the interval $\Id$, i.e. the larger $\Ddy$ will be. In addition, this result implies that the  further away an agent is from the center of interest $mid(\Id)$, the lower the quality is the content that the agents produces. To see this, recall that by Assumption~\ref{ass:fg}, $g$ is decreasing on $supp(g)$ and we have that the quality of the content that agent $y$ produces is given by
$$ q(\xd(y)|y) = g( || y - \xd(y)||) = g(\Ddy).$$

We discuss the results of Proposition~\ref{prop:xd} and Proposition~\ref{prop:Ddy} in more details in Section~\ref{subsection:discussion_results}.

\subsection{Properties of the Content Demand Function $\PCd(x)$}
We next characterize the properties of the content demand function $\PCd(x)$ of a discrete interval community $\dC = \Cd$ under a discrete interval \erNash as given in Proposition~\ref{prop:dnash}. We have the following result.

\begin{prop}\label{prop:PCd}
Let $\dCSd$ be a class of discrete interval community structures with distance $\dl$ as given by Proposition~\ref{prop:dnash},  i.e. we have that all community structures  $\CSdNash \in \dCSd$ are a \erNashn.
Then for  every $\Dl_P$, $0 < \Dl_P < L_C$, there exists a class $\dCSdsub \subseteq \dCSd$, $0 < \dl_0 \leq \dl$, of discrete interval community structures with distance $\dl_0$ such that for all community structures
$$\dcommStrucNash \in \dCSdsub$$
the following is true.
Given a community $\dC = \Cd \in \sCdNash$, let the interval $\Id  =   [ mid(I_C) - L_C, mid(I_C) + L_C) \subset \setR$ as given in Proposition~\ref{prop:dnash}, i.e. we have that
$$\intdd = \Add \cap \Id
\mbox{ and }\;
\intsd = \Ads \cap \Id.$$
Then the demand function
$$\PCd(x)  =  \sum_{y \in I_{\dl,C}} \aCdNash(y) p(x|y) =  E_p\sum_{y \in I_{\dl,C}}  p(x|y), \qquad x \in \setR,$$
 has the properties that
\begin{enumerate}
\item[(a)] $\PCd(x)$ is symmetric with respect to $\middd$. 
\item[(b)] $\PCd(x)$ is strictly increasing on the interval $ [ mid(\Id) - L_C, mid(\Id) - \Dl_P]$, and strictly decreasing on the interval $[mid(\Id) + \Dl_P, mid(\Id) + L_C)$.
\item[(c)] $\PCd(x)$ is strictly concave in $x$ on the interval $I_C$.  
\end{enumerate}
\end{prop}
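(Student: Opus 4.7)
The plan is to establish (a), (b), (c) by combining the symmetry of the discrete agent set $\intdd$ about its midpoint $\middd$ with the local strict concavity of $f$ on $[0,b]$. A useful preliminary observation is that, since $2L_C<\min\{b,L\}$, for $x$ in a neighborhood of $\Id$ and every $y\in\intdd$ the torus distance $d(x,y)$ equals $|x-y|$ and lies in the concavity region $[0,b]$ of $f$.

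For part (a), list $\intdd=\{y_1,\dots,y_K\}$ with $y_{k+1}=y_k+\dld$ and $\middd=(y_1+y_K)/2$, so that $y_{K+1-k}=2\middd-y_k$. For the reflected point $x'=2\middd-x$ one has $|x'-y_k|=|x-y_{K+1-k}|$, and reindexing the sum in $\PCd(x')=E_p\sum_k f(|x'-y_k|)$ yields $\PCd(x')=\PCd(x)$, which is symmetry about $\middd$.

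For part (b), I would pair opposite agents by setting $w_k=y_k-\middd$ and $z=x-\middd$, and write
\[
\PCd(\middd+z)=E_p\sum_{k:\,w_k>0}[f(|z-w_k|)+f(|z+w_k|)]+(E_p\,f(|z|)\text{ if $K$ is odd}).
\]
For each $w_k>0$ the function $h_k(z)=f(|z-w_k|)+f(|z+w_k|)$ has derivative $-f'(w_k-z)+f'(w_k+z)<0$ on $0<z<w_k$, since $f'$ is strictly decreasing on $[0,b]$ (by $f''<0$) and $0\le w_k-z<w_k+z\le 2L_C<b$; on $z>w_k$ the derivative $f'(z-w_k)+f'(z+w_k)$ is negative because both summands are negative; at $z=w_k$ both one-sided derivatives are negative. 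Thus each $h_k$, and the middle term $f(|z|)$ when present, is strictly decreasing on $z>0$, so $\PCd(\middd+z)$ is strictly decreasing in $z>0$ and by (a) strictly increasing in $z<0$. To translate this to the intervals in the statement, use $|\middd-mid(\Id)|\le\dld/2$ and choose $\dl_0\le\min\{\dl,2\Dl_P,b-2L_C\}$; then for $\dld<\dl_0$ every $x\in[mid(\Id)-L_C,mid(\Id)-\Dl_P]$ satisfies $x<\middd$ and $|x-y_k|<b$ for all $y_k\in\intdd$, and symmetrically on the right.

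For part (c), I claim each summand $x\mapsto f(|x-y_k|)$ is strictly concave on $\Id$. Since $y_k\in\Id$ and $|\Id|=2L_C<b$, the argument $|x-y_k|$ lies in $[0,b]$ throughout $\Id$; on either side of $y_k$ the second derivative equals $f''(\cdot)<0$ (Assumption~\ref{ass:fg}), and at $x=y_k$ the left derivative $-f'(0)$ strictly exceeds the right derivative $f'(0)$ since $f'(0)<0$, so the graph has a downward-pointing kink. A short case analysis on chords with endpoints of the same sign versus opposite signs about $y_k$, using strict concavity of $f$ together with strict monotonicity of $f$ on $[0,b]$, yields strict concavity of $f(|\cdot-y_k|)$ across $y_k$ as well. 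A finite sum of strictly concave functions is strictly concave, which gives the claim for $\PCd$. The main technical nuisance, rather than a deep obstacle, is the $O(\dld)$ mismatch between $\middd$ and $mid(\Id)$ in part (b): it is exactly what forces the buffer $\Dl_P>0$ in the statement and dictates the choice of $\dl_0$.
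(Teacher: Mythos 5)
Your proof is correct, but it takes a genuinely different route from the paper's. The paper proves Proposition~\ref{prop:PCd} by reducing to the continuous interval community $C=(I_C,I_C)$: it invokes the properties of $P_C(x)=E_p\int_{I_C}p(x|y)\,dy$ established in~\cite{continuous_model_arxiv} (strict monotonicity away from $mid(I_C)$ with a derivative bounded below by some $B_1>0$, and strict concavity with second derivative bounded above by $-B_2<0$), and then argues via a Riemann-sum estimate (Lemma~\ref{lemma:Riemann_PCd} and its analogues for $P_C'$ and $P_C''$) that $\dld\PCd$ and its first two derivatives approximate $P_C$ and its derivatives to within $O(\dl)$, so that choosing $\dl_0$ small enough preserves the strict sign conditions. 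Your argument is entirely self-contained: you exploit the exact reflection symmetry of the equally-spaced set $\intdd$ about $\middd$, pair opposite agents $\pm w_k$, and use the strict monotonicity of $f'$ on $[0,b]$ (equivalently $f''<0$) to show each pair's contribution is strictly decreasing away from $\middd$ and strictly concave.

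What each approach buys: the paper's argument is uniform across the whole family of estimates used later (it reuses the Riemann-approximation machinery throughout the appendices), but it requires a small-$\dl$ hypothesis even for parts (a) and (c), and it quietly assumes $\PCd$ is twice differentiable on $\setR$, which is actually false at the grid points $y_k$ since $f'(0)<0$ makes $x\mapsto f(\|x-y_k\|)$ kinked there; the error terms would need to be handled in a one-sided or distributional sense. Your approach is more elementary and more robust on exactly this point: the downward kinks at the $y_k$ only reinforce strict concavity, and you establish symmetry and strict concavity for \emph{every} discrete interval community with $2L_C<b$, with no smallness condition on $\dl$ at all. The only place a small $\dl_0$ enters is in transferring the strict monotonicity, which you correctly show holds on either side of $\middd$, over to the stated intervals centered at $mid(\Id)$; the $O(\dld)$ offset between $\middd$ and $mid(\Id)$ is absorbed by the buffer $\Dl_P$, which is precisely why the statement carries it. (Minor: the paper's Lemma~\ref{lemma:midd} gives $\|\middd-mid(I_C)\|\le\dl$; your tighter $\le\dld/2$ is valid for $\intdd=\Add\cap I_C$ because $y_1,y_K$ both lie inside $I_C$ so the two endpoint deviations partially cancel, but either bound suffices once $\dl_0$ is chosen appropriately.)
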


Note that Proposition~\ref{prop:PCd} implies that 
$$\arg \max_{x \in \setR} \PCd(x) \in [mid(\Id) - \Dl_P, mid(\Id) + \Dl_P],$$
i.e. the most popular content is close to the center of interest of the community $\dC$. Furthermore,  we have that the further away a  content type $x$ is from the center of interest of the community,  the less popular it is.

\subsection{Properties of the Content Supply Function $\QCsd(x)$}
Next characterize the properties of the content demand function $\QCsd(x)$ of a discrete interval community $\dC = \Cd$ under a discrete interval \erNash as given in Proposition~\ref{prop:dnash}. We have the following result.

\begin{prop}\label{prop:QCd}
Let $\dCSd$ be a class of discrete interval community structures with distance $\dl$ as given by Proposition~\ref{prop:dnash},  i.e. we have that all community structures  $\CSdNash \in \dCSd$ are a \erNashn.
Then there exists a class $\dCSdsub \subseteq \dCSd$, $0 < \dl_0 \leq \dl$, of discrete interval community structures with distance $\dl_0$ such that for all community structures
$$\dcommStrucNash \in \dCSdsub$$
the following is true.
Given a community $\dC = \Cd \in \sCdNash$, let the interval $\Id  =   [ mid(I_C) - L_C, mid(I_C) + L_C) \subset \setR$ as given in Proposition~\ref{prop:dnash}, i.e. we have that
$$\intdd = \Add \cap \Id 
\mbox{ and }\;
intsd = \Ads \cap \Id.$$
Then the content supply function $\QCdNash(x)$, $x \in \setR$, is given by 
$$\QCsd(x) = E_q \sum_{y \in \intsd} \dl\big (x-\xd(y)) q(\xd(y)|y \big )$$
where $\xd(y) = \arg \max_{x \in \setR}  q(x|y) \PCd(x)$,
and  $\dl(\cdot)$ is the Dirac delta function, and we have that
$$supp(\QCsd(\cdot)) \subseteq [mid(\Id) - \LSd, mid(\Id) + \LSd]$$
where $0< \LSd < \LCd$.
\end{prop}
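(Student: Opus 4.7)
The formula for $\QCsd(x)$ is immediate from the definition of $\dCSd$: each producer $y \in \intsd$ has $\bCd(x|y) = E_q \delta(\xd(y) - x)$, so substituting into $\QCd(x) = \sum_{y \in \intsd} \bCd(x|y) q(x|y)$ and using the sifting property of the Dirac delta yields the claimed sum of point masses at the points $\xd(y)$ with weights $E_q \, q(\xd(y)|y)$. The substantive content of the proposition is therefore the strict inclusion $\LSd < \LCd$.

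For the support bound, the plan is to fix a small $\Dl_{x^*}$ (for concreteness $\Dl_{x^*} = \LCd/4$) and let $\dCSdsub \subseteq \dCSd$ be the class supplied by Proposition~\ref{prop:xd}. Shrinking $\dl_0$ further, we may assume $\dl_0 < \LCd - \Dl_{x^*}$; this forces the leftmost producer $y_\ell$ and the rightmost producer $y_r$ of $\intsd = \Ads \cap \Id$ to lie in the outer intervals $[mid(\Id) - \LCd,\, mid(\Id) - \Dl_{x^*}]$ and $[mid(\Id) + \Dl_{x^*},\, mid(\Id) + \LCd]$ respectively. Applying Proposition~\ref{prop:xd}(b)--(d) gives $\xd(y_\ell) > y_\ell$, $\xd(y_r) < y_r$, together with strict monotonicity of $\xd$ on the outer intervals. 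Setting $\eta_L := \xd(y_\ell) - y_\ell > 0$ and $\eta_R := y_r - \xd(y_r) > 0$, monotonicity puts every $\xd(y)$, for $y$ in the outer part of $\intsd$, inside the compressed interval $[mid(\Id) - \LCd + \eta_L,\, mid(\Id) + \LCd - \eta_R]$.

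For the (at most finitely many) producers $y$ lying in the central strip $(mid(\Id) - \Dl_{x^*},\, mid(\Id) + \Dl_{x^*})$, one still has to bound $\xd(y)$. Here I would appeal to Proposition~\ref{prop:PCd}: $\PCd$ is symmetric about $\middd$ and strictly concave on $\Id$, so when $y$ itself sits inside a small neighborhood of $mid(\Id)$ the product $q(x|y)\PCd(x)$ is nearly symmetric and strictly concave about $mid(\Id)$, forcing its unique argmax $\xd(y)$ to also lie near $mid(\Id)$ and in particular strictly inside $(mid(\Id) - \LCd,\, mid(\Id) + \LCd)$. Combining the outer and inner bounds, the finite set $\{\xd(y) : y \in \intsd\}$ is contained in some $[mid(\Id) - \LSd,\, mid(\Id) + \LSd]$ with $0 < \LSd < \LCd$, which is exactly the claimed inclusion $supp(\QCsd) \subseteq [mid(\Id) - \LSd,\, mid(\Id) + \LSd]$.

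The main obstacle is the control of $\xd(y)$ for producers in the central strip, since Proposition~\ref{prop:xd} provides no direct pull estimate there; this is the one step that needs genuinely new input beyond the earlier propositions, and it is where the symmetry and concavity of $\PCd$ from Proposition~\ref{prop:PCd} play an essential role (together with standard upper hemicontinuity of the argmap for a strictly concave objective). The boundary producers, by contrast, are handled cleanly and directly by parts (b)--(d) of Proposition~\ref{prop:xd}, and the formula itself is pure bookkeeping from the definition of $\dCSd$.
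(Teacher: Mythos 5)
The formula for $\QCsd$ is indeed immediate from the definition of the class, as you say. The issue is the support bound, and there you have a genuine gap: you candidly flag the ``central strip'' as a hole, but the outer-interval argument also fails to deliver what the proposition needs. You obtain $\eta_L = \xd(y_\ell) - y_\ell > 0$ from Proposition~\ref{prop:xd}(b), but $\eta_L$ (and $\eta_R$) depend on the particular discrete set $\intsd$, hence on $\dld,\dls$, and you never show they are bounded away from zero uniformly over the class $\dCSdsub$. Without that, as $\dl_0 \to 0$ the resulting half-length could creep up to $\LCd$, so you have not actually produced a single $\LSd < \LCd$ that works for the entire class. In short, both the central strip and the boundary producers need a quantitative, $\dl$-uniform estimate, and neither Proposition~\ref{prop:xd} nor Proposition~\ref{prop:PCd} as stated supplies it.

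The paper avoids both difficulties by going back to the continuous model. From the analysis in~\cite{continuous_model_arxiv}, the image of $\xs(y) = \arg\max_x q(x|y)P_C(x)$ over $y \in I_C$ is contained in $[mid(I_C) - L^*_C,\; mid(I_C) + L^*_C]$ with a \emph{fixed} $0 < L^*_C < L_C$ (depending only on $I_C$, $E_p$, $f$, $g$, not on $\dl$). One then sets $\Dl_x = (L_C - L^*_C)/2 > 0$ and $\LSd = L^*_C + \Dl_x < L_C$, and invokes Lemma~\ref{lemma:xd_xs} to choose $\dlO$ so that $\|\xd(y) - \xs(y)\| < \Dl_x$ for \emph{all} $y \in I_C$. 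Then trivially $\xd(y) \in [mid(I_C) - \LSd,\; mid(I_C) + \LSd]$ for every $y \in \intsd$, uniformly over the class. This single uniform-closeness estimate handles the boundary and the central strip in one stroke and is the missing input your argument needs; absent it, your proof does not close. The fix would be either to adopt this route (which is essentially what Lemma~\ref{lemma:QCd2} in the appendix does), or to separately prove a $\dl$-uniform lower bound on the pull $\|\xd(y) - y\|$ near the endpoints and a $\dl$-uniform bound on $\|\xd(y) - mid(\Id)\|$ on the central strip, which is more work than the comparison with $\xs$.
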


Proposition~\ref{prop:QCd} states that
$$supp(\QCsd(\cdot)) \subseteq [mid(\Id) - \LSd, mid(\Id) + \LSd]$$
where $0< \LSd < \LCd$.
This result implies  that the content type that is being produced by agents in the community $\dC$ is a strict subset of the interval $\Id$. As a result, there is no overlap in the content produced in different communities under a discrete interval \erNash as given by Proposition~\ref{prop:dnash}. We discuss this results in more details in Section~\ref{subsection:discussion_results}.

\subsection{Properties of the Utility Function $\UCdd(y)$ and $\UCsd(y)$
}  
Finally, we study the properties of the utility rate function for content consumption $\UCdd(y)$,  and the utility rate function  for content production $\UCsd(y)$ for an interval community $\dC = \Cd$  under a  discrete interval \erNash as given by Proposition~\ref{prop:dnash}. 

We first study the properties of the utility rates for content consumption   $\UCdd(y)$ for  an interval community $\dC = \Cd$  under a discrete interval \erNash as given by Proposition~\ref{prop:dnash}.

\begin{prop}~\label{prop:UCdd}
Let $\dCSd$ be a class of discrete interval community structures with distance $\dl$ as given by Proposition~\ref{prop:dnash}, i.e.  i.e. we have that all community structures  $\CSdNash \in \dCSd$ are a \erNashn.
Then for  every $\Dl_U$, $0 < \Dl_U < L_C$, there exists a class $\dCSdsub \subseteq \dCSd$, $0 < \dl_0 \leq \dl$, of discrete interval community structures with distance $\dl_0$ such that for all community structures
$$\dcommStrucNash \in \dCSdsub$$
the following is true.
Given a community $\dC = \Cd \in \sCdNash$, let the interval $\Id  =   [ mid(I_C) - L_C, mid(I_C) + L_C) \subset \setR$ as given in Proposition~\ref{prop:dnash}, i.e. we have that
$$\intdd = \Add \cap \Id \mbox{ and }\;
\intsd = \Ads \cap \Id.$$
Then the  utility rate function for content consumption $\UCdd(y)$  given by
$$ \UCdd(y) =   E_p E_q \sum_{z \in \intsd} \Big [ p(\xd(z)|y) q(\xd(z)|z) - c \Big ]dx, \qquad y \in \intdd,$$
where
$$\xd(y) = \arg \max_{x \in \setR}  q(x|y) \PCd(x),$$
has the following properties.
\begin{enumerate}
\item[a)]  For
  $$ y,y' \in \intdd \cap [mid(\Id) - \LCd,mid(\Id) - \Dl_U],$$
  such that
  $y > y'$,
  we have that
$\UCdd(y) > \UCdd(y')$.
\item[b)] For
  $$y,y' \in \intdd \cap [mid(\Id) + \Dl_U, mid(\Id) + \LCd],$$
  such that
  $y < y'$,
  we have that
$\UCdd(y) > \UCdd(y')$.   
\end{enumerate}
\end{prop}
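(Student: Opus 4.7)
The plan is to exploit the symmetry structure established by Propositions~\ref{prop:xd}--\ref{prop:QCd} and reduce the claim to the local strict concavity of $f$. Noting that $p(x|y)=f(\|x-y\|)$ and that, since $2L_C<L$, the torus distance reduces to $|x-y|$ for $x,y\in \Id$, we may write
$$\UCdd(y) = E_p E_q \sum_{z \in \intsd} f(|\xd(z) - y|)\, g(|\xd(z) - z|) - E_p E_q c\,|\intsd|.$$
The second term is independent of $y$, so monotonicity of $\UCdd$ reduces to monotonicity of $G(y):=\sum_{z\in\intsd} w_z\,f(|\xd(z)-y|)$ with $w_z:=g(|\xd(z)-z|)$. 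Since $G$ is piecewise smooth in $y\in \Id$, it suffices to show $G'(y)>0$ a.e.\ on $[mid(\Id)-\LCd,\,mid(\Id)-\Dl_U]$ and $G'(y)<0$ a.e.\ on $[mid(\Id)+\Dl_U,\,mid(\Id)+\LCd]$; strict monotonicity at the grid points in $\intdd$ then follows.

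The crucial step is a pairing argument centered at $\middd$, the symmetry axis of $\PCd$ furnished by Proposition~\ref{prop:PCd}. A direct reflection-invariance check shows that for $y'=2\middd-y$, the objective $x\mapsto q(x|y')\PCd(x)$ is obtained from $x\mapsto q(x|y)\PCd(x)$ by the change of variable $x\mapsto 2\middd-x$; hence $\xd(y')=2\middd-\xd(y)$ and the weight map $w(\cdot)=g(|\xd(\cdot)-\cdot|)$ is symmetric about $\middd$. Pair each $z\in\intsd$ with the nearest point $z^{*}\in\intsd$ to $2\middd-z$, and write $\xd(z)=\middd+\xi$ and $y=\middd-a$ with $a>0$. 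Up to an $O(\dld+\dls)$ correction from the pairing mismatch, the combined derivative contribution of the pair is
$$w_z \Bsbl -f'(\xi+a) - \mathrm{sgn}(\xi-a)\,f'(|\xi-a|) \Bsbr.$$
Using $f'<0$ and the strict concavity of $f$ on $[0,b]\supset[0,2L_C]$ from Assumption~\ref{ass:fg}, one verifies by cases ($\xi>a$, $\xi<a$, $\xi=a$) that this quantity is strictly positive for every $\xi\in[0,\LSd]$ and every $a\geq \Dl_U$; any self-paired term at $\xi=0$ contributes $-w_0 f'(a)>0$.

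Summing the pair contributions yields a strictly positive lower bound on $G'(y)$ for the idealized symmetric configuration that, by continuity in $y$ together with the uniform bounds on $f',f''$ and the compactness of $a\in[\Dl_U,\LCd]$, is bounded away from zero uniformly in all admissible parameters. The discrete set $\intsd$ deviates from being symmetric about $\middd$ by $O(\dld+\dls)$ (since both $\middd$ and $mid(\intsd)$ lie within a spacing of $mid(\Id)$), and the smoothness of $f$, $g$, and $\xd$ (Proposition~\ref{prop:xd}) then imply that $G'(y)$ differs from the idealized symmetric expression by $O(\dld+\dls)$, uniformly in $y$. Choosing $\dl_0\leq \dl$ small enough, this perturbation is strictly dominated by the positive lower bound, yielding~(a); part~(b) follows by the mirror argument with $a<0$.

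The main obstacle is making the asymmetry estimate in the third paragraph fully rigorous: one must show that the pairing mismatch contributes only $O(\dld+\dls)$ (and not $O(1)$) to $G'(y)$ uniformly in $y\in[mid(\Id)-\LCd,\,mid(\Id)-\Dl_U]$, even near the finitely many non-differentiability points $y=\xd(z)$ of $G$. This requires a uniform smoothness estimate on $\xd$ together with a case analysis that uses the fact that monotonicity only needs to be checked on the grid $\intdd$, allowing us to avoid evaluating $G'$ at the exceptional points.
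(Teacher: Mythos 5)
Your route is genuinely different from the paper's. The paper (Lemma~\ref{lemma:FCdd}) imports from the continuous-model analysis in~\cite{continuous_model_arxiv} that $\FCd(y)$ is strictly increasing with $\frac{d}{dy}\FCd(y)>B>0$ on the relevant interval, and then uses a Riemann-sum estimate (a derivative-level version of Lemma~\ref{lemma:Riemann_FCdd}) to show that $\dls\frac{d}{dy}\FCdd(y)$ is within $B/2$ of that continuous derivative, hence positive; Proposition~\ref{prop:UCdd} follows by identifying $\UCdd=\FCdd$. You instead argue entirely in the discrete setting: drop the constant $c$-term, use the symmetry of $\PCd$ about $\middd$ from Proposition~\ref{prop:PCd} together with the fact that $q$ depends only on distance to deduce that $\xd$ is equivariant and the weights $w_z=g(\|\xd(z)-z\|)$ are invariant under reflection about $\middd$, pair the summands, and reduce positivity of $G'(y)$ to $f'<0$ and $f''<0$ on $[0,b]$. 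The idea is sound and has the virtue of being self-contained, not leaning on the continuous-model monotonicity of $\FCd$. One small slip: the correct per-pair contribution is $-f'(\xi+a)+\mathrm{sgn}(\xi-a)\,f'(|\xi-a|)$, not $-f'(\xi+a)-\mathrm{sgn}(\xi-a)\,f'(|\xi-a|)$; this only swaps which of the cases $\xi\gtrless a$ is trivial and which invokes $f''<0$, so the conclusion survives.

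However, the obstacle you flag at the end is a genuine gap, not just a technicality to be dispatched by ``continuity''. The pairing is only approximate: $\intsd$ is symmetric about its own midpoint $\midsd$, not about $\middd$, and $\|\midsd-\middd\|\le\dld+\dls$ (Lemma~\ref{lemma:midd}). To show that this offset perturbs $G'(y)$ by only $O(\dld+\dls)$ uniformly, you need a uniform bound on $\xd'$ (and hence on the weight map $w$) over the full range of $z\in\intsd$, including $z$ near $\middd$. Proposition~\ref{prop:xd} and Lemma~\ref{lemma:xd} give differentiability of $\xd$ only away from a $\Dl_{x^*}$-neighborhood of the center, and nowhere state a quantitative derivative bound for the discrete $\xd$; the paper never needs one, because its Riemann-sum estimate differentiates the integrand $p(\xs(z)|y)q(\xs(z)|z)$ in $y$ rather than in $z$, and the only derivative bound it invokes, $|\xs'(z)|<M_z$, comes from the continuous-model reference for $\xs$, not from anything proved about $\xd$. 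To close your argument you would need either to transfer that bound to $\xd$ via Lemma~\ref{lemma:xd_xs} (which partly pulls you back toward the paper's comparison strategy), or to prove directly that $|\xd'|$ is uniformly bounded, e.g.\ via a quantitative implicit-function estimate at the first-order condition $q'(x|y)\PCd(x)+q(x|y)\PCd'(x)=0$ with a uniform negative upper bound on the second derivative of $q(\cdot|y)\PCd(\cdot)$ at the optimizer. Until one of these is in place, the asserted $O(\dld+\dls)$ asymmetry estimate does not follow.
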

Proposition~\ref{prop:UCdd} states that the closer an agent $y \in \intdd$ is to the center of interest of the community, the higher  a higher utility rate it receives. This is an interesting result as it suggest that the utility rate might can be used to rank agents in an information community.  We discuss this in more details in Section~\ref{subsection:discussion_results}.

We next study the properties of the utility rates for content production $\UCsd(y)$ for  an discrete interval community $\dC = \Cd$  under a  discrete interval \erNash as given by Proposition~\ref{prop:dnash}.

\begin{prop}~\label{prop:UCsd}
Let $\dCSd$ be a class of discrete interval community structures with distance $\dl$ as given by Proposition~\ref{prop:dnash}, i.e. we have that all community structures  $\CSdNash \in \dCSd$ are a \erNashn.
Then for  every $\Dl_U$, $0 < \Dl_U < L_C$, there exists a class $\dCSdsub \subseteq \dCSd$, $0 < \dl_0 \leq \dl$, of discrete interval community structures with distance $\dl_0$ such that for all community structures
$$\dcommStrucNash \in \dCSdsub$$
the following is true.
Given a community $\dC = \Cd \in \sCdNash$, let the interval $\Id  =   [ mid(I_C) - L_C, mid(I_C) + L_C) \subset \setR$ as given in Proposition~\ref{prop:dnash}, i.e. we have that
$$\intdd = \Add \cap \Id \mbox{ and }\;
\intsd = \Ads \cap \Id.$$
Then the  utility rate function for content production $\UCsd(y)$ given by
$$ \UCsd(y) =  E_p E_q \Big [  q\big (\xd(y)|y \big ) \PCd(\xd(y))- \aCd c \Big ], \qquad  y \in \intsd,$$
has the following properties.
\begin{enumerate}
\item[a)]  For
  $$ y,y' \in \intdd \cap [mid(\Id) - \LCd,mid(\Id) - \Dl_U],$$
  such that
  $y > y'$,
  we have that
$\UCsd(y) > \UCsd(y')$.
\item[b)] For
  $$y,y' \in \intdd \cap [mid(\Id) + \Dl_U, mid(\Id) + \LCd],$$
  such that
  $y < y'$,
  we have that
$\UCsd(y) > \UCsd(y')$.   
\end{enumerate}
\end{prop}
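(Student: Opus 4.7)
Since $\aCd = E_p |\intdd|$ depends only on the community (not on the producer $y$), the term $\aCd c$ in $\UCsd(y)$ is a constant in $y$, so
$$\UCsd(y) - \UCsd(y') = E_p E_q \bigl[F(y) - F(y')\bigr],$$
where
$$F(y) := q(\xd(y)|y)\,\PCd(\xd(y)) = \max_{x \in \setR} q(x|y)\,\PCd(x).$$
It therefore suffices to show that $F$ is strictly increasing on $[mid(\Id) - \LCd,\, mid(\Id) - \Dl_U]$ and strictly decreasing on $[mid(\Id) + \Dl_U,\, mid(\Id) + \LCd]$; I would set $\Dl_U := \Dl_{x^*}$.

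\textbf{Envelope theorem and sign of $F'(y)$.} Choose $\dCSdsub$ by applying Proposition~\ref{prop:xd} with $\Dl_{x^*} = \Dl_U$, so that $\xd$ is the unique, differentiable maximizer on the relevant subintervals of $\Id$, with $\xd(y) \in (y, mid(\Id)) \cap supp(q(\cdot|y))$ on the left and $\xd(y) \in (mid(\Id), y) \cap supp(q(\cdot|y))$ on the right. The envelope theorem then gives
$$F'(y) = \PCd(\xd(y)) \cdot \left.\frac{\partial q(x|y)}{\partial y}\right|_{x=\xd(y)}.$$
For $y$ in the left interval, since $y,\xd(y) \in \Id$ and $2L_C < L$ by construction, the torus distance locally equals Euclidean distance, so $q(x|y) = g(x-y)$ in a neighborhood of $(\xd(y), y)$, giving $\partial_y q = -g'(\xd(y)-y)$. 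Observe that $F(y) \geq q(y|y)\,\PCd(y) = g(0)\,\PCd(y) > 0$, which forces $\xd(y) - y$ to lie strictly inside $supp(g)$; combined with $g'(0) = 0$ and the strict concavity of $g$ on $supp(g)$ from Assumption~\ref{ass:fg}, this yields $g'(\xd(y)-y) < 0$ strictly. Since $\PCd(\xd(y)) > 0$ as well, we conclude $F'(y) > 0$, and $F$ is strictly increasing on this interval. The right interval is handled symmetrically: $\xd(y) < y$ gives $q(x|y) = g(y-x)$ locally, $\partial_y q = g'(y - \xd(y)) < 0$, and $F'(y) < 0$.

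\textbf{Main obstacle.} The heart of the proof is not the differentiation itself but verifying the hypotheses that let these manipulations go through: (i) a \emph{unique} differentiable maximizer $\xd$, invoked through Proposition~\ref{prop:xd}(a),(d); (ii) the local reduction of the torus metric to Euclidean distance, which needs both $2L_C < L$ and the fact that $\xd(y)$ stays on the same side of $y$ as guaranteed by Proposition~\ref{prop:xd}(b),(c); and (iii) strict negativity of $g'$ at the argument $\xd(y)-y$, which requires knowing that this argument lies in the open interior of $supp(g)$. Point (iii) is the subtlest step --- it is established by the simple but essential lower bound $F(y) \geq g(0)\,\PCd(y) > 0$, which prevents the maximizer from sitting on the boundary of $supp(g)$ where strict concavity could fail to deliver a strict inequality.
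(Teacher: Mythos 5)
Your proof is correct, and it takes a genuinely more self-contained route than the paper's. The paper proves Proposition~\ref{prop:UCsd} via Lemma~\ref{lemma:FCsd}, whose proof sets up the necessary preconditions --- unique differentiable maximizer $\xd(y)$, strict concavity and monotonicity of $\PCd$ on the relevant subintervals --- and then concludes by invoking ``the same argument as given in~\cite{continuous_model_arxiv}'' used to show that the continuous-model analogue $\FCs(y)$ is strictly monotone. You instead carry out that argument explicitly and directly in the discrete setting: reduce to the max-value function $F(y) = \max_x q(x|y)\PCd(x)$, apply the envelope theorem so that only the $\partial_y q$ term survives, and pin down its sign via $g'(0) = 0$ together with strict concavity of $g$ and the location result $\xd(y) \in (y, mid(\Id))$ from Proposition~\ref{prop:xd}. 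This makes the proof verifiable without chasing the external reference, and it isolates precisely where Assumption~\ref{ass:fg} (in particular $g'(0)=0$, strict concavity, and $g(0)>0$) is consumed. The bound $F(y) \geq g(0)\PCd(y) > 0$ you use to push $\xd(y)-y$ off the boundary of $supp(g)$ is the kind of detail the paper handles only implicitly through the choice of $\dlO$; making it explicit is a genuine improvement in readability. The paper's route buys conceptual unity with the continuous-model analysis and avoids duplication, but is harder to check as stated.
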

Similar to  Proposition~\ref{prop:UCdd},  Proposition~\ref{prop:UCsd} states that the closer an agent $y \in \intdd$ is to the center of interest of the community, the higher  a higher utility rate it receives.  Again, this result suggest that the utility rate might can be used to rank agents in an information community.  We discuss this in more details in Section~\ref{subsection:discussion_results}.

\section{Conclusions}~\label{subsection:discussion_results}
In this paper we considered an information network with a discrete set of agents, and characterized for this model community structures that are a \erNash. The analysis extends the results of~\cite{continuous_model_ita,continuous_model_arxiv} that were obtained for a continuous agent population model to the case of a discrete agent population model. In particular, in the limiting case of a very dense agents population $\Add$ and $\Ads$, i.e. when the distances $\dld$ and $\dls$ approach 0, the results presented in this paper recover the results obtained in~\cite{continuous_model_ita,continuous_model_arxiv}. As such, the case of a continuous population model that is easier to analyzed and characterize provides the right intuition for the discrete population model which captures better the real-life information networks.

An interesting aspect of the results in this paper, and inf~\cite{continuous_model_ita,continuous_model_arxiv}, is that they indeed provide insights into properties of communities in real-life information networks. Below we provide an overview of these properties which are essentially the same as obtained in~\cite{continuous_model_ita,continuous_model_arxiv}.  

Proposition~\ref{prop:dnash} states that each content producer in a community focuses on generating exactly one type of content, i.e we have that that an agent $y \in \intsd$ produces the content unique content type $\xd(y)$ given by
$$\xd(y) =  \arg\max_{x \in \setR} q(x|y) \PCd(x), \qquad y \in \intsd.$$
This is an interesting results as this property/behavior has indeed been observed in real-life social networks. In particular, Zadeh, Goel and Munagala provide experimental results using data obtain from  Twitter that shows  Twitter users produce content on a very narrow set of topics, and consume content on a large set of topics~\cite{goel}. The interpretation of this experimental result based on the analysis presented in this paper is that this behavior is indeed optimal, and that the topics for which a given Twitter user produces content is the optimal content for this user to produce.

Proposition~\ref{prop:xd}~and~Proposition~\ref{prop:Ddy} state that agents adapt the content they produce towards  center of interest of a community they belong to. This is an interesting result, as  it provides insight regarding the issue of ``homophily versus adaption'' in social networks. One aspect of communities in social networks is that members of the community tend to have similar interests, and behave similarly. One question is whether is due to homophily, i.e. do individuals form a community because they have the same interest and behave similarly, or whether this is due to adaption, i.e. is due that through the interaction in a community its members develop similar interests and behave similarly. The results of this paper suggest that both mechanism are at work in information communities. In particular, homophily (similar interests) prompts individuals to join the same community. This is captured by the result of Proposition~\ref{prop:xd} which states the communities  of a \erNash as given in Proposition~\ref{prop:dnash} are given by intervals, and hence the center of interests of agents in a given community belong to the same interval and are close together. Or in other words, under a \erNash agents that belong to the same community indeed have similar interests. Moreover, when an content producing agent joins a community then the agent adapts the content it produces from its center of interest towards the center of interest of the community. This is captured by the result of Proposition~\ref{prop:xd} which states that an agent in community $\dC$ does not produce content $y$ for which they have the best ability, i.e. content type that is the center of  interest of the agent, but agent $y$ adapts the content type $\xd(y)$ that is produces towards the center of interest of the community.

Proposition~\ref{prop:PCd}  states that the content type that is the most popular is the content type that is close to the the center of interest of the community. Moreover, it states that the further away a  content type is from the center of interest of a community, the less popular it is in the community. This result is interesting as it confirms in a formal way the intuition that communities have a ``center of interest'', i.e. there is a focus (center of interest) of the community and the closer a type of content (topic) is to that center of interest, the more popular (in demand) the content will be. While this property of a community may seem ``intuitively obvious'', it is interesting to observe that this property is indeed recovered (confirmed) using the proposed model.

Proposition~\ref{prop:QCd}  implies that under a \erNash there is no overlap between the content that is being produced in different communities, i.e. each community produces a distinct set of content types. This is an interesting result as it suggest that each information community can be identified by  unique ``core content'' that is only produced in this community. In particular, this result suggests that each  real-life community has a ``core interest'' that is unique to the community, and hence identifies this community. Moreover, it seems that this property should be useful when trying to discover (identify) communities in information networks, i.e. one should be able to exploit this property to potential design more efficient algorithms to discover communities in information networks. How, and whether, this is possible is future research.

Finally, Proposition~\ref{prop:UCdd}~and~\ref{prop:UCsd} state that  the agents that are close to  the center of interest of a  community  obtain the highest utility rate for content consumption and production; and the further away an agent $y$ is from the center of interest of the community, the smaller a utility it will receive. This result suggest that there is a ``natural'' way to rank members in a community, where a member is ranked higher if it receives a higher utility. In particular, if it is possible to observe the utilities that community members receive, then it is indeed possible to rank community members in this way. How, and when, it is
possible to observe (estimate)  utilities that community members receive, and how this information can be used to design algorithms to analysis information communities,  is future research.

In ongoing research, we use the model and results presented in this paper to study the  connectivity (graph structure) within in a community, as well as information dissemination within a community. In addition  use the model and results presented in this paper to study local social search algorithms to detect core users in a community, as well as local algorithms to detect information communities.

\bibliography{info15}{}
\bibliographystyle{plain}

\appendix














\newpage
\section{Property of Discrete Sets on an Interval $I_C$ with Distance $\dl$}
In this appendix we derive a result for discrete sets on an interval $I_C$ with distance $\dl$ which we use in our analysis. For this, recall the definition of a discrete set on an interval $I_C$ with distance $\dl$. In particular, recall that given an interval $I_C$ on $\setR$,  a discrete set  $\intd$ on $I_C$ with distance $\dl$, $\dl>0$, is given by 
$$ \intd = \{y_1,...,y_K\} \subset I_C$$
such that
\begin{enumerate}
\item[a)] $K \geq 2$,
\item[b)] $ y_{k+1} - y_k = \dl$, $k=1,...,K-1,$
\item[c)] $ ||y_1 - (mid(I_C) - L_C)|| \leq \dl$,
\item[d)]$||y_K - (mid(I_C) + L_C)||  \leq \dl$,
\end{enumerate}
where  $mid(I_C)$ is the mid-point, and $L_C$ is the half-length, of the interval $I_C$.

Given an interval $I_C$ on $\setR$ and a discrete set
$$\intd = \{y_1,...,y_K\}$$
on $I_C$ with distance $\dl$,
recall that the midpoint $\midd$ of $\intd$ is given by
$$\midd =  \frac{1}{K} \sum_{k=1}^K y_k = \frac{1}{K} \sum_{y \in \intd } y = \frac{y_1 + y_K}2.$$
Furthermore, recall that  the half-length $\Ld$ of the discrete set $\intd$ is given by
$$ \Ld = \frac{y_K - y_1}{2},$$
and we have that
$$ y_1 = \midd - \Ld$$
and
$$y_K = \midd + \Ld.$$
Finally, note that we have  that 
$$\Ld \leq L_C$$
and
$$ | \Ld - L_C | \leq \dl.$$

We then obtain the following  result which states that the midpoint $\midd$ of the discrete set $\intd$ on $I_C$ is not too far away from the midpoint $ mid(I_C)$ of the set $I_C$. 
\begin{lemma}\label{lemma:midd}
Consider an interval $I_C \subset \setR$. Then for all discrete sets $\intd$ on $I_C$ with distance $\dl$ we have that
$$ || \midd - mid(I_C)|| \leq \dl.$$
\end{lemma}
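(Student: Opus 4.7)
\section*{Proof plan for Lemma~\ref{lemma:midd}}

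The plan is to reduce the statement to a one-line averaging argument, modulo a careful setup of the torus arithmetic. First, I would exploit the arithmetic-progression structure of the discrete set. Since $y_{k+1} - y_k = \dl$ for $k=1,\dots,K-1$, the points $y_1,\dots,y_K$ form an arithmetic progression in $\setR$, so as already noted in the excerpt,
\[
  \midd \;=\; \frac{1}{K}\sum_{k=1}^{K} y_k \;=\; \frac{y_1+y_K}{2}.
\]
Thus the quantity of interest is just the average of the two endpoints of the discrete set.

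Next, I would use properties (c) and (d) of the definition, namely
\[
  \|y_1 - (mid(I_C)-L_C)\| \leq \dl, \qquad \|y_K - (mid(I_C)+L_C)\| \leq \dl.
\]
To turn torus distances into ordinary absolute values, I would translate the whole picture by $-mid(I_C)$, using the group structure on $\setR=[-L,L)$ described in Section~\ref{section:setR}. After this translation, write $y_1 = -L_C + \epsilon_1$ and $y_K = L_C + \epsilon_K$ with representatives chosen so that $|\epsilon_1|\leq \dl$ and $|\epsilon_K|\leq \dl$; this choice is possible because the $\dl$-neighborhoods of $\pm L_C$ do not wrap around in $\setR$ (using that $L_C\leq L$ and $\dl$ is small). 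Then
\[
  \frac{y_1+y_K}{2} \;=\; \frac{\epsilon_1+\epsilon_K}{2},
\]
so by the triangle inequality $\bigl|\tfrac{y_1+y_K}{2}\bigr| \leq \tfrac{|\epsilon_1|+|\epsilon_K|}{2} \leq \dl$. Translating back gives $\|\midd - mid(I_C)\|\leq \dl$.

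The only genuine obstacle is bookkeeping with the torus metric: one has to be sure that when averaging the two endpoint bounds, no wrap-around ambiguity creeps in. I would handle this by fixing representatives in $\mathbb{R}$ (rather than in the quotient) throughout the argument, observing that the points $mid(I_C)\pm L_C$, together with their $\dl$-neighborhoods, sit in an arc of length at most $2L_C + 2\dl < 2L$, so they lift uniquely (up to a global shift) to an interval of $\mathbb{R}$. On that lift, $\|\cdot\|$ coincides with ordinary absolute value and the averaging argument above is rigorous. Once this lifting step is justified, the rest of the proof is just the triangle inequality, and no further ingredients are needed.
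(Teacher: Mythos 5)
Your proposal is correct and takes essentially the same route as the paper: both write $\midd = \tfrac{y_1+y_K}{2}$, compare it with $mid(I_C) = \tfrac{(mid(I_C)-L_C)+(mid(I_C)+L_C)}{2}$, and apply the triangle inequality together with properties (c) and (d) to get the bound $\dl$. Your extra care in lifting to fixed representatives in $\mathbb{R}$ so that $\|\cdot\|$ coincides with the ordinary absolute value is a sound refinement that the paper leaves implicit.
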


\begin{proof}
Consider a discrete set $\intd$ on the interval $I_C$ with distance $\dl$ given by
$$I_{\dl,C} = \{y_k\}_{k=1,...,K}.$$
By definition we have that
$$  ||y_1 - (mid(I_C) - L_C)|| \leq \dl$$
and
$$  || y_K - (mid(I_C) + L_C))|| \leq \dl.$$
This implies that
\begin{eqnarray*}
 || \midd - mid(I_C)||
 &=& \left \Vert \frac{y_1 + y_K}{2} - \frac{ mid(L_C) - L_C + mid(I_C) + L_C }{2} \right \Vert \\
 &\leq& \left \Vert \frac{y_1 -  (mid(L_C) - L_C)}{2} \right \Vert
 + \left \Vert  \frac{y_K -  (mid(L_C) + L_C)}{2} \right \Vert \\
 & \leq& \dl.
\end{eqnarray*}
The result then follows.
\end{proof}

\newpage
\section{Riemann Sum Approximation}
In this appendix we review properties of the Riemann sum approximation of an integral, and apply them to the demand function $\PCd(x)$ for a discrete interval community $C$ with distance $\dl$. 

Given an integral
$$\int_a^b f(x) dx, \qquad a,b \in \setR,$$
the Riemann sum approximation of the integral is given by
$$ \sum_{n=1}^N f \Big (a + i \dl(N) \Big ) \dl(N),$$
where $N$ is an integer and
$$\dl(N) = \frac{|a-b|}N.$$

We have the following well-known result for the approximation error of the Riemann sum.
\begin{prop}\label{prop:Riemann}
If the function $f(x)$ is differentiable on the interval $[a,b]$ and there exists a constant $M$ such that
$$ |f'(x)| \leq M, \qquad x \in [a,b],$$
then we have that
$$ \lim_{N \to \infty}  \sum_{n=1}^N f\Big (a + i \dl(N) \Big ) \dl(N) = \int_a^b f(x) dx,$$
and
$$ \left | \int_a^b f(x) dx -  \sum_{n=1}^N f\big (a + i \dl(N) \big ) \dl(N) \right | \leq \frac{M|a-b|^2}{2 N} = \frac{M|a-b|}{2} \dl(N).$$
\end{prop}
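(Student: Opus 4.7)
The plan is to use the standard decomposition of the integral into a sum over the $N$ subintervals $[a + (n-1)\delta(N),\, a + n\delta(N)]$ for $n = 1,\ldots,N$, and to bound the error on each subinterval separately. Writing $\delta = \delta(N) = |a-b|/N$ for brevity, and interpreting the summation index in the statement as $n$ (so the Riemann sum uses the right endpoints $a + n\delta$), the first step is to observe that
$$\int_a^b f(x)\,dx - \sum_{n=1}^N f(a+n\delta)\,\delta \;=\; \sum_{n=1}^N \int_{a+(n-1)\delta}^{a+n\delta} \bigl(f(x) - f(a+n\delta)\bigr)\,dx,$$
since the subintervals partition $[a,b]$ and $\int_{a+(n-1)\delta}^{a+n\delta} f(a+n\delta)\,dx = f(a+n\delta)\,\delta$.

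The next step is to invoke the mean value theorem together with the bound $|f'(x)| \leq M$ on $[a,b]$ to conclude that for every $x \in [a+(n-1)\delta, a+n\delta]$,
$$\bigl|f(x) - f(a+n\delta)\bigr| \;\leq\; M\,\bigl|x - (a+n\delta)\bigr|.$$
Integrating this pointwise bound over the $n$-th subinterval, and using that $\int_{a+(n-1)\delta}^{a+n\delta} |x - (a+n\delta)|\,dx = \delta^2/2$, gives
$$\left| \int_{a+(n-1)\delta}^{a+n\delta} \bigl(f(x) - f(a+n\delta)\bigr)\,dx\right| \;\leq\; \frac{M\delta^2}{2}.$$
The factor $1/2$ (as opposed to a cruder factor of $1$) is precisely what yields the stated constant $M|a-b|^2/(2N)$, and obtaining it requires integrating the linear bound $|x-(a+n\delta)|$ exactly rather than estimating it by its maximum value $\delta$.

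Summing the per-subinterval bound over $n = 1,\ldots,N$ and using the triangle inequality gives
$$\left|\int_a^b f(x)\,dx - \sum_{n=1}^N f(a+n\delta)\,\delta\right| \;\leq\; N \cdot \frac{M\delta^2}{2} \;=\; \frac{M|a-b|\,\delta}{2} \;=\; \frac{M|a-b|^2}{2N},$$
which is the second claim in the proposition. The first claim, namely convergence of the Riemann sum to the integral as $N \to \infty$, then follows immediately since the right-hand side above tends to zero. There is no real obstacle in the proof; the argument is a routine application of the mean value theorem combined with a telescoping identity. The only point that needs care is tracking the exact factor of $1/2$ in the per-subinterval estimate.
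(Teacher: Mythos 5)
Your proof is correct; it is the standard mean-value-theorem argument for the error of a right-endpoint Riemann sum with a derivative bound, and the bookkeeping (in particular obtaining the exact factor $\delta^2/2$ per subinterval rather than the cruder $\delta^2$) is carried out properly, including the interpretation of the summation index $i$ as $n$. For context, the paper itself does not supply a proof of Proposition~\ref{prop:Riemann} at all: it is presented as ``the following well-known result'' and is used later (e.g.\ in Lemma~\ref{lemma:Riemann_PCd}) as a black box, so there is no proof in the paper to compare against; your argument fills in exactly what was being taken for granted.
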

This result states that the smaller $\dl(N)$, or the larger the number of points $N$, the more accurate is the Riemann sum approximation of a given interval.

We can apply this result to the  demand function $\PCd(x)$ for a discrete interval community $\dC$ with distance $\dl$ as follows. Given an interval $I_C$ and a discrete interval community $\dC= \Cd$ on $I_C$ with distance $\dl$, we have that  $ \dld \PCd(x)$ given by
$$ \dld \PCd(x)  =  \dld \sum_{y \in \intdd}  \aCd(y) p(x|y), \qquad x \in \setR,$$
can be interpreted as  a  Riemann sum approximation of the integral
$$ P_C(x) = \int_{I_C}  \aC(y) p(x|y) dy.$$
More precisely, we have the following result for the case where
$$\aCd(y) = E_p, \qquad y \in \intdd$$
and
$$\aC(y) = E_p, \qquad y \in I_C.$$
\begin{lemma}\label{lemma:Riemann_PCd}
Let $E_p$, $0< E_p \leq 1$, be a given constant. Furthermore let $f$ be the function in Assumption~\ref{ass:fg}, and let $M$ be such that
$$ |  f'(x) | < M, \qquad x \in [0,L].$$
Given an interval $I_C \subset \setR$ and a discrete community $\dC =\Cd$ on $I_C$ with distance $\dl$ we have for $ \dld \PCd(x)$ given by
$$ \dld \PCd(x)  =  \dld \sum_{y \in \intdd} E_p p(x|y), \qquad x \in \setR,$$
and $P_C(x)$, $x \in \setR$, given by
$$ P_C(x) = \int_{I_C}  E_p p(x|y) dy, \qquad x \in \setR,$$
that
$$ \Big | \dld \PCd(x) - P_C(x) \Big |  \leq  2 E_p (M L_C + 1) \dld <  2 E_p (M L_C + 1) \dl$$
where $L_C$ is the half-length of the interval $I_C$.
\end{lemma}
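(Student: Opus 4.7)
The plan is to apply Proposition~\ref{prop:Riemann} to the function $y \mapsto E_p p(x|y) = E_p f(\|x-y\|)$, with $x$ held fixed, and then absorb the small discrepancies between the discrete grid $\intdd$ and the interval $I_C$ that arise because the endpoints of $\intdd$ may be offset from those of $I_C$ by up to $\dld$.

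First I would verify the hypotheses of Proposition~\ref{prop:Riemann}. For fixed $x \in \setR$, the map $y \mapsto \|x-y\|$ is $1$-Lipschitz and differentiable in $y$ on $I_C$ except at the single point (if any) where the two torus representations of $x-y$ exchange minimum. Composing with $f$ and using the hypothesis $|f'(u)| < M$ for $u \in [0,L]$, the chain rule gives $|(d/dy)\, E_p p(x|y)| \leq E_p M$ almost everywhere on $I_C$, which is enough for the Riemann sum bound since the exceptional point has measure zero.

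Next I would write $\intdd = \{y_1,\dots,y_K\}$ with $y_{k+1} = y_k + \dld$ and view $\dld E_p \sum_{k=1}^K p(x|y_k)$ as a left-endpoint Riemann sum over the interval $[y_1, y_1 + K \dld]$ with step $\dld$. Proposition~\ref{prop:Riemann} then yields
$$ \left| \dld E_p \sum_{k=1}^K p(x|y_k) - \int_{y_1}^{y_1 + K \dld} E_p p(x|y)\, dy \right| \leq \frac{E_p M \cdot K \dld}{2}\, \dld.$$
From the definition of a discrete set on $I_C$ I have $|y_1 - (mid(I_C) - L_C)| \leq \dld$ and $|y_K - (mid(I_C) + L_C)| \leq \dld$, so $(K-1)\dld = y_K - y_1 \leq 2 L_C + 2\dld$, and hence the right hand side is bounded by a constant multiple of $E_p M L_C \dld$ (up to lower-order terms in $\dld$).

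Finally I would bound the remaining discrepancy between $\int_{y_1}^{y_1 + K\dld} E_p p(x|y)\, dy$ and $P_C(x) = \int_{I_C} E_p p(x|y)\, dy$. The symmetric difference of the intervals $[y_1, y_1 + K\dld]$ and $I_C$ consists of two boundary segments each of length at most a small constant times $\dld$, and on these segments $E_p p(x|y) \leq E_p$ since $f \leq 1$. This contributes at most a constant multiple of $E_p \dld$ to the error. Summing the Riemann sum error and the boundary error, and consolidating constants, yields the stated bound $|\dld \PCd(x) - P_C(x)| \leq 2 E_p (M L_C + 1)\dld$, and the strict inequality $\dld < \dl$ from the definition of a community structure in $\dCSd$ gives the final strict bound. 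The only real obstacle is the careful bookkeeping of endpoint offsets and constants to ensure the two error contributions consolidate exactly into the stated prefactor.
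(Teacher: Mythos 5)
Your proof follows essentially the same route as the paper's: interpret $\dld\PCd(x)$ as a Riemann sum over an interval of length $K\dld$ containing the grid $\intdd$, apply Proposition~\ref{prop:Riemann} to bound that error via the derivative bound $|{(d/dy)}\,E_p p(x|y)| \leq E_p M$, and then bound the residual boundary mismatch between that interval and $I_C$ (each end contributing at most $E_p\dld$ since $p \leq 1$). The only cosmetic difference is your choice of comparison interval $[y_1,\,y_K+\dld]$ versus the paper's $[y_1-\dld,\,y_K]$, and your final constant-consolidation step is left informal where the paper obtains the exact prefactor $2E_p(ML_C+1)$ explicitly by observing $2\Ldd + \dld \leq 4\Ldd \leq 4L_C$.
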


\begin{proof}
By definition we have for a discrete interval community $\dC = \Cd$ on $I_C$ with distance $\dl$  that
$$ \dld < \dl,$$
and in order to prove the lemma it suffices to show that
$$ \Big | \dld \PCd(x) - P_C(x) \Big |  \leq  2 E_p (M L_C + 1) \dld.$$

Recall that the set $\intdd$ is given by a set of $K$ points $\{y_1,...,y_K\}$ in the interval $I_C$ such that
$$ y_{k+1} - y_k = \dld, \qquad k=1,...,K-1.$$
We then have that
$$\dld \PCd(x)
= \dld  \sum_{y \in \intdd}  E_p p(x|y), \qquad x \in \setR $$
is the Riemann sum approximation for the integral
$$ \hat P_C(x) = \int_{y_1 - \dld}^{y_K}  E_p p(x|y) dy.$$
From Proposition~\ref{prop:Riemann}, we then have that
\begin{equation}\label{eq:PC_Riemann_error}
\Big | \dld \PCd(x) - \hat P_C(x) \Big | \leq  \frac{E_p M \Bbl 2 \Ldd + \dld \Bbr}{2} \dld
\end{equation}
where
$$\Ldd = \frac{y_K - y_1}{2}.$$
Note that by definition we have that
$$ \Ldd \geq \dld$$
and it follows that
$$ 2 \Ldd + \dld \leq 4 \Ldd.$$
Combining this result with Eq.~\eqref{eq:PC_Riemann_error}, it follows that
$$ \left | \dld \PCd(x) - \hat P_C(x) \right | \leq  2 E_p M \Ldd  \dld.$$
By definition we have that
$$ y_1 - \dld \leq mid(I_C) - L_C \leq y_1,$$
and we obtain that
$$ mid(I_C) - L_C - (y_1 - \dld) \leq \dld,$$
Similarly, we have that
$$ y_K \leq mid(I_C) + L_C  \leq y_K + \dld,$$
and we obtain that
$$ mid(I_C) + L_C - y_k \leq \dld.$$
It then follows that
$$ | \hat P_C(x) - P_C(x)| = E_p \left | \int_{y_1 - \dld}^{mid(I_C) - L_C}  p(x|y) dy \right | + E_p \left | \int_{y_K}^{mid(I_C)+L_C}  p(x|y) dy \right |,$$
and
$$ | \hat P_C(x) - P_C(x)| \leq 2 E_p\dl.$$
Using the above results, we obtain that
\begin{eqnarray*}
| \dld \PCd(x) - P_C(x) |
&\leq&  \left | \dld \PCd(x) - \hat P_C(x) \right | +  \left | \hat P_C(x) - P_C(x) \right | \\
&\leq& 2 E_p M \Ldd \dld + 2 E_p \dld \\
&\leq& 2 E_p M L_C \dld + 2 E_p \dld \\
&=&  2 E_p (M L_C + 1) \dld.
\end{eqnarray*}
The result of the lemmas then follows.
\end{proof}

\section{Properties of $\PCd(x)$}\label{app:PCd}
In this appendix we derive properties of the demand function $\PCd(x)$ of a discrete interval community $C$ with distance $\dl$, and prove Proposition~\ref{prop:PCd}.

More precisely, given a discrete  interval community community $\dC=\Cd$ with distance $\dl$,
we characterize the demand function $\PCd(x)$ given by 
$$\PCd(x) =  \sum_{y \in \intdd}  \aCd(y) p(x|y), \qquad x \in \setR.$$
Recall that for a given discrete set  $\intd$,
$$\intd = \{y_1,...,y_K\},$$
on $I_C$ with distance $\dl$, we have that
$$ \Ld = \frac{y_K - y_1}{2}$$
and 
$$ \midd = \frac{y_1 + y_K}{2}.$$
Finally, recall that the metric space $\setR$ that we use in our analysis is given by an interval  $[-L, L) \in R$, $L > 0$, with the torus metric.

In the following we will focus on the case where agents allocate all their consumption rate to the community $C$, i.e. we have that
$$\aCd(y) = E_p > 0, \qquad y \in \intdd.$$

We then have the following two results.

\begin{lemma}
Consider  a constant  $E_p$, $0 < E_p \leq 1$, and an interval
$$I_C =   [ mid(I_C) - L_C, mid(I_C) + L_C)  \subset \setR.$$
Then for every $\Dl_P >0$ there exists a $\dlO >0$ such that for all discrete interval communities $\dC =\Cd$ on $I_C$ with distance $\dl < \dlO$ we have that
$$ \Big | P_{c}(x) - \dld \PCd(x) \Big | < \Dl_P, \qquad x \in \setR,$$
where
$$\PCd(x) =  \sum_{y \in \intdd}  E_pp(x|y)$$
and
$$ P_C(x) = \int_{I_C} E_p p(x|y) dy.$$
\end{lemma}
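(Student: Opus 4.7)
The plan is to obtain this statement as an essentially immediate corollary of Lemma~\ref{lemma:Riemann_PCd}, which was established just above and already gives an explicit quantitative bound on $|\dld \PCd(x) - P_C(x)|$ in terms of $\dl$. The current lemma merely repackages that quantitative estimate in the familiar $\epsilon$--$\delta$ form required for later use in Appendix~\ref{app:PCd}.

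First I would invoke Assumption~\ref{ass:fg} to extract a uniform constant $M > 0$ with $|f'(x)| \leq M$ for all $x \in [0, L]$; this is immediate since $f$ is continuously differentiable on the compact interval $[0, L]$ with bounded first derivative. With this $M$ in hand, Lemma~\ref{lemma:Riemann_PCd} delivers the uniform estimate $|\dld \PCd(x) - P_C(x)| \leq 2 E_p (M L_C + 1)\, \dld < 2 E_p (M L_C + 1)\, \dl$, holding for every $x \in \setR$ and for every discrete interval community $\dC = \Cd$ on $I_C$ with distance $\dl$. Note that the constant $2 E_p (M L_C + 1)$ depends only on $E_p$, $L_C$, and the function $f$ (through $M$), and is independent of the specific discrete community chosen and of $x$.

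Given $\Dl_P > 0$, I would then set $\dlO = \Dl_P / [2 E_p (M L_C + 1)]$. For any discrete interval community on $I_C$ with distance $\dl < \dlO$, the estimate above yields $|\dld \PCd(x) - P_C(x)| < 2 E_p (M L_C + 1)\, \dlO = \Dl_P$ uniformly in $x \in \setR$, which is exactly the desired conclusion. There is no substantive obstacle: all technical content (the Riemann sum error analysis together with the handling of the endpoint mismatch between $\{y_1-\dld, y_K\}$ and the true endpoints $mid(I_C) \pm L_C$) has already been absorbed into Lemma~\ref{lemma:Riemann_PCd}, so the proof reduces to the two-line packaging described above.
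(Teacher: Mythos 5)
Your proof is correct and follows essentially the same route as the paper's: invoke Lemma~\ref{lemma:Riemann_PCd} for the explicit error bound $|\dld \PCd(x) - P_C(x)| < 2E_p(ML_C+1)\dl$ uniformly in $x$, then choose $\dlO = \Dl_P / [2E_p(ML_C+1)]$. The paper states $\dlO < \Dl_P/[2E_p(ML_C+1)]$ rather than equality, but since the bound from Lemma~\ref{lemma:Riemann_PCd} is already strict, your choice works equally well.
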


\begin{proof}
By Lemma~\ref{lemma:Riemann_PCd} we have that
$$ | \dld \PCd(x) - P_C(x) |  <  2 E_p (M L_C + 1) \dl$$
where $M$ is such that
$$ \left | f'(x) \right | < M, \qquad x \in [0,L].$$

The result of the lemma then follows by setting
$$ \dlO < \frac{\Dl_P}{ 2 E_p (M L_C + 1) }.$$
\end{proof}

\begin{lemma}\label{lemma:PCd}
Consider a constant $E_p$, $0 < E_p \leq 1$, and an interval
$$I_C  =   [ mid(I_C) - L_C, mid(I_C) + L_C) \subset \setR.$$
Then for every $\Dl_P$, $0 < \Dl_P < L_C$, there exists a $\dlO > 0$  such that for all discrete interval communities  $\dC =\Cd$ on $I_C$ with distance $\dl < \dlO$ the following is true.
If we have that
$$ \aCd(y) = E_p, \qquad y \in \intdd,$$
and 
$$L_C < \frac{L}{2},$$
then the demand function
$$\PCd(x)  =  \sum_{y \in \intdd} \aCd(y) p(x|y) =  E_p \sum_{y \in \intdd}  p(x|y), \qquad x \in \setR,$$
has the properties that
\begin{enumerate}
\item[(a)] $\PCd(x)$ is twice differentiable on $\setR$.
\item[(b)] $\PCd(x)$ is symmetric with respect to $\middd$. 
\item[(c)] $\PCd(x)$ is strictly increasing on the interval $ [ mid(I_C) - L, mid(I_C) - \Dl_P]$, and strictly decreasing on the interval $[mid(I_C) + \Dl_P, mid(I_C) + L)$.
\item[(d)] $\PCd(x)$ is strictly concave in $x$ on the interval $[mid(I_C) - L_C,mid(I_C)+L_C]$.
\end{enumerate}
\end{lemma}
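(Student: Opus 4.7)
My plan is to reduce everything to the continuous demand $P_C(x) = E_p \int_{I_C} f(\|x-y\|)\,dy$ through Lemma~\ref{lemma:Riemann_PCd}, which supplies the uniform Riemann bound $|\dld\PCd(x) - P_C(x)| = O(\dl)$. Properties (a), (b), and (d) will fall out of the arithmetic structure of $\intdd$ and the hypotheses on $f$ essentially without using this approximation, while (c) is obtained by transferring the sign of the derivative from $P_C$ to $\PCd$, with the $\Dl_P$ buffer absorbing the Riemann error.

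For (b), the points $\intdd = \{y_1, \ldots, y_K\}$ form an arithmetic progression with $\midd = (y_1 + y_K)/2$, so $y \mapsto 2\midd - y$ permutes $\intdd$; translation invariance of the torus metric gives $\|x - y\| = \|(2\midd - x) - (2\midd - y)\|$, and hence $\PCd(2\midd - x) = \PCd(x)$. For (d), differentiating twice off the finite exceptional set $\intdd$ yields
\[
\PCd''(x) = E_p \sum_{y \in \intdd} f''(\|x-y\|),
\]
using that $\|x-y\|$ is piecewise linear in $x$ with slope $\pm 1$. Since $\intdd \subset I_C$ and $2L_C < b$, every $\|x-y\| \leq 2L_C < b$, so $f''(\|x-y\|) < 0$ by Assumption~\ref{ass:fg} and thus $\PCd''(x) < 0$ on the smooth pieces. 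At each $y_k$, the one-sided derivatives of $f(\|\cdot - y_k\|)$ differ by $2f'(0) < 0$, producing a downward jump in $\PCd'$; strict concavity (d) on $[mid(I_C) - L_C, mid(I_C) + L_C]$ then follows from $\PCd'$ being strictly decreasing on the smooth pieces and jumping strictly downward at each kink. Property (a) is then the $C^2$ smoothness of $\PCd$ on $\setR \setminus \intdd$, inherited from $f \in C^3([0,L])$.

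For (c), first show that $P_C$ is strictly increasing on $(mid(I_C) - L, mid(I_C))$. On the interior piece $(mid(I_C) - L_C, mid(I_C))$ Leibniz yields
\[
P_C'(x) = E_p\bigl[f(x - (mid(I_C) - L_C)) - f((mid(I_C) + L_C) - x)\bigr] > 0
\]
since $f$ is strictly decreasing and the first argument is the smaller. On $[mid(I_C) - L, mid(I_C) - L_C]$ the same conclusion is obtained by splitting the integral over $y \in I_C$ according to whether the torus geodesic from $x$ to $y$ takes the short or the long arc, with $f' < 0$ on $[0, L]$ controlling both pieces; $P_C'$ vanishes only at the exact antipode $x = mid(I_C) - L$ by symmetry. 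To transfer to $\PCd$, fix any $\eta > 0$: on $[mid(I_C) - L + \eta, mid(I_C) - \Dl_P]$, $P_C'$ is bounded below by some $c = c(\eta, \Dl_P) > 0$, so for $x_1 < x_2$ in this set, $\dld(\PCd(x_2) - \PCd(x_1)) = P_C(x_2) - P_C(x_1) + O(\dl) \geq c(x_2 - x_1) - O(\dl) > 0$ once $\dl$ is small enough relative to $c$. The remaining small window near the antipode is handled by first shrinking $\eta$ and then $\dl$, using the concavity from (d) together with $\PCd(\midd) = \max \PCd$ (from (b)) to preserve nondecreasing behavior all the way to $mid(I_C) - L$. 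The decreasing statement on $[mid(I_C) + \Dl_P, mid(I_C) + L)$ is then immediate from (b).

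The main obstacle is this antipodal boundary in (c): since $P_C'$ vanishes at $x = mid(I_C) - L$, there is no single uniform positive lower bound on $P_C'$ across the whole claimed interval, and the uniform $O(\dl)$ Riemann error cannot be dominated by one strict-positivity estimate. The two-parameter choice $\eta$-then-$\dl$ sketched above is my remedy, and I will also need to verify carefully that the torus case-split for $P_C'$ produces no extra vanishings away from the antipode itself.
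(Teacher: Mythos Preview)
Your route differs from the paper's in a real way: the paper transfers \emph{all} of (c) and (d) from the continuous $P_C$ by applying Riemann--type bounds to the first and second derivatives (it argues $|\dl\,\PCd'(x)-P_C'(x)|<2E_p(M_2L_C+M_1)\dl$ and similarly for the second derivative, then uses strict sign bounds $P_C'>B_1$, $P_C''<-B_2$ on compact sets to conclude). You instead prove (d) directly from $f''<0$ on $[0,b]$ and the downward jumps $2f'(0)<0$ at the kinks, reserving the Riemann transfer only for (c). Your direct argument for (d) is cleaner, but it consumes the hypothesis $2L_C<b$, which is not among the assumptions of this lemma (only $L_C<L/2$ is); the paper's proof hides the same dependence inside the cited concavity of $P_C$ from~\cite{continuous_model_arxiv}, so this is really a gap in the lemma's statement rather than in either proof strategy. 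Your treatment of (a) is also more honest than the paper's: with $f'(0)<0$, each $p(\cdot|y_k)$ has a genuine corner at $y_k$, so $\PCd$ is only $C^2$ on $\setR\setminus\intdd$, which is what you prove. The paper simply asserts twice differentiability on all of $\setR$ from Assumption~\ref{ass:fg}, which is not correct as written.

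Where your proposal has a real gap is the antipodal patch in (c). You are right that $P_C'(mid(I_C)-L)=0$ (by the symmetry of $P_C$ about $mid(I_C)$ on the torus), so the paper's assertion of a uniform lower bound $P_C'>B_1>0$ on the \emph{closed} interval $[mid(I_C)-L,\,mid(I_C)-\Dl_P]$ fails at the left endpoint; this is a genuine oversight in the paper's argument. But your proposed fix---``use concavity from (d) together with $\PCd(\middd)=\max\PCd$ to preserve nondecreasing behavior all the way to $mid(I_C)-L$''---does not work as stated: part (d) gives concavity only on $I_C=[mid(I_C)-L_C,\,mid(I_C)+L_C]$, which is disjoint from any neighborhood of the antipode since $L_C<L/2$, so it says nothing there. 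Worse, $\PCd$ is symmetric about $\middd$, not $mid(I_C)$, so its antipodal minimum sits at $\middd-L$, which can lie strictly to the right of $mid(I_C)-L$ (by up to $\dl$); on $[mid(I_C)-L,\,\middd-L]$ the function $\PCd$ is then non\emph{increasing}, and no choice of $\eta$ followed by $\dl$ repairs this. If you want strict increase on the full interval as stated, you need an argument that controls $\middd$ relative to $mid(I_C)$ (e.g.\ restrict to communities with $\middd\le mid(I_C)$), or---more in the spirit of the lemma---relax the left endpoint to $\middd-L$ and absorb the $O(\dl)$ discrepancy into $\Dl_P$.
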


\begin{proof}
We first prove part (a) and (b) of the lemma.

The result that $\PCd(x)$ is twice differentiable on $\setR$ follows directly from Assumption~\ref{ass:fg} which implies that the function $p(x|y)$ is twice differentiable in $x$ on $\setR$.

Under the assumption that
$$ \aCd(y) = E_p >0, \qquad y \in \intd,$$
the result that  $\PCd(x)$ is symmetric around $y_0 = \midd$ follows by the same argument given in~\cite{continuous_model_arxiv} to proof that the function
$$P_C(x) = E_p \int_{y \in I_C} p(x|y) dy$$
is symmetric around $mid(I_C)$.

We next consider  part (c) of the lemma.
If
$$ \alpha_C(y) = E_p >0, \qquad y \in I_C,$$
and 
$$L_C < \frac{L}{2},$$
then we have from the analysis in~\cite{continuous_model_arxiv} that $P_C(x)$ given by
$$P_C(x) = E_p \int_{y \in I_C} p(x|y) dy,$$
is strictly increasing on the interval $[ mid(I_C) - L, mid(I_C))$, and strictly decreasing on the interval $(mid(I_C), mid(I_C) + L)$. From the analysis in~\cite{continuous_model_arxiv}, we also have that the derivative of $P_C(x)$ is continuous on $\setR$. It then follows that there exists a constant $B_1 > 0$ such that
$$ \frac{d}{dx} P_C(x) > B_1, \qquad x \in [mid(I_C) - L, mid(I_C) - \Dl_P]$$
and 
$$ \frac{d}{dx} P_C(x) < - B_1, \qquad x \in [mid(I_C) + \Dl_P, mid(I_C) + L).$$
Furthermore, by Assumption~\ref{ass:fg}, there exists constants $M_1$ and $M_2$ such that
$$\left | \frac{d}{dx} p(x|y) \right | < M_1, \qquad x \in \setR, y \in I_C,$$
and
$$\left | \frac{d^2}{d^2x} p(x|y) \right | < M_2, \qquad x \in \setR, y \in I_C.$$
By the same argument as given in the proof for Lemma~\ref{lemma:Riemann_PCd}, it then follows that 
$$ \left | \frac{d}{dx} P_C(x) -  \dl \frac{d}{dx} \PCd(x)\right | < 2 E_p (M_2L_C+M_1) \dl.$$

Let
$$ \dl^{(1)}_0 = \frac{B_1}{2} \cdot \frac{1}{2 E_p (M_2 L_C+M_1)}.$$
It then follows that for $\dl$ such that
$$ 0 < \dl <  \dl^{(1)}_0,$$
we have that
$$\dl  \frac{d}{dx} \PCd(x) > \frac{B_1}{2} > 0, \qquad x \in [mid(I_C) - L, mid(I_C) - \Dl_P]$$
and 
$$ \dl \frac{d}{dx} \PCd(x) < - \frac{B_1}{2} < 0, \qquad x \in [mid(I_C) + \Dl_P, mid(I_C) + L).$$
This result implies that for every discrete interval community  $\dC=\Cd$ on $I_C$ with distance $\dl < \dl^{(1)}_0$ we have that $P_C(x)$ is strictly increasing on the interval $[mid(I_C) - L, mid(I_C) - \Dl_P]$, and strictly decreasing on the interval $[mid(I_C) + \Dl_P, mid(I_C) + L)$.

It remains to prove part (d) of the lemma. From the analysis in~\cite{continuous_model_arxiv} we have that the function $P_C(x)$ is strictly concave on $[mid(I_C) - L_C,mid(I_C)+L_C]$, and hence there exists a constant $B_2 > 0$ such that
$$\frac{d^2}{dx^2} P_C < - B_2, \qquad x \in [mid(I_C) - L_C,mid(I_C)+L_C].$$

By Assumption~\ref{ass:fg}, there exists a constant $M_3$ such that
$$\left | \frac{d^3}{d^3x} p(x|y) \right | < M_3, \qquad x,y \in I_C,$$
and by the same argument as given in the proof for Lemma~\ref{lemma:Riemann_PCd} it follows that
$$ \left | \frac{d^2}{dx^2} P_C(x) -  \dl \frac{d^2}{dx^2} \PCd(x) \right | < 2 E_p (M_3 L_C+M_2) \dl.$$
where $M_2$ is such that
$$\left | \frac{d^2}{d^2x} p(x|y) \right | < M_2, \qquad x \in \setR, y \in I_C.$$

Let
$$ \dl^{(2)}_0 = \frac{B_2}{2} \cdot \frac{1}{2 E_p ( M_3 L_C+M_2)}.$$
It then follows that for $\dl$ such that
$$ 0 < \dl <  \dl^{(2)}_0,$$
we have that
$$\dl  \frac{d^2}{dx^2} P_{\dl,C}(x) < - \frac{B_2}{2} < 0, \qquad x \in [mid(I_C) - L_C,mid(I_C) + L_C].$$
This result implies that for every discrete interval community  $\dC=\Cd$ on $I_C$ with distance $\dl < \dl^{(2)}_0$ 
we have that $\PCd(x)$ is strictly concave in $x$ on the interval $[mid(I_C) - L_C,mid(I_C)+L_C]$.

The result of the lemma then follows by setting
$$\dlO = \min\{ \dl^{(1)}_0,  \dl^{(2)}_0 \}.$$

\end{proof}

\section{Proof of Proposition~\ref{prop:PCd}}\label{app:proof_PCd}
In this appendix we prove Proposition~\ref{prop:PCd} using the results of Appendix~\ref{app:PCd}. Recall that Proposition~\ref{prop:PCd} characterizes the content demand function $\PCd(x)$ for a discrete interval community  $\dC = \Cd$ that is part of a \erNash as given by  Proposition~\ref{prop:dnash}.

We obtain the result of  Proposition~\ref{prop:PCd} using the results of Appendix~\ref{app:PCd} as follows. Let
$$\dcommStrucNash$$
be a \erNash as given by Proposition~\ref{prop:dnash}. Then we have that for every discrete interval community $\dC = \Cd \in \sCdNash$ that
$$ \aCd(y) = E_p, \qquad y \in \intdd,$$
and
$$ L_C < \frac{L}{2},$$
where $L_C$ is the length of the interval $I_C$ such that
$$ \intdd = \Add \cap I_C.$$
As a result, the properties of a discrete interval community  $\dC = \Cd \in \sCdNash$ satisfy the assumptions in Lemma~\ref{lemma:PCd}. Proposition~\ref{prop:PCd} then follows immediately from  Lemma~\ref{lemma:PCd}.

\newpage
\section{Properties of $\xd(y)$, $\Ddy$ and $q(\xd(y)|y)$}\label{app:xd_Ddy}
In this section, we consider a discrete interval community  $\dC =\Cd$ on $I_C$ with distance $\dl$, and study the corresponding optimization problem
$$\max_{x \in \setR} q(x|y)\PCd(x), \qquad y \in I_C,$$
where
$$\PCd(x) =  \sum_{y \in \intdd}  \aCd(y) p(x|y), \qquad x \in \setR.$$
In particular, we characterize  when the above problem has an unique optimal solution $\xd(y)$, and study the function $\xd(y)$, $y \in I_C$.

Before doing this, we consider in the next subsection an interval community $C=(I_C,I_C)$ as defined in~\cite{continuous_model_arxiv}. That is, we consider a community $C = (C_d,C_s)$ where the set of content consumers and content producers is equal to the  interval $I_C \subset \setR$, i.e. we have that
$$C_d = C_s = I_C.$$
For this (continuous) interval community $C=(I_C,I_C)$, we characterize the function $\xs(y)$ given by
$$\xs(y) =  \arg \max_{x \in \setR} q(x|y) P_C(x), \qquad y \in I_C,$$
where
$$P_C(x) = E_p \int_{I_C} p(x|y) dy.$$
We then use these proprieties to characterize the function $\xd(y)$, $y \in I_C$.


\subsection{Properties of $\xs(y)$}

For a given interval community $C=(I_C,I_C)$, let the function $\xs(y)$ be given by
$$\xs(y) =  \arg \max_{x \in \setR} q(x|y) P_C(x), \qquad y \in I_C,$$
where
$$ P_C(x) = E_p \int_{I_C} p(x|y) dy, \qquad x \in \setR.$$
In the following we show that if for a given $y \in I_C$ we have that the distance
$$ || \xs(y) - x||$$
is large, then the difference between the values of $q(\xs(y)|y)P_C(\xs(y))$ and $ q(x|y)P_C(x)$ must  also be  large.
More precisely, we have the following result.
\begin{lemma}\label{lemma:qP_C}
Consider a constant $E_p$, $0 < E_p \leq 1$, and an interval community $C = (I_C,I_C)$ where $I_C  =   [ mid(I_C) - L_C, mid(I_C) + L_C) \subset \setR$ is an interval in $\setR$. If we have that
$$ \alpha_C(y) = E_p >0, \qquad y \in I_C,$$
and 
$$L_C < \frac{L}{2},$$
then the following is true.
For every $\Dl_x$, $0 < \Dl_x < \frac{L}{2}$,  there exists a $\Dl_Q > 0$ such that  if
$$ ||\xs(y)-x|| \geq \Dl_x, \qquad x \in \setR, y \in I_C,$$
then we have that
$$\big | q(\xs(y)|y)P_C(\xs(y)) - q(x|y)P_C(x) \big |  \geq  \Dl_Q,$$
where
$$\xs(y) =  \arg \max_{x \in \setR} q(x|y) P_C(x), \qquad y \in I_C,$$
and
$$ P_C(x) = E_p \int_{I_C} p(x|y) dy, \qquad x \in \setR.$$
\end{lemma}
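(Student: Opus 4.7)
The plan is to read the statement as a uniform separation property for the argmax: on the compact parameter set $I_C$, the objective $h_y(x) := q(x|y) P_C(x)$ drops by a fixed amount as soon as one moves at least $\Delta_x$ away from the maximizer $\xs(y)$. The proof will therefore be a compactness-plus-continuity argument, where the heavy lifting (uniqueness and shape of $\xs(y)$) is inherited from the continuous-agent companion paper~\cite{continuous_model_arxiv}.

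First, I would invoke the corresponding analysis in~\cite{continuous_model_arxiv} to obtain, under the hypotheses $\alpha_C(y) = E_p$, $y \in I_C$, and $L_C < L/2$, that $h_y(x) = q(x|y) P_C(x)$ has a \emph{unique} global maximizer $\xs(y) \in \setR$ for every $y \in I_C$. Strict concavity of $P_C$ on $[mid(I_C) - L_C, mid(I_C) + L_C]$ (established there and invoked in Lemma~\ref{lemma:PCd}) together with concavity of $g$ on $supp(g)$ from Assumption~\ref{ass:fg} is what drives this uniqueness; I would cite it rather than redo it. Next I would note that $h_y(x)$ is jointly continuous in $(x,y)$ on the compact space $\setR \times I_C$ (since $P_C$ is continuous in $x$ and $q(x|y) = g(d(x,y))$ is jointly continuous), and hence uniformly continuous. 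Uniqueness of the maximizer together with Berge's maximum theorem then yields that $y \mapsto \xs(y)$ is continuous on $I_C$.

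Second, I would introduce the gap function
$$G(y) \;=\; h_y(\xs(y)) \;-\; \sup\bigl\{\, h_y(x) \,:\, x \in \setR,\ \|x - \xs(y)\| \geq \Delta_x \,\bigr\}, \qquad y \in I_C.$$
The feasible set $K(y) = \{x \in \setR : \|x - \xs(y)\| \geq \Delta_x\}$ is a closed subset of the compact torus $\setR$, so the supremum is attained. Because $\xs(y)$ is the \emph{unique} global maximizer of $h_y$, no point of $K(y)$ can tie it, hence $G(y) > 0$ for every $y \in I_C$. Continuity of $\xs(\cdot)$ makes $K(\cdot)$ continuous in the Hausdorff sense, and continuity of $h$ together with another application of Berge gives that the restricted supremum, and thus $G$, is continuous on $I_C$.

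Third, compactness of $I_C$ lets me set
$$\Delta_Q \;:=\; \min_{y \in I_C} G(y) \;>\; 0,$$
and by the definition of $G$ this $\Delta_Q$ satisfies the claim: for any $y \in I_C$ and any $x \in \setR$ with $\|\xs(y) - x\| \geq \Delta_x$,
$$q(\xs(y)|y) P_C(\xs(y)) - q(x|y) P_C(x) \;\geq\; G(y) \;\geq\; \Delta_Q,$$
and the absolute value in the statement is harmless since $\xs(y)$ is the global maximizer. The main obstacle is the rigorous use of Berge's maximum theorem, which requires uniqueness of $\xs(y)$; I anticipate that the uniqueness step is the only non-routine ingredient and is handled by quoting~\cite{continuous_model_arxiv}. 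Everything else is a standard compact-domain argument.
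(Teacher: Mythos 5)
Your proof is correct, and it uses the same skeleton as the paper's — import uniqueness and continuity of $\xs(\cdot)$ from \cite{continuous_model_arxiv}, define a pointwise gap that is strictly positive by uniqueness, show it is continuous in $y$, and take a minimum over the compact set $I_C$ — but you handle the quantifier ``for all $x$ with $\|\xs(y)-x\|\geq\Delta_x$'' differently. The paper evaluates the drop only at the two boundary points $x_l(y,\Delta_x)=\xs(y)-\Delta_x$ and $x_h(y,\Delta_x)=\xs(y)+\Delta_x$, forms $F(y,\Delta_x)=\min\{F_l,F_h\}$, and then needs an extra step (monotonicity of $q(x|y)P_C(x)$ on $[\xs(y)-L/2,\xs(y)+L/2]$, again imported from \cite{continuous_model_arxiv}) to show that the gap at larger distances is no smaller than the gap at distance exactly $\Delta_x$. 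You instead define $G(y)$ as the drop to the supremum of $h_y$ over the entire closed arc $K(y)=\{x:\|x-\xs(y)\|\geq\Delta_x\}$ and invoke Berge's maximum theorem for continuity of $G$, which eliminates the need for the monotonicity step entirely. Your version is cleaner and more robust (it works for any jointly continuous objective with a unique maximizer, with no unimodality hypothesis), at the cost of citing Berge rather than staying with elementary calculus-style arguments; the paper's version is more explicit about which shape properties of $q(\cdot|y)P_C(\cdot)$ are really being used. One small point worth making explicit if you were to flesh this out: the correspondence $y\mapsto K(y)$ is both upper and lower hemicontinuous because on the one-dimensional torus $K(y)$ is a closed arc that rotates continuously with $\xs(y)$ and stays nonempty since $\Delta_x<L/2$ while the torus has diameter $L$; this is what licenses the second application of Berge for continuity of the restricted supremum.
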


\begin{proof}
Without loss of generality we assume in the following that the set $I_C$ is given by
$$I_C = [mid(I_C) - L_C, mid(I_C) + L_C].$$
From the analysis in~\cite{continuous_model_arxiv} we have that the function $q(x|y)P_C(x)$ is symmetric with respect to $mid(I_C)$, and it suffices to prove the result for the case where
$$ y \in [mid(I_C) - L_C, mid(I_C)].$$

By assumption we have that
$$ \alpha_C(y) = E_p >0, \qquad y \in I_C,$$
and 
$$L_C < \frac{L}{2}.$$
From the analysis in~\cite{continuous_model_arxiv} we then have that the optimization problem
$$\xs(y) =  \arg \max_{x \in \setR} q(x|y) P_C(x), \qquad   y \in [mid(I_C) - L_C, mid(I_C)],$$
has a unique optimal solution, and
$$\xs(y) \in [y,mid(I_C)] \cap supp(q(\cdot|y)), \qquad y \in [mid(I_C) - L_C, mid(I_C)].$$

For $y \in [mid(I_C) - L_C, mid(I_C)]$ and $\Dl_x$ as given in the statement of the lemma,  let $x_l(y,\Dl_x)$ and  $x_h(y,\Dl_x)$ be given by
$$x_l(y,\Dl_x) = \xs(y) - \Dl_x$$
and
$$x_h(y,\Dl_x) = \xs(y) + \Dl_x.$$
Note that under the assumption that
$$0 < \Dl_x < \frac{L}{2}$$
and
$$ L_C < \frac{L}{2},$$
we have that
$$x_l(y,\Dl_x) \in (mid(I_C)-L,\xs(y))$$
and
$$x_h(y,\Dl_x) \in (\xs(y),mid(I_C)+L).$$

From the analysis in~\cite{continuous_model_arxiv}, we have that the function $\xs(y)$ is continuous in $y$ on $ [mid(I_C) - L_C, mid(I_C)]$, and it follows that the functions  $x_l(y,\Dl_x)$ and $x_h(y,\Dl_x)$ are continuous in $y$ on  $ [mid(I_C) - L_C, mid(I_C)]$. Furthermore, the functions  $x_l(y,\Dl_x)$ and $x_h(y,\Dl_x)$ are continuous in $\Dl_x$ on $[0,L/2)$ with
$$\lim_{\Dl_x \to 0} x_l(y,\Dl_x) = \lim_{\Dl_x \to 0} x_h(y,\Dl_x) = \xs(y), \qquad y \in [mid(I_C) - L_C, mid(I_C)].$$

Using the functions  $x_l(y,\Dl_x)$ and $x_h(y,\Dl_x)$ that we defined above, let the functions $F_l(y,\Dl_x)$ and  $F_h(y,\Dl_x)$ be given by
$$F_l(y,\Dl_x) = q(\xs(y)|y)P_C(\xs(y)) -  q(x_l(y,\Dl_x)|y)P_C(x_l(y,\Dl_x))$$
and
$$F_h(y,\Dl_x) = q(\xs(y)|y)P_C(\xs(y)) -  q(x_h(y,\Dl_x)|y)P_C(x_h(y,\Dl_x)).$$
From the analysis in~\cite{continuous_model_arxiv} we have that the function
$$ q(\xs(y)|y)P_C(\xs(y)), \qquad y \in  [mid(I_C) - L_C, mid(I_C)],$$
is  continuous in $y$  on $I_C$. Combining this result with the properties of the functions  $x_l(y,\Dl_x)$ and $x_h(y,\Dl_x)$ that we obtained above, it follows that  $F_l(y,\Dl_x)$ and  $F_h(y,\Dl_x)$ are continuous in $y$ on  $[mid(I_C) - L_C, mid(I_C)]$, as well as continuous in $\Dl_x$ on $ [0,\frac{L}{2})$, with
$$ \lim_{\Dl_x \to 0} F_l(y,\Dl_x) =   \lim_{\Dl_x \to 0} F_h(y,\Dl_x) = 0, \qquad y \in I_C.$$
From the analysis in~\cite{continuous_model_arxiv} we have that there exists a unique optimal solution $\xs(y)$ for the optimization problem
$$\xs(y) =  \arg \max_{x \in \setR} q(x|y) P_C(x), \qquad y \in [mid(I_C) - L_C, mid(I_C)],$$
and it follows that for
$$ 0 <\Dl_x < \frac{L}{2}$$
we have
$$ F_l(y,\Dl_x) > 0, \qquad y \in [mid(I_C) - L_C, mid(I_C)],$$
and
$$ F_h(y,\Dl_x) > 0, \qquad y \in [mid(I_C) - L_C, mid(I_C)].$$
Let the function $F(y,\Dl_x)$ be given by 
$$F(y,\Dl_x) = min \Bsl  F_l(y,\Dl_x),F_h(y,\Dl_x) \Bsr.$$
From the above results it follows that for
$$ 0 <\Dl_x < \frac{L}{2}$$
we have
\begin{equation}\label{eq:F_pos}
F(y,\Dl_x) > 0, \qquad y \in  [mid(I_C) - L_C, mid(I_C)].
\end{equation}
Finally, we obtain from the above results that the function $F(y,\Dl_x)$ is continuous in $y$ on  $[mid(I_C) - L_C, mid(I_C)]$, as well as continuous in $\Dl_x$  on $ [0,\frac{L}{2})$, with
$$ \lim_{\Dl_x \to 0} F(y,\Dl_x) = 0, \qquad y \in I_C.$$

Having made the above definitions, we can now prove the result of the lemma. We proceed as follows. 
Given $\Dl_x$, $ 0 <\Dl_x < \frac{L}{2}$, let  the function $\Dl_Q(\Dl_x)$ be given by
$$\Dl_Q(\Dl_x) = \min_{y \in [mid(I_C) - L_C, mid(I_C)]} F(y, \Dl_x).$$
Note that the function $\Dl_Q(\Dl_x)$ is well defined as the function $F(y, \Dl_x)$ is continuous in $y$ on $I_C$.

We then have that from Eq.~\eqref{eq:F_pos} that
$$\Dl_Q(\Dl_x) > 0, \qquad  0 <\Dl_x < \frac{L}{2}.$$

For $y \in [mid(I_C) - L_C, mid(I_C)]$ and $x \in \setR$ such that 
$$ ||\xs(y)-x|| = \Dl_x, \qquad y \in  [mid(I_C) - L_C, mid(I_C)],$$
we obtain from the above results that
$$\big | q(\xs(y)|y)P_C(\xs(y)) - q(x|y)P_C(x) \big |  \geq  \Dl_Q(\Dl_x).$$
To see this, not that in this case we have by definition that
$$ F(y,\Dl_x) \geq   \min_{z \in [mid(I_C) - L_C, mid(I_C)]} F(z, \Dl_x) = \Dl_Q(\Dl_x)$$
and
\begin{eqnarray*}
\Big | q(\xs(y)|y)P_C(\xs(y)) - q(x|y)P_C(x) \Big | \geq  \min \Bsl F_l(y,\Dl_x),F_h(y,\Dl_x) \Bsr =  F(y,\Dl_x).
\end{eqnarray*}
It then follows that
$$\Big | q(\xs(y)|y)P_C(\xs(y)) - q(x|y)P_C(x) \Big | \geq \Dl_Q(\Dl_x).$$

To complete the proof, it remains to show that the function $\Dl_Q(\Dl_x)$ is non-decreasing in $\Dl_x$ on $(0, L/2)$, or equivalently that the function  $F_l(y,\Dl_x)$ and  $F_h(y,\Dl_x)$ are non-decreasing in $\Dl_x$ on $(0, L/2)$. 
This is indeed the case, as from the analysis in~\cite{continuous_model_arxiv} we have that for
$$ y \in [mid(I_C) - L_C, mid(I_C)]$$
the function $q(x|y)P_C(x)$ is non-decreasing in $x$ on $[\xs(y) - L/2, \xs(y)]$, and non-increasing in $x$ on $[\xs(y), \xs(y) + L/2]$.

This completes the proof of the lemma.
\end{proof}

We use Lemma~\ref{lemma:qP_C} in the next subsection to characterize the function $\xd(y)$. 


\subsection{Properties of $\xd(y)$}
In this subsection we characterize the function $\xd(y)$.  Our first result for the function $\xd(y)$ states that if a  discrete interval community $C = \Cd$ on a given interval $I_C$ has a small enough distance $\dl$, then the difference
$$ || \xs(y) - \xd(y) ||, \qquad y \in I_C,$$
is small. More precisely, we have the following result.
\begin{lemma}\label{lemma:xd_xs}
Consider a constant $E_p$, $0< E_p \leq 1$, and an interval
$$I_C  =   [ mid(I_C) - L_C, mid(I_C) + L_C) \subset \setR.$$ 
For every $\Dl_{x^*}$, $0 < \Dl_{x^*}$, there exists a $\dlO >0$ such that for all discrete interval communities  $\dC =\Cd$ on $I_C$ with distance $\dl < \dlO$ the following is true.
If 
$$\aCd(y) = E_p, \qquad y \in \intd$$
and
$$L_C < \frac{L}{2},$$
then we have that
$$ || \xs(y) - \xd(y) || < \Dl_{x^*}, \qquad y \in I_C,$$
where
$$ \xd(y) = \arg \max_{x \in \setR} q(x|y) \PCd(x)$$
with
$$ \PCd(x) = E_p \sum_{y \in \intdd} p(x|y),$$
and 
$$ \xs(y) = \arg \max_{x \in \setR} q(x|y) P_C(x),$$
with
$$ P_C(x) = E_p \int_{I_C} p(x|y) dy, \qquad x \in \setR.$$
\end{lemma}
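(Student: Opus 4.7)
The plan is to prove this by combining the two preceding results in a straightforward contradiction argument. The scaled demand $\dld \PCd(x)$ is a Riemann approximation of $P_C(x)$ by Lemma~\ref{lemma:Riemann_PCd}, so the objectives $q(x|y)\PCd(x)$ (up to the constant rescaling by $\dld$) and $q(x|y)P_C(x)$ are uniformly close. Since the continuous objective has a unique maximizer with a ``well-defined peak'' quantified by Lemma~\ref{lemma:qP_C}, the discrete maximizer cannot wander far from $\xs(y)$ when $\dl$ is small.

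First I would reduce without loss of generality to the case $\Dl_{x^*} < L/2$ (a smaller $\Dl_{x^*}$ only makes the conclusion stronger). Then I would invoke Lemma~\ref{lemma:qP_C} with $\Dl_x = \Dl_{x^*}$ to obtain a constant $\Dl_Q > 0$, independent of $y \in I_C$, such that whenever $\|x - \xs(y)\| \geq \Dl_{x^*}$,
$$q(\xs(y)|y)P_C(\xs(y)) - q(x|y)P_C(x) \geq \Dl_Q$$
(the absolute value can be dropped since $\xs(y)$ is the maximizer of $q(\cdot|y)P_C(\cdot)$). Next I would invoke Lemma~\ref{lemma:Riemann_PCd} to bound
$$\sup_{x \in \setR} \bigl| \dld\PCd(x) - P_C(x) \bigr| \leq 2 E_p (ML_C + 1)\,\dl,$$
and since $0 \leq q(x|y) \leq 1$, this gives $\sup_{x,y} |q(x|y)\dld\PCd(x) - q(x|y)P_C(x)| \leq 2E_p(ML_C+1)\dl$ as well.

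Now for the main step, fix $y \in I_C$ and write $G(x) = q(x|y)P_C(x)$ and $\tilde G(x) = q(x|y)\dld\PCd(x)$. Note that $\xd(y)$ maximizes $q(\cdot|y)\PCd(\cdot)$, equivalently maximizes $\tilde G(\cdot)$, so $\tilde G(\xd(y)) \geq \tilde G(\xs(y))$. Splitting the difference,
$$G(\xs(y)) - G(\xd(y)) = \bigl[G(\xs(y)) - \tilde G(\xs(y))\bigr] + \bigl[\tilde G(\xs(y)) - \tilde G(\xd(y))\bigr] + \bigl[\tilde G(\xd(y)) - G(\xd(y))\bigr],$$
the middle bracket is $\leq 0$, and each of the outer brackets is bounded in absolute value by $2 E_p(ML_C + 1)\dl$, yielding
$$G(\xs(y)) - G(\xd(y)) \leq 4 E_p(ML_C + 1)\,\dl.$$
If we had $\|\xs(y) - \xd(y)\| \geq \Dl_{x^*}$ for some $y$, Lemma~\ref{lemma:qP_C} would force $G(\xs(y)) - G(\xd(y)) \geq \Dl_Q$, contradicting the displayed bound as soon as
$$\dl < \dl_0 := \frac{\Dl_Q}{4 E_p(ML_C + 1)}.$$

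The only mild subtlety, and where I would pay attention, is making sure the choice of $\Dl_Q$ from Lemma~\ref{lemma:qP_C} is truly uniform in $y \in I_C$ (it is, since the lemma as stated quantifies over all $y \in I_C$), and that the Riemann-sum error bound of Lemma~\ref{lemma:Riemann_PCd} does not depend on $y$ either. Since both uniformities hold, the single threshold $\dl_0$ above works for every $y \in I_C$ simultaneously, which gives the lemma. There is no serious analytic obstacle beyond correctly assembling these two ingredients.
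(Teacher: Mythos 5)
Your proof is correct and follows essentially the same contradiction argument as the paper: choose $\Dl_Q$ from Lemma~\ref{lemma:qP_C}, use the Riemann-sum bound from Lemma~\ref{lemma:Riemann_PCd} together with $q \leq 1$ to control the telescoped difference $q(\xs(y)|y)P_C(\xs(y)) - q(\xd(y)|y)P_C(\xd(y))$, and derive a contradiction when $\dl$ is small enough. The only difference is cosmetic (your $G/\tilde G$ notation and the explicit constant $\dl_0 = \Dl_Q / (4E_p(ML_C+1))$ versus the paper's intermediate $\Dl_P = \Dl_Q/2$), and you add the WLOG reduction $\Dl_{x^*} < L/2$ that is needed to invoke Lemma~\ref{lemma:qP_C} but is left implicit in the paper.
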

Note that in the statement of the Lemma, the parameter $\dlO$ depends on the half-length $L_C$ of the interval $I_C$.

\begin{proof}
Let $\Dl_{x^*}$ be as given in the statement of the lemma. Then by Lemma~\ref{lemma:qP_C} there exists a $\Dl_Q >0$ such that if
$$ ||\xs(y)-x || \geq \Dl_{x^*}, \qquad y \in I_C,$$
then we have that
$$\Big | q(\xs(y)|y)P_C(\xs(y)) - q(x|y)P_C(x) \Big |  \geq \Dl_Q.$$
Using this $\Dl_Q$, let $\Dl_P$ be given by
$$ \Dl_P = \frac{1}{2}  \Dl_Q.$$
By Lemma~\ref{lemma:Riemann_PCd}, there exists a $\dlO>0$ such that for all discrete interval communities on $I_C$ with distance $\dl$, $0 < \dl < \dlO$, we have that
$$ \Big | P_{c}(x) - \dld P_{\dl,C}(x) \Big | < \Dl_P, \qquad x \in \setR.$$

Using these definitions of $\Dl_Q$, $\Dl_P$, and $\dlO$, we prove the result of the lemma by contradiction. That is we assume that  there exists a discrete interval communities  $C=(\intdd,\intsd)$  on $I_C$ with distance $\dl < \dlO$ and a $y \in I_C$ such that
$$ || \xs(y) - \xd(y) || \geq \Dl_{x^*},$$
and show that this leads to a contradiction. 

Suppose that there exists a discrete interval communities  $C=(\intdd,\intsd)$  on $I_C$ with distance $\dl < \dlO$ and a $y \in I_C$ such that
$$ || \xs(y) - \xd(y) || \geq \Dl_{x^*}.$$
Note that for this community  $\dC=\Cd$ and for this $y$, we have that
$$ q(\xd(y)|y)\PCd(\xd(y)) -  q(\xs(y)|y)\PCd(\xs(y)) \geq 0,$$
as by definition we have that
$$  \xd(y) = \arg \max_{x \in \setR} q(x|y) \PCd(x).$$
This implies that
\begin{eqnarray*}
&&  \dld \Bsbl q(\xd(y)|y)\PCd(\xd(y)) -  q(\xs(y)|y)\PCd(\xs(y)) \Bsbr \\
&=&   
\dld q(\xd(y)|y) \PCd(\xd(y)) -  q(\xd(y)|y)P_{C}(\xd(y)) + \cdots \\
&& +  q(\xd(y)|y)P_{C}(\xd(y)) - q(\xs(y)|y)P_{C}(\xs(y)) + \cdots \\
&& +  q(\xs(y)|y)P_{C}(\xs(y)) - \dld q(\xs(y)|y) \PCd(\xs(y))\\
&\geq& 0,
\end{eqnarray*}
or
\begin{eqnarray*}
&& q(\xd(y)|y) \Big [ \dld \PCd(\xd(y)) - P_C(\xd(y)) \Big ] 
+  q(\xs(y)|y) \Big [ P_{C}(\xs(y)) - \dld \PCd(\xs(y)) \Big ] \\
&& \geq  q(\xs(y)|y)P_{C}(\xs(y)) -  q(\xd(y)|y)P_{C}(\xd(y)).
\end{eqnarray*}
Note that
$$  q(\xs(y)|y)P_{C}(\xs(y)) -  q(\xd(y)|y)P_{C}(\xd(y)) \geq 0$$
as we have that
$$  \xs(y) = \arg \max_{x \in \setR} q(x|y) P_C(x),$$
and we obtain that
\begin{eqnarray}\nonumber
&&  q(\xd(y)|y) \Big [ \dld \PCd(\xd(y)) - P_C(\xd(y)) \Big ] 
+  q(\xs(y)|y) \Big [ P_{C}(\xs(y)) - \dld \PCd(\xs(y)) \Big ] \\ \label{eq:qP_pos}
&& \geq  q(\xs(y)|y)P_{C}(\xs(y)) -  q(\xd(y)|y)P_{C}(\xd(y)) \geq 0.
\end{eqnarray}
By construction we have for $\dl$, $0 < \dl < \dlO$, that
\begin{eqnarray}\label{eq:Dl_P} 
&&  q(\xd(y)|y) \Big ( \dld \PCd(\xd(y)) - P_C(\xd(y)) \Big ) 
+  q(\xs(y)|y) \Big ( P_{C}(\xs(y)) - \dld \PCd(\xs(y)) \Big ) \\ \nonumber 
&& \leq q(\xd(y)|y) \Big | \dld \PCd(\xd(y)) - P_C(\xd(y)) \Big | 
+  q(\xs(y)|y) \Big | P_{C}(\xs(y)) - \dld \PCd(\xs(y)) \Big |\\ \nonumber
&& < 2 \Dl_P.
\end{eqnarray}
Recall that we chose $\Dl_Q$ and $\Dl_P$ such that
$$  2 \Dl_P = \Dl_Q,$$
and Eq.~\eqref{eq:qP_pos}~and~Eq.~\eqref{eq:Dl_P} imply that for all $\dl$ such that
$$ 0 < \dl < \dlO$$
we have that
\begin{equation}\label{eq:Dl_Q1}
q(\xs(y)|y)P_{C}(\xs(y)) -  q(\xd(y)|y)P_{C}(\xd(y)) < 2 \Dl_P = \Dl_Q.
\end{equation}
However, the result of Eq.~\eqref{eq:Dl_Q1} leads to a contradiction. To see this, note that we chose $\Dl_Q$ such that for the case where
$$||\xs(y) - \xd(y)|| \geq \Dl_x^*$$
we have that
\begin{equation}\label{eq:Dl_Q2}
q(\xs(y)|y)P_{C}(\xs(y)) -  q(\xd(y)|y)P_{C}(\xd(y)) \geq \Dl_Q,
\end{equation}
where we use that fact that
$$ q(\xs(y)|y)P_{C}(\xs(y)) \geq  q(\xd(y)|y)P_{C}(\xd(y))$$
as we have that
$$  \xs(y) = \arg \max_{x \in \setR} q(x|y) P_C(x).$$
The condition in Eq.~\eqref{eq:Dl_Q2} which we obtain by construction contradicts  the result of Eq.~\eqref{eq:Dl_Q1}, which we obtained under the assumption that there exists a discrete interval communities  $C=(\intdd,\intsd)$  on $I_C$ with distance $\dl < \dlO$ and a $y \in I_C$ such that
$$ || \xs(y) - \xd(y) || \geq \Dl_{x^*}.$$

This implies that  there can not exist a discrete interval communities  $C=(\intdd,\intsd)$  on $I_C$ with distance $\dl < \dlO$ and a $y \in I_C$ such that
$$ || \xs(y) - \xd(y) | \geq \Dl_{x^*}.$$
The result of the lemma then follows.
\end{proof}

The next result provides additional properties for the  function $\xd(y)$. In particular, it provides conditions under which the optimization problem
$$ \xd(y) = \arg \max_{x \in \setR} q(x|y) \PCd(x), \qquad y \in I_C,$$
has a unique solution.

\begin{lemma}\label{lemma:xd}
Consider  a constant $E_p$, $0< E_p \leq 1$, and an interval
$$I_C  =   [ mid(I_C) - L_C, mid(I_C) + L_C) \subset \setR.$$ 
For every $\Dl_{x^*}$,
$$ 0< \Dl_{x^*} < L_C,$$
there exists a $\dlO >0$ such that for all discrete interval communities   $\dC=\Cd$ on $I_C$ with distance $\dl < \dlO$ the following is true.
If we have that 
$$ \aCd(y) = E_p, \qquad y \in \intdd,$$
and 
$$L_C < \frac{L}{2},$$
then the solution $\xd(y)$ to the optimization problem 
$$ \xd(y) = \arg \max_{x \in \setR} q(x|y) \PCd(x), \qquad y \in I_C,$$
where
$$\PCd(x) = E_p \sum_{y \in \intdd} p(x|y),$$
has the properties that
\begin{enumerate}
\item[(a)] there exists a unique optimal solution $\xd(y)$, $y \in I_C$, given by the unique solution to the equation
$$  q'(x|y) \PCd(x) +  q(x|y) \PCd'(x) = 0,  \qquad x \in I_C,$$
where $\PCd'(x)$ is the derivative of $\PCd(x)$ with respect to $x$. 
\item[(b)] for $y \in [mid(I_C) - L_C, mid(I_C) - \Dl_{x^*}]$, we have that 
$$ \xd(y)  \in (y, mid(I_C)) \cap supp(q(\cdot|y)).$$

\item[(c)] for  $y \in [mid(I_C) + \Dl_{x^*},mid(I_C) + L_C]$,  we have that 
$$ \xd(y)\in (mid(I_C),y) \cap supp(q(\cdot|y)).$$

\item[(d)] the function $\xd(y)$ is strictly increasing and differentiable on the interval
$$[ mid(I_C) - L_C, mid(I_C) - \Dl_{x^*}] \cup [mid(I_C) + \Dl_{x^*}, mid(I_C) + L_C].$$
\end{enumerate}
\end{lemma}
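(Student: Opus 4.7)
The plan is to derive each of the four assertions by pulling back the corresponding property of $\xs(y)$ (established in~\cite{continuous_model_arxiv} for the continuous case) through the approximation bridges Lemma~\ref{lemma:xd_xs} and Lemma~\ref{lemma:PCd}. Fix $\Dl_{x^*} \in (0, L_C)$. First I would choose $\dlO > 0$ small enough that, for every discrete interval community on $I_C$ with distance $\dl < \dlO$, three conditions hold simultaneously: (i) $\PCd(x)$ inherits the symmetry, monotonicity, and strict concavity properties of Lemma~\ref{lemma:PCd}, with a uniform negative upper bound on $\PCd''$ over $[mid(I_C) - L_C, mid(I_C) + L_C]$; (ii) $||\xs(y) - \xd(y)|| < \Dl_{x^*}/4$ for every $y \in I_C$, by Lemma~\ref{lemma:xd_xs}; (iii) the first two $x$-derivatives of $\dld \PCd(x)$ approximate those of $P_C(x)$ uniformly on $I_C$, via Riemann-sum bounds analogous to Lemma~\ref{lemma:Riemann_PCd} applied to $\partial_x p(x|\cdot)$ and $\partial_x^2 p(x|\cdot)$, both bounded by Assumption~\ref{ass:fg}.

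For part~(a) I would study $H(x) := q(x|y)\PCd(x)$ and characterize its critical points via the first-order equation $H'(x) = q'(x|y)\PCd(x) + q(x|y)\PCd'(x) = 0$. Existence of an interior maximizer follows from continuity together with the continuous-case fact that $\xs(y)$ lies strictly inside $supp(q(\cdot|y)) \cap (mid(I_C) - L, mid(I_C) + L)$, transported by (ii). For uniqueness I would establish $H''(x) < 0$ uniformly on a neighborhood of $\xs(y)$ of radius $\Dl_{x^*}/2$: expanding $H'' = q'' \PCd + 2 q' \PCd' + q \PCd''$, the two outer terms $q''\PCd$ and $q\PCd''$ are uniformly negative by Assumption~\ref{ass:fg} and (i), while the cross term $2 q' \PCd'$ is controlled by the continuous bound on $P_C'$ together with (iii). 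Any critical point in this neighborhood must then coincide with the unique maximizer, and step (ii) rules out critical points outside it.

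For parts~(b)~and~(c) I would combine the continuous-case inclusion $\xs(y) \in (y, mid(I_C)) \cap supp(q(\cdot|y))$ on $[mid(I_C)-L_C, mid(I_C) - \Dl_{x^*}]$ with compactness: continuity of $\xs(\cdot)$ on this closed subinterval gives a uniform lower bound $\eta(\Dl_{x^*}) > 0$ on the distance of $\xs(y)$ to the boundary of $(y, mid(I_C)) \cap supp(q(\cdot|y))$. Shrinking $\dlO$ further so that $||\xd(y) - \xs(y)|| < \eta(\Dl_{x^*})$ transfers the strict inclusion from $\xs(y)$ to $\xd(y)$; the right-hand half $[mid(I_C) + \Dl_{x^*}, mid(I_C) + L_C]$ is handled symmetrically using Lemma~\ref{lemma:PCd}(b). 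For part~(d) I would apply the implicit function theorem to $F(x,y) := q'(x|y)\PCd(x) + q(x|y)\PCd'(x)$ at the root $x = \xd(y)$: the bound $\partial_x F(\xd(y),y) = H''(\xd(y)) < 0$ established in step~(a) gives differentiability of $\xd(\cdot)$, and strict monotonicity follows by matching the sign of $\partial_y F$ to that of $(\xs)'(y)$ in~\cite{continuous_model_arxiv}, using (iii) to pass signs through the Riemann approximation.

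The main obstacle I anticipate is the uniform concavity estimate in step~(a): the cross term $2 q'(x|y)\PCd'(x)$ in $H''(x)$ is not sign-definite, so it must be dominated by the two negative outer terms, and the resulting quantitative bounds have to hold uniformly in $y \in I_C$ and uniformly in $\dl < \dlO$. Once this estimate is in hand the rest — localization of critical points, transfer of open inclusions, and implicit differentiation — reduces to standard continuation from the continuous case.
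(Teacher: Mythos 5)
Your overall architecture matches the paper's: transport the continuous-case properties of $\xs(y)$ from~\cite{continuous_model_arxiv} through the two bridges Lemma~\ref{lemma:xd_xs} (closeness of $\xd$ to $\xs$) and Lemma~\ref{lemma:PCd} (concavity/monotonicity of $\PCd$), then recover (a)--(d). Parts (b), (c), and (d) of your plan are essentially the paper's proof: the compactness/uniform-margin argument for transferring the strict inclusion $\xs(y)\in(y,mid(I_C))$ to $\xd(y)$ is a sound variant, and the implicit function theorem at a root where $\partial_x F <0$ is the standard way to get differentiability and monotonicity of $\xd$. One cosmetic difference: the paper does not use $\Dl_{x^*}/4$ as the closeness tolerance; it first extracts $\Dl_b$ with $2\Dl_b = y_0 - \xs(y_0 - \Dl_{x^*})>0$ and then demands $\|\xs-\xd\|<\Dl_b$. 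That specific choice is what guarantees $\xd(y)< mid(I_C)-\Dl_b$ on the left half-interval, so that $\PCd$ is simultaneously strictly increasing and strictly concave exactly where $\xd(y)$ lives; your compactness argument accomplishes the same thing, so this is a stylistic rather than a substantive difference.

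There is, however, a real gap in your step~(a). You establish $H''<0$ only on a neighborhood of $\xs(y)$ of radius $\Dl_{x^*}/2$, and then assert that ``step (ii) rules out critical points outside it.'' But Lemma~\ref{lemma:xd_xs} controls the location of the \emph{global maximizer} $\xd(y)$ only, not of arbitrary critical points of $H(x)=q(x|y)\PCd(x)$; nothing in (ii) prevents a secondary local maximum, a local minimum, or a stationary inflection far from $\xs(y)$, which would defeat the claim that the first-order equation $q'\PCd+q\PCd'=0$ has a unique solution on $I_C$. The paper avoids this entirely by not arguing through a pointwise second-derivative bound: it establishes that $\PCd$ is strictly concave (hence strictly log-concave) on $I_C$ and monotone outside a small central region, and then invokes the argument of~\cite{continuous_model_arxiv}, which is a global log-concavity argument --- the product of two positive strictly log-concave functions on $supp(q(\cdot|y))\cap I_C$ is strictly log-concave, so there is exactly one critical point, and the monotonicity of $\PCd$ handles $x\notin I_C$. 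If you want to carry your second-derivative route through, you must either (i) show $H''<0$ on the whole of $supp(q(\cdot|y))\cap I_C$, not just near $\xs(y)$, or (ii) supply a separate monotonicity argument excluding critical points outside the neighborhood; otherwise the route you yourself flag as ``the main obstacle'' is genuinely incomplete rather than merely delicate.
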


\begin{proof}
Let
$$y_0 = mid(I_C),$$
and let  $\Dl_{x^*}$ as given in the statement of the lemma. From the analysis in~\cite{continuous_model_arxiv} we have for $\xs(y)$, $y \in I_C$, given by
$$\xs(y) =  \arg \max_{x \in \setR} q(x|y) P_C(x), \qquad y \in I_C,$$
where
$$ P_C(x) = E_p \int_{I_C} p(x|y) dy, \qquad x \in \setR.$$
that
$$ \xs(y_0 - \Dl_{x^*}) \in (y_0 - \Dl_{x^*}, y_0),$$
and
$$ \xs(y_0 + \Dl_{x^*}) \in (y_0, y_0  + \Dl_{x^*}).$$
For $\Dl_{x^*}$ as given in the statement of the lemma,  let $\Dl_b$ be given by
$$ 2 \Dl_b =  y_0 -  \xs(y_0 - \Dl_{x^*}),$$
i.e. we have that
$$ \xs(y_0 - \Dl_{x^*}) = y_0 - 2 \Dl_b.$$
From the analysis in~\cite{continuous_model_arxiv}, we have that
$$ \xs(y_0 - \Dl_{x^*}) \in (y_0 - \Dl_{x^*}, y_0),$$
and we obtain that
$$\Dl_b > 0.$$
Consider the function $\Dy$ given by
$$\Dy = || y - \xs(y)||, \qquad y \in I_C.$$
Furthermore, from the analysis in~\cite{continuous_model_arxiv} we have that the function $\Dy$ given by
$$\Dy = || y - \xs(y)||, \qquad y \in I_C,$$
is symmetric on $I_C$ with respect to $y_0$ we have that
$$ \xs(y_0 + \Dl_{x^*}) = y_0 + 2 \Dl_b.$$

Using $\Dl_b$ as defined above, let $\dl^{(1)}_0$, $ \dl^{(1)}_0 > 0$, be such that for all discrete interval communities on $I_C$ with distance $\dl < \dl_0^{(1)}$ we have that
\begin{enumerate}
\item $\PCd(x)$ is strictly concave in $x$ on $I_C$,
\item $\PCd(x)$ is strictly increasing in  $x$ on $[mid(I_C) - L_C, mid(I_C) - \Dl_b]$, and 
\item  $\PCd(x)$ is strictly decreasing in $x$ on $[mid(I_C) + \Dl_b, mid(I_C) + L_C]$.
\end{enumerate}
Note that such a constant  $\dl^{(1)}_0$ exists by Lemma~\ref{lemma:PCd}.

Furthermore, let  $\dl^{(2)}_0$, $ \dl^{(2)}_0 > 0$, be such that  for all discrete interval communities on $I_C$ with distance $\dl < \dl^{(2)}_0$ we have that
$$ || \xs(y) - \xd(y) || < \Dl_b, \qquad y \in I_C.$$
Note that such a constant  $\dl^{(2)}_0$ exists by Lemma~\ref{lemma:xd_xs}.

Let 
$$\dlO = \min\Bsl  \dl^{(1)}_0,  \dl^{(2)}_0 \Bsr.$$

We first prove Property (a) of the lemma. To do that, note that  by construction the function $\PCd(x)$ is strictly concave on $I_C$, strictly increasing on the interval  $[mid(I_C) - L_C, mid(I_C) - \Dl_b]$, and strictly decreasing on the interval  $[mid(I_C) + \Dl_b, mid(I_C) + \Dl_b]$. To prove Property (a) of the lemma, we can then use the same argument as given in~\cite{continuous_model_arxiv} to prove the result that under the assumption as given in the statement of the lemma we have that there exists a unique solution $\xd(y)$, $y \in I_C$, to the optimization problem
$$\xs(y) =  \arg \max_{x \in \setR} q(x|y) P_C(x), \qquad y \in I_C,$$
and $\xs(y)$ is given by the unique solution to the equation
$$  q'(x|y) P_C(x) +  q(x|y) P_C'(x) = 0,  \qquad x \in I_C,$$

Next we proof Property (b) and (d) for the case where
$$ y \in [mid(I_C) - L_C, mid(I_C) - \Dl_{x^*}].$$
 Property (c) and (d) for the case where $y \in [mid(I_C) + \Dl_{x^*}, mid(I_C) + L_C]$ can be proven using the same argument. 

From the analysis in~\cite{continuous_model_arxiv} we have  that the function $\xs(y)$ is strictly increasing on $I_C$, and we obtain that
$$ \xs(y) \leq \xs(y_0 - \Dl_{x^*}) =  mid(I_C) - 2\Dl_b, \qquad y \in [mid(I_C) - L_C,  mid(I_C) - \Dl_{x^*}].$$
Furthermore, by construction we have 
$$ || \xs(y) - \xd(y) || < \Dl_b, \qquad y \in I_C,$$
and it follows that
$$ \xd(y) <  mid(I_C) - \Dl_b, \qquad y \in [mid(I_C) - L_C,  mid(I_C) - \Dl_{x^*}].$$
Using this result with the fact that by construction the function $\PCd(x)$ is strictly increasing and strictly concave in $x$ on the interval  $[mid(I_C) - L_C, mid(I_C) - \Dl_b]$, we can then use the same argument to prove Property (b) as given in~\cite{continuous_model_arxiv} to prove the result that under the assumption as given in the statement of the lemma we have that
$$ \xs(y)  \in (y, mid(I_C)) \cap supp(q(\cdot|y)).$$
Furthermore, using the same argument as given in~\cite{continuous_model_arxiv} to show that the function $\xs(y)$  is strictly increasing and differentiable on the interval $I_C$, we can show  that $\xd(y)$ is strictly increasing and differentiable on the interval $[mid(I_C) - L_C, mid(I_C) - \Dl_{x^*}]$.

The result of the lemma then follows.
\end{proof}

-

\begin{lemma}\label{lemma:xd2}
Consider  a constant $E_p$, $0< E_p \leq 1$, and an interval
$$I_C  =   [ mid(I_C) - L_C, mid(I_C) + L_C) \subset \setR.$$ 
There exists a $\dlO >0$ such that for all discrete interval communities   $\dC=\Cd$ on $I_C$ with distance $\dl < \dlO$ the following is true.
If we have that
$$ \aCd(y) = E_p, \qquad y \in \intdd,$$
and 
$$L_C < \frac{L}{2},$$
then for
$$ y \in [mid(I_C) - L, mid(I_C) - L_C)$$
we have that
$$ \xd(y) \leq \xd \Bbl mid(I_C) - L_C \Bbr,$$
and for
$$ y \in (mid(I_C) + L_C, mid(I_C) + L)$$
we have that
$$ \xd(y) \geq \xd \Bbl mid(I_C) + L_C \Bbr.$$
\end{lemma}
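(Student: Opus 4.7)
The plan is a proof by contradiction that exploits the log-concavity of $g$: by Assumption~\ref{ass:fg}, $g$ is strictly concave and strictly positive on the interior of $supp(g)$, and composing with the strictly-increasing concave $\log$ yields strict log-concavity of $g$ on $supp(g)$. I focus on the first inequality; the second follows by the mirror argument using Lemma~\ref{lemma:xd}(c) in place of (b).

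Set $y_0 := mid(I_C) - L_C$ and suppose for contradiction that $\xd(y) > \xd(y_0)$ in the torus order for some $y \in [mid(I_C) - L, y_0)$. Let $x_1 := \xd(y_0)$ and $x_2 := \xd(y)$. Lemma~\ref{lemma:xd}(b) places $x_1 \in (y_0, mid(I_C)) \cap supp(q(\cdot|y_0))$, and the objective value at $x = y$, namely $g(0)\PCd(y) > 0$, forces $x_2$ into the strict interior of $supp(q(\cdot|y))$. Multiplying the two optimality inequalities
$$ q(x_1|y_0)\PCd(x_1) \geq q(x_2|y_0)\PCd(x_2), \qquad q(x_2|y)\PCd(x_2) \geq q(x_1|y)\PCd(x_1), $$
and cancelling the strictly positive product $\PCd(x_1)\PCd(x_2)$ (positive by Lemma~\ref{lemma:PCd}, since $x_1,x_2$ lie close to $mid(I_C)$), yields the cross inequality
$$ q(x_1|y_0)\,q(x_2|y) \;\geq\; q(x_2|y_0)\,q(x_1|y). $$
Introducing $a := x_1 - y_0$, $d := x_2 - x_1$, $e := y_0 - y$ (all strictly positive) and writing $q(x|y) = g(||x-y||)$, this rewrites as $g(a)\,g(a+d+e) \geq g(a+d)\,g(a+e)$. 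But $(a+e, a+d)$ lies strictly inside $(a, a+d+e)$ and shares its arithmetic mean, so strict log-concavity of $g$ gives the opposite strict inequality $g(a+e)\,g(a+d) > g(a)\,g(a+d+e)$, a contradiction. Hence $\xd(y) \leq \xd(y_0)$.

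The main technical concern is ensuring that all four arguments of $g$ lie in the positive support so that strict log-concavity is meaningful: this reduces to $a+d+e = ||x_2-y|| < $ support radius of $g$, which holds because $x_2$ is strictly inside $supp(q(\cdot|y))$ (and forces $a+e = ||x_1-y||$ to be smaller still, since otherwise $q(x_1|y)=0$ would also force $q(x_2|y)=0$ by monotonicity of distance along the short arc, contradicting $\xd(y)=x_2$). A secondary subtlety is torus wrap-around: when $y$ approaches the antipode of $mid(I_C)$, the torus distances may fail to coincide with the linear differences used above. This is handled by choosing $\dlO$ small enough (and, together with the support radius of $g$, bounded in terms of $L_C$) so that the four points $y, y_0, x_1, x_2$ lie on a single arc of length less than $L$, on which torus distances reduce to ordinary subtractions and the ordering $y < y_0 < x_1 < x_2$ is valid as stated.
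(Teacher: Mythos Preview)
Your four-point log-concavity argument is a different route from the paper's, and the core inequality is sound: once the four distances are $a,\,a{+}d,\,a{+}e,\,a{+}d{+}e$ with $d,e>0$ and all in the interior of $supp(g)$, strict log-concavity of $g$ (which follows from Assumption~\ref{ass:fg} since $(\log g)''=g''/g-(g'/g)^2<0$ there) gives $g(a{+}d)\,g(a{+}e)>g(a)\,g(a{+}d{+}e)$, contradicting the product of the two optimality inequalities. The paper proceeds instead by perturbation from the continuous model: it sets $\Delta_x=\tfrac12\bigl(mid(I_C)-x^*(y_l)\bigr)$ with $y_l=mid(I_C)-L_C$, uses Lemmas~\ref{lemma:PCd} and~\ref{lemma:xd_xs} to make $P_{C_\delta}$ increasing and strictly concave on $[mid(I_C)-L_C,\,mid(I_C)-\Delta_x]$ and to force $\|x^*_\delta(y)-x^*(y)\|<\Delta_x$, imports from~\cite{continuous_model_arxiv} the bound $x^*(y)\le x^*(y_l)$ for $y<y_l$, localises both $x^*_\delta(y)$ and $x^*_\delta(y_l)$ below $mid(I_C)-\Delta_x$, and then replays the continuous-model monotonicity argument with $P_{C_\delta}$ in place of $P_C$.

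The gap is in your last paragraph. You need the counterclockwise arc $y\to y_0\to x_1\to x_2$ to have length $a{+}d{+}e<L$ so that all four torus distances agree with the linear differences; your claim that this follows from taking $\delta_0$ small, together with the support bound on $\|x_2-y\|$, is not justified. Set $mid(I_C)=0$ and take $y$ close to the antipode $-L$, so that $e=y_0-y$ is close to $L-L_C$. Assumption~\ref{ass:fg} permits $supp(g)=[0,L]$, in which case $q(\cdot\,|y)>0$ everywhere; since $P_{C_\delta}$ peaks near $0$, nothing you have written prevents $x_2=x^*_\delta(y)$ from sitting slightly to the right of $0$, i.e.\ with $x_2-y>L$. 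Then $\|x_2-y\|=2L-(x_2-y)\neq a{+}d{+}e$, the equal-sum condition underlying the log-concavity comparison fails, and no contradiction results. Shrinking $\delta$ controls how well $P_{C_\delta}$ approximates $P_C$, not where $x^*_\delta(y)$ lands for $y$ near the antipode. To make your route work you would first have to show that $x^*_\delta(y)$ lies on the short arc from $y$ toward $mid(I_C)$ (equivalently $x_2<y+L$); that localisation is precisely what the paper extracts, indirectly, from closeness to $x^*(y)$.
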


\begin{proof}
We first consider the case where
$$ y \in [mid(I_C) - L, mid(I_C) - L_C).$$
Let
$$y_l = mid(I_C) - L_C.$$
From the analysis in~\cite{continuous_model_arxiv} we have that
\begin{equation}\label{eq:Dl_x}
\xs(y_l) \in [y_l,mid(I_C)) = [mid(I_C) - L_C, mid(I_C)),
\end{equation}
and hence
$$ \xs(y_l) < mid(I_C).$$
Using this result, let $\Dl_x$ be given by
$$\Dl_x = \frac{mid(I_C) - \xs(y_l)}{2}.$$
Note that by Eq.\eqref{eq:Dl_x} we have that
$$ 0 < \Dl_x <\frac{L_C}2.$$

Let $\dl^{(1)}_0 > 0$ be such that  for all discrete interval communities   $C=(\intdd,\intsd)$ on $I_C$ with distance $\dl < \dl^{(1)}_0$ we have that the function $\PCd(x)$ is strictly increasing and strictly concave on $[mid(I_C) - L_C, mid(I_C) -\Dl_x]$,
and that
$$ || \xd(y) - \xs(y) || \leq \Dl_x, \qquad y \in [mid(I_C) - L, mid(I_C)].$$
Note that such a $\dlO$ exists by Lemma~\ref{lemma:PCd}~and~\ref{lemma:xd_xs}.

By construction we have that
$$\xs(y_l) \leq mid(I_C) - 2 \Dl_x.$$
Furthermore from the analysis in~\cite{continuous_model_arxiv} we have that
$$\xs(y) \leq \xs(y_l), \qquad y \in [mid(I_C) - L, mid(I_C) - L_C).$$
Combining this result we the fact that
$$|| \xd(y) - \xs(y) || \leq \Dl_x, \qquad y \in [mid(I_C) - L, mid(I_C)],$$
we obtain that
$$\xd(y_l) <  mid(I_C) - \Dl_x$$
and
$$\xd(y) <  mid(I_C) - \Dl_x, \qquad  y \in [mid(I_C) - L, mid(I_C)- L_C).$$
As by construction we have that $\PCd(x)$ is strictly increasing and strictly concave on $[mid(I_C) - L_C, mid(I_C) - \Dl_x]$, we can show that under the assumptions given in the lemma we have for all discrete interval communities   $\dC=\Cd$ on $I_C$ with distance $\dl < \dl^{(1)}_0$ that
$$ \xd(y) \leq \xd \big ( mid(I_C) - L_C \big ), \qquad  y \in [mid(I_C) - L, mid(I_C)),$$
using the same argument as given in the analysis in~\cite{continuous_model_arxiv} to prove that interval communities   $C=(I_C,I_C)$ we have that
$$ \xs(y) \leq \xs \big ( mid(I_C) - L_C \big ), \qquad  y \in [mid(I_C) - L, mid(I_C)).$$

Using the same approach, we can show that there exists  $\dl^{(2)}_0$, $ \dl^{(2)}_0 > 0$, such that for all discrete interval communities on $I_C$ with distance $\dl < \dl^{(2)}_0$  we have for
$$ y \in (mid(I_C) + L_C, mid(I_C) + L)$$
that
$$ \xd(y) \geq \xd \Bbl mid(I_C) + L_C \Bbr.$$

The result of the lemma then follows by setting
$$\dlO = \min\{ \dl^{(1)}_0,  \dl^{(2)}_0 \}.$$
\end{proof}


\subsection{Properties of  $\Ddy$}
Having analyzed  the function $\xd(y)$ given by
$$\xd(y)= \arg\max_{x \in \setR} q(x|y) \PCd(x), \qquad y \in I_C.$$
we study next the properties of the function $\Ddy$, $y \in I_C$, given by
$$ \Ddy = || y - \xd(y)||.$$

We have the following result.

\begin{lemma}\label{lemma:Ddy}
Consider  a constant $E_p$, $0< E_p \leq 1$, and an interval
$$I_C  =   [ mid(I_C) - L_C, mid(I_C) + L_C) \subset \setR.$$ 
For every $\Dl_{x^*}$, $0 < \Dl_{x^*} < L_C$, there exists a $\dlO >0$ such that for all discrete interval communities $\dC=\Cd$ on $I_C$ with distance $\dl < \dlO$ the following is true.
If 
$$ \aCd(y) = E_p, \qquad y \in \intdd,$$
and 
$$L_C < \frac{L}{2},$$
then we have that  the function $\Ddy$ given by
$$ \Ddy = || y - \xd(y)||, \qquad y \in I_C,$$
where
$$\xd(y)= \arg\max_{x \in \setR} q(x|y) \PCd(x),$$
is  strictly decreasing and differentiable on  $[mid(I_C) - L_C, mid(I_C) -  \Dl_{x^*}]$, and strictly increasing and differentiable on  $[mid(I_C)  + \Dl_{x^*}, mid(I_C)]+L_C]$.
\end{lemma}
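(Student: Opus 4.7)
The plan is to reduce the claim about $\Ddy$ to the single inequality $\xd'(y) < 1$ on the two intervals, and then to obtain this inequality by implicit differentiation of the first-order condition supplied by Lemma~\ref{lemma:xd}(a). For $\dlO$ I would take the constant given by Lemma~\ref{lemma:xd} for the same $\Dl_{x^*}$, so that (by the proof of that lemma) the conclusions of Lemma~\ref{lemma:PCd} are also in force.

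First, Lemma~\ref{lemma:xd}(b)--(c) gives that on $[mid(I_C) - L_C, mid(I_C) - \Dl_{x^*}]$ we have $y < \xd(y) < mid(I_C)$, while on $[mid(I_C) + \Dl_{x^*}, mid(I_C) + L_C]$ we have $mid(I_C) < \xd(y) < y$. Since both $y$ and $\xd(y)$ lie in an interval of length less than $L$, the torus norm reduces to the ordinary signed difference, so $\Ddy = \xd(y) - y$ on the left interval and $\Ddy = y - \xd(y)$ on the right. Differentiability of $\Ddy$ is immediate from Lemma~\ref{lemma:xd}(d), and we get $\tfrac{d}{dy}\Ddy = \xd'(y) - 1$ (left) or $1 - \xd'(y)$ (right). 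Both monotonicity statements therefore reduce to showing $\xd'(y) < 1$ strictly.

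To extract $\xd'(y)$, set $H_\dl(x,y) = q'(x|y)\PCd(x) + q(x|y)\PCd'(x)$, so $H_\dl(\xd(y),y)\equiv 0$. Since $q(x|y) = g(d(x,y))$ and locally $d(x,y)=|x-y|$, one has $\partial_y q(x|y) = -q'(x|y)$ and $\partial_y q'(x|y) = -q''(x|y)$. Implicit differentiation then yields
\[
\xd'(y) - 1 \;=\; \frac{-\bigl(q'(\xd(y)|y)\PCd'(\xd(y)) + q(\xd(y)|y)\PCd''(\xd(y))\bigr)}{q''(\xd(y)|y)\PCd(\xd(y)) + 2\,q'(\xd(y)|y)\PCd'(\xd(y)) + q(\xd(y)|y)\PCd''(\xd(y))}.
\]
The denominator is exactly $\tfrac{d^2}{dx^2}[q(x|y)\PCd(x)]$ evaluated at the strict maximizer $\xd(y)$, so by the concavity argument behind Lemma~\ref{lemma:xd}(a) it is strictly negative. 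It remains to show that the bracketed term in the numerator is strictly negative as well, which will give $\xd'(y) - 1 < 0$.

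For the sign of $q'\PCd' + q\PCd''$: on the left interval, $\xd(y) \in [mid(I_C)-L_C, mid(I_C))$, so by Lemma~\ref{lemma:PCd}(b)--(c) we have $\PCd'(\xd(y)) > 0$ and $\PCd''(\xd(y)) < 0$; Assumption~\ref{ass:fg} gives $q'(\xd(y)|y) = g'(\xd(y)-y) \leq 0$, and of course $q,\PCd>0$, so both summands are $\leq 0$ and the second is strictly negative. On the right interval, $\xd(y) \in (mid(I_C), y]$ yields $\PCd'(\xd(y))<0$ while $q'(\xd(y)|y) = -g'(y-\xd(y)) \geq 0$, giving the same conclusion by symmetry. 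The main technical delicacy is just the careful sign bookkeeping on the two sides of $mid(I_C)$ under the torus metric; once that is in place, differentiability and strict monotonicity of $\Ddy$ follow at once. (An alternative route would be to transport the analogous continuous-case result $\xs'(y)<1$ from~\cite{continuous_model_arxiv} via Lemma~\ref{lemma:xd_xs}, but that would require an additional derivative-level convergence step that the direct calculation above bypasses.)
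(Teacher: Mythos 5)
Your proof is correct but takes a genuinely different route from the paper's. The paper's proof is a transport argument: it carefully sets up a margin $\Dl_b$ with $2\Dl_b = |mid(I_C) - \xs(mid(I_C) - \Dl_{x^*})|$, chooses $\dlO$ via Lemmas~\ref{lemma:PCd} and~\ref{lemma:xd_xs} so that $\PCd$ is strictly increasing and strictly concave on the relevant range \emph{and} $\xd(y)$ is pinned into $[y,\, mid(I_C)-\Dl_b]$, and then closes by invoking ``the same argument as given in~\cite{continuous_model_arxiv}'' for the continuous-model function $\Delta^*(y)$. You instead carry out that argument explicitly: reduce the claim to $\frac{d}{dy}\xd(y) < 1$, implicitly differentiate the first-order condition $q'(x|y)\PCd(x)+q(x|y)\PCd'(x)=0$, and do direct sign bookkeeping on the resulting ratio. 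This is more self-contained and exposes the mechanism (the numerator $-(q'\PCd' + q\PCd'')$ is strictly positive while the denominator is the strict second-order condition), which the paper leaves implicit in the citation.

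Two small points you should tighten. First, to assert $\PCd'(\xd(y)) > 0$ (resp. $<0$) you need $\xd(y)$ bounded strictly away from the peak of $\PCd$; the paper's $\Dl_b$ construction in the proof of Lemma~\ref{lemma:xd} is exactly what delivers $\xd(y) \le mid(I_C)-\Dl_b$ on the left interval, and you should say you are taking $\dlO$ small enough for \emph{that} conclusion together with the monotonicity range $\Dl_P < \Dl_b$ from Lemma~\ref{lemma:PCd}. (In fact, for your numerator you only need $\PCd'(\xd(y)) \ge 0$, since $q\PCd''<0$ is already strict; but the denominator's strict negativity uses $q''\PCd<0$, which in turn needs $\xd(y)$ in the interior of $supp(q(\cdot|y))$ --- guaranteed because $q(\xd(y)|y)>0$ and $g$ is non-increasing.) Second, your citation ``Lemma~\ref{lemma:PCd}(b)--(c)'' should be (c)--(d): in that lemma (b) is the symmetry statement, (c) the monotonicity, (d) the strict concavity. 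With those fixed, your argument is a valid and more transparent alternative to the paper's deferral.
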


\begin{proof}
Let
$$y_0 = mid(I_C).$$
We first show that there exists  $\dl^{(1)}$, $ \dl^{(1)} > 0$, such that for all discrete interval communities on $I_C$ with distance $\dl < \dl_0^{(1)}$ the function $\Ddy$ is  strictly decreasing and differentiable on  $[y_0 - L_C, y_0 -  \Dl_{x^*}]$. To do this, let $\Dl_{x^*}$ be as in the statement of the lemma, and let then $\Dl_b$ be given by
$$ 2 \Dl_b =  | y_0 -  \xs(y_0 - \Dl_{x^*}) |.$$
From the analysis in~\cite{continuous_model_arxiv} we have that 
$$ \xs(y_0 - \Dl_{x^*}) \in ( y_0 - \Dl_{x^*}, y_0),$$
and it follows that
$$\Dl_b > 0.$$
We then choose $\dl_0^{(1)}$ as follows.
Let $\dl^{(1)}_0$, $ \dl^{(1)}_0 > 0$, be such that for all discrete interval communities on $I_C$ with distance $\dl < \dl_0^{(1)}$ we have that $\PCd(x)$ is strictly increasing and strictly concave in $x$ on $[y_0 - L_C, y_0 -  \Dl_{x^*}]$, and
$$ || \xs(y) - \xd(y) || < \Dl_b,  \qquad y \in [y_0 - L_C, y_0 -  \Dl_{x^*}].$$
Note that such a constant  $\dl^{(1)}_0$ exists by Lemma~\ref{lemma:PCd}~and~\ref{lemma:xd_xs}.

Using this construction, we have  for all discrete interval communities on $I_C$ with distance $\dl < \dl_0^{(1)}$ that 
$$ \xd(y) \in [y,mid(I_C) - \Dl_b], \qquad y \in [mid(I_C) - L_C,  mid(I_C) - \Dl_{x^*}].$$
To see this note the following. Recall that  $\Dl_b$ is given by
$$ 2 \Dl_b =  || y_0 -  \xs(y_0 - \Dl_{x^*}) ||.$$
From the analysis in~\cite{continuous_model_arxiv} we have that the function $\xs(y)$ is strictly increasing on $I_C$, and we have that
$$ \xs(y) \in [y,mid(I_C) - 2 \Dl_b] \cap supp(q(\cdot|y)), \qquad y \in [mid(I_C) - L_C,  mid(I_C) - \Dl_{x^*}].$$
Furthermore, by construction we have that
$$ || \xs(y) - \xd(y) || < \Dl_b, \qquad y \in I_C.$$
Combining these results, it then follows that
$$ \xd(y) \in [y,mid(I_C) - \Dl_b], \qquad y \in [mid(I_C) - L_C,  mid(I_C) - \Dl_{x^*}].$$
Combining this result with the fact that by construction the function $\PCd(x)$ is strictly increasing and strictly concave in $x$ on the interval  $[mid(I_C) - L_C, mid(I_C) - \Dl_b]$, and in order to show that $\Ddy$ is  strictly decreasing and differentiable on  $[mid(I_C) - L_C, mid(I_C) -  \Dl_{x^*}]$ we can use the same argument as given in~\cite{continuous_model_arxiv} to show the function $\Dy$ is  strictly decreasing and differentiable on  $[mid(I_C) - L_C, mid(I_C)]$ where
$$\Dy = || y - \xs(y)||, \qquad y \in I_C.$$

Using the same approach, we can show that there exists  $\dl^{(2)}_0$, $ \dl^{(2)}_0 > 0$, such that for all discrete interval communities on $I_C$ with distance $\dl < \dl^{(2)}_0$ the function $\Ddy$ strictly increasing and differentiable on  $[ mid(I_C) +  \Dl_{x^*}, mid(I_C) + L_C]$.

The result of the lemma then follows by setting
$$\dlO = \min\{ \dl^{(1)}_0,  \dl^{(2)}_0 \}.$$
\end{proof}


\subsection{Properties of $q(\xd(y)|y)$}
Finally, we characterize the function $q(\xd(y)|y)$
where
$$ \xd(y) = \arg \max_{x \in \setR} q(x|y) \PCd(x).$$
The first results states that if $\dl$ is small enough, then we have for a discrete interval communities $C=(\intdd,\intsd)$ on $I_C$ with distance $\dl$ that that the distance
$$ \Big |q(\xd(y)|y) - q(\xs(y)|y) \Big |$$
is small, where
$$ \xs(y) = \arg \max_{x \in \setR} q(x|y) P_C(x).$$

\begin{lemma}\label{lemma:qPCd_qPC}
Consider  a constant $E_p$, $0< E_p \leq 1$, and an interval
$$I_C  =   [ mid(I_C) - L_C, mid(I_C) + L_C) \subset \setR.$$. 
For every $\Dl_q$, $\Dl_q>0$, there exists a $\dlO >0$ such that for all discrete interval communities $\dC=\Cd$ on $I_C$ with distance $\dl < \dlO$ the following is true.
If
$$\aCd(y) = E_p, \qquad y \in \intdd,$$
and
$$L_C < \frac{L}{2},$$
then we have that
$$ \Big |q(\xd(y)|y) - q(\xs(y)|y) \Big | < \Dl_q, \qquad y \in I_C,$$
where
$$ \xd(y) = \arg \max_{x \in \setR} q(x|y) \PCd(x)$$
and
$$ \xs(y) = \arg \max_{x \in \setR} q(x|y) P_C(x).$$
\end{lemma}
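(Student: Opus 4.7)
The plan is to combine the closeness of $\xd(y)$ to $\xs(y)$ (provided by Lemma~\ref{lemma:xd_xs}) with the uniform continuity of $g$ on $[0,L]$. Since the excerpt already gives us that $\xd(y)$ can be made arbitrarily close to $\xs(y)$ by shrinking $\dl$, the remaining ingredient is a modulus-of-continuity argument for $g$, and then an application of the triangle inequality on the torus metric.

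First I would establish uniform continuity of $g$ on $[0,L]$. By Assumption~\ref{ass:fg}, $g$ is non-increasing on $[0,L]$, twice continuously differentiable on $supp(g)$, and identically zero on $[0,L]\setminus supp(g)$; together with continuity at the boundary of $supp(g)$ this gives $g$ continuous on the compact interval $[0,L]$, hence uniformly continuous there. So, given $\Dl_q>0$, there exists $\eta>0$ such that
$$ |g(s) - g(t)| < \Dl_q \qquad \text{for all } s,t \in [0,L] \text{ with } |s-t| < \eta. $$

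Next I would invoke Lemma~\ref{lemma:xd_xs}. Apply it with $\Dl_{x^*} = \eta$ (shrinking $\eta$ if necessary so that $0<\eta<L_C$, which does not weaken the conclusion since $g$ is still uniformly continuous under a smaller threshold). This yields some $\dlO>0$ such that for every discrete interval community $\dC=\Cd$ on $I_C$ with distance $\dl<\dlO$ satisfying the hypotheses of the current lemma, we have
$$ \|\xs(y)-\xd(y)\| < \eta, \qquad y \in I_C. $$

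Finally, I would use the triangle inequality for the torus metric to bound the arguments of $g$. Since $q(x|y)=g(d(x,y))=g(\|x-y\|)$, and since
$$ \bigl|\,\|\xd(y)-y\| - \|\xs(y)-y\|\,\bigr| \;\leq\; \|\xd(y)-\xs(y)\| \;<\; \eta, $$
both quantities $\|\xd(y)-y\|,\|\xs(y)-y\|$ lie in $[0,L]$, and the uniform continuity estimate yields
$$ \bigl| q(\xd(y)|y) - q(\xs(y)|y) \bigr| = \bigl| g(\|\xd(y)-y\|) - g(\|\xs(y)-y\|) \bigr| < \Dl_q, \qquad y \in I_C, $$
which is the desired conclusion. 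I do not expect any real obstacle here: the only nontrivial content of the statement is already absorbed into Lemma~\ref{lemma:xd_xs}, and the remaining work is a one-line continuity argument. The only minor care is to verify that $g$ is continuous at the boundary of $supp(g)$ (which follows from monotonicity of $g$ and the fact that $g\equiv 0$ outside its support), so as to obtain uniform continuity on all of $[0,L]$ rather than only on $supp(g)$.
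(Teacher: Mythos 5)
Your proof is correct and follows essentially the same strategy as the paper: both reduce the problem to Lemma~\ref{lemma:xd_xs} (closeness of $\xd(y)$ to $\xs(y)$) and then invoke a continuity estimate for $g$. The only difference is that the paper uses an explicit Lipschitz bound via $\max_{x \in supp(g)} |g'(x)|$ and chooses $\Dl_{x^*} < \Dl_q / \max |g'|$, whereas you use abstract uniform continuity of $g$ on $[0,L]$; your route is marginally more general and has the small advantage of handling the potential kink of $g$ at the boundary of $supp(g)$ without needing to remark that the one-sided Lipschitz bound propagates across that point. (One tiny remark: shrinking $\eta$ to satisfy $\eta < L_C$ is unnecessary, since Lemma~\ref{lemma:xd_xs} only requires $\Dl_{x^*} > 0$, but it is harmless.)
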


\begin{proof}
Let $ \Dl_{x}$, $ \Dl_{x} >0$, be such that
$$ \Dl_{x} < \frac {\Dl_q}{ \max_{x \in supp(g)} |g'(x)|},$$
where $g$ is the function of Assumption~\ref{ass:fg}. 
Note that such a $\Dl_x$ exists by   Assumption~\ref{ass:fg}.

Let $\dlO$ be such that for all discrete interval communities on $I_C$ with distance $\dl < \dlO$  we have that
$$ ||\xd(y) - \xs(y) || < \Dl_x, \qquad y \in I_C.$$
Note that such a $\dlO$ exists by Lemma~\ref{lemma:xd_xs}.

We then have that
$$ \Big |q(\xd(y)|y) - q(\xs(y)|y) \Big | <  \max_{x \in supp(g)} |g'(x)|\cdot ||\xd(y) - \xs(y)||.$$
As by construction we have that
$$ \Dl_{x} < \frac {\Dl_q}{ \max_{x \in supp(g)} |g'(x)|}$$
and
$$ ||\xd(y) - \xs(y) || < \Dl_x, \qquad y \in I_C,$$
we obtain that 
$$  \Big |q(\xd(y)|y) - q(\xs(y)|y) \Big | < \Dl_q, \qquad y \in I_C.$$
The result of the lemma then follows.
\end{proof}

The next results provides properties of the function $q(\xd(y)|y)$ that we will use in our analysis.
\begin{lemma}
Consider  a constant  $E_p$, $0< E_p \leq 1$, an interval
$$I_C  =   [ mid(I_C) - L_C, mid(I_C) + L_C) \subset \setR.$$ 
For every $\Dl_q$, $0<\Dl_q>L_C$, there exists a $\dlO >0$ such that for all discrete interval communities $\dC=\Cd$ on $I_C$ with distance $\dl < \dlO$ the following is true.
If
$$\aCd(y) = E_p, \qquad y \in \intdd,$$
and
$$L_C < \frac{L}{2},$$
then we have for
$$ y \in [mid(I_C) - L_C, mid(I_C) - \Dl_q] \cup [mid(I_C) - L_C, mid(I_C) - \Dl_q]$$
that
$$ q(\xd(y)|y) > 0$$
and that $q(\xd(y)|y)$ is differentiable with with respect to $y$. 
\end{lemma}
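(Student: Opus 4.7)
The plan is to derive both conclusions directly from Lemma~\ref{lemma:xd}, invoked with $\Dl_{x^*} = \Dl_q$. Let $\dlO$ denote the threshold supplied by that lemma, shrunk further if needed as described below. For any discrete interval community $\dC = \Cd$ on $I_C$ with distance $\dl < \dlO$, Lemma~\ref{lemma:xd} provides a unique maximizer $\xd(y)$ for every $y \in I_C$, locates it inside $(y, mid(I_C)) \cap supp(q(\cdot|y))$ on the left interval and inside $(mid(I_C), y) \cap supp(q(\cdot|y))$ on the right interval, and gives that $\xd(y)$ is differentiable in $y$ on both.

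For the positivity claim I would compare the optimal value to the value at $x = y$. By Assumption~\ref{ass:fg} we have $q(y|y) = g(0) > 0$. The demand sum $\PCd(y) = E_p \sum_{y_i \in \intdd} f(d(y,y_i))$ is strictly positive because the nearest agent $y_i \in \intdd$ lies within distance $\dld < \dlO$ of $y$, and $f$ is continuous with $f(0) \in (0,1]$ and strictly decreasing, so shrinking $\dlO$ further if necessary guarantees $f(\dld) > 0$. Since $\xd(y)$ maximizes $q(x|y)\PCd(x)$,
\[
q(\xd(y)|y)\,\PCd(\xd(y)) \;\ge\; q(y|y)\,\PCd(y) \;>\; 0,
\]
and $\PCd(\xd(y))$ is a finite positive quantity (bounded above by $E_p |\intdd|$), so $q(\xd(y)|y) > 0$ as claimed.

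For the differentiability claim I would compose $g$ with the map $y \mapsto d(\xd(y), y)$. By Lemma~\ref{lemma:xd}(d), $y \mapsto \xd(y)$ is differentiable on the two specified intervals. Since Lemma~\ref{lemma:xd}(b)–(c) places $\xd(y)$ strictly between $y$ and $mid(I_C)$ with $|\xd(y) - y| < L_C < L/2$, no torus wrap-around occurs: $d(\xd(y), y) = |\xd(y) - y|$ with $\xd(y) \ne y$, so the torus distance is a smooth function of $y$. Because $q(\xd(y)|y) = g(d(\xd(y), y)) > 0$, the argument of $g$ lies in the open subset of $supp(g)$ on which $g$ is twice continuously differentiable by Assumption~\ref{ass:fg}, and the chain rule yields differentiability of $q(\xd(y)|y)$ in $y$.

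The main technical subtlety is ensuring that $d(\xd(y), y)$ stays in the \emph{interior} of $supp(g)$, rather than landing on its boundary where $g$ may vanish and smoothness can fail, and that the torus absolute value is differentiable at $(\xd(y), y)$. Both are handled by the strict location $\xd(y) \in (y, mid(I_C))$ or $(mid(I_C), y)$ from Lemma~\ref{lemma:xd}(b)–(c) combined with the positivity of $q(\xd(y)|y)$ already established; I do not anticipate a deeper obstacle beyond those two observations.
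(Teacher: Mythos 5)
Your proposal is correct and follows essentially the same two steps as the paper's proof: positivity by comparing the optimal objective value against the suboptimal choice $x=y$ (giving $q(\xd(y)|y)\PCd(\xd(y)) \ge g(0)\PCd(y) > 0$), and differentiability by composing $g$ with the displacement $\Ddy = \|y-\xd(y)\|$ via the chain rule on $supp(g)$. The only cosmetic difference is that the paper cites Lemma~\ref{lemma:Ddy} directly for the differentiability of $\Ddy$, whereas you cite Lemma~\ref{lemma:xd}(d) for differentiability of $\xd(y)$ and supply the short torus-distance smoothness argument yourself; the paper also simply cites Lemma~\ref{lemma:PCd} for $\PCd(y)>0$ where you give an explicit nearest-agent bound.
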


\begin{proof}
Let $\dlO>0$ be such that for  all discrete interval communities $C=(\intdd,\intsd)$ on $I_C$ with distance $\dl < \dlO$ we have that the function
$$\Ddy = || y - \xd(y) ||$$
is differentiable in $y$ on
$$[mid(I_C) - L_C, mid(I_C) - \Dl_q] \cup [mid(I_C) - L_C, mid(I_C) - \Dl_q].$$
Note that such a $\dlO$ exists by Lemma~\ref{lemma:Ddy}.

Recall that
$$ q(\xd(y)|y) = g(\Ddy),$$
where $g$ is the function in Assumption~\ref{ass:fg}.

As we have that
$$\xd(y)  = \arg \max_{x \in \setR} q(x|y) \PCd(x)$$
it follows that
$$  q(\xd(y)|y) \PCd(\xd(y)) \geq q(y|y) \PCd(y) = g(0) \PCd(y)> 0,$$
as by Assumption~\ref{ass:fg} we have that
$$g(0) > 0$$
and by Lemma~\ref{lemma:PCd} we have that
$$  \PCd(y) > 0, \qquad y \in \setR.$$
This implies that
$$ q(\xd(y)|y) \PCd(\xd(y)) > 0, \qquad y \in \setR,$$
and hence
$$ q(\xd(y)|y)  > 0, \qquad y \in \setR.$$
From this result we obtain that
$$ \xd(y) \in supp(q(\cdot|y), \qquad y \in \setR.$$

By Assumption~\ref{ass:fg} we have 
that the function $g$ is differentiable on $supp(g)$, and by construction we have that  the function $\Ddy$ is differentiable in $y$ on
$$[mid(I_C) - L_C, mid(I_C) - \Dl_q] \cup [mid(I_C) - L_C, mid(I_C) - \Dl_q].$$
Combining these results, we obtain that the function
$$ q(\xd(y)|y) = g(\Ddy)$$
is differentiable in $y$ on
$$[mid(I_C) - L_C, mid(I_C) - \Dl_q] \cup [mid(I_C) - L_C, mid(I_C) - \Dl_q].$$

The result of the lemma then follows. 
\end{proof}


\section{Proof of Proposition~\ref{prop:xd}~and~\ref{prop:Ddy}}\label{app:proof_xd_Ddy}
In this appendix we prove Proposition~\ref{prop:xd}~and~\ref{prop:Ddy} using the analysis of Appendix~\ref{app:xd_Ddy}. Recall that Proposition~\ref{prop:xd} characterizes the optimal content $\xd(y)$ that a agent $y \in \intsd$ produces in a discrete interval community  $\dC = \Cd$ that is part of a \erNash as given by  Proposition~\ref{prop:dnash}, and  Proposition~\ref{prop:Ddy} characterizes the function
$$ \Ddy = || \xd(y) - y||,$$
i.e. by how much the agent adapts its optimal content towards the main interest of the community $\dC = \Cd$.

We obtain the results of  Proposition~\ref{prop:xd}~and~\ref{prop:Ddy} using the analysis of Appendix~\ref{app:xd_Ddy} as follows.
Let $\dcommStrucNash$ be a \erNash as given by Proposition~\ref{prop:dnash}. Then we have that for every discrete interval community $\dC = \Cd \in \sCdNash$ that
$$ \aCd(y) = E_p, \qquad y \in \intdd,$$
and
$$ L_C < \frac{L}{2},$$
where $L_C$ is the length of the interval $I_C$ such that
$$ \intdd = \Add \cap I_C.$$
As a result, the properties of a discrete interval community  $\dC = \Cd \in \sCdNash$ satisfy the assumptions in Lemma~\ref{lemma:xd}~and~\ref{lemma:Ddy}. Proposition  then follows immediately from Lemma~\ref{lemma:xd}~and~\ref{lemma:Ddy}.

\newpage
\section{Properties of $\bCsd(\cdot,y)$}\label{app:BCsd}
In this appendix, we characterize how agents in a discrete interval community optimally allocated their content production rate. More precisely, given an interval $I_C \subset \setR$ and a discrete interval community $\dC=\Cd$ on $I_C$ with distance $\dl$, we characterize how an agents $y \in \intsd$ allocate their content production rate $\bCd(\cdot|y)$ in community $C$. For this, we assume that there is a constraint on the maximal production rate, i.e. we have that
$$ ||\bCd(\cdot|y) || = \int_{\setR} \beta_C(x|y) dx  \leq \beta_C(y), \qquad y \in \intsd,$$
where $ \bCd(y)>0$ denotes the maximal production rate that agent $y \in \intdd$ can allocate to community $\dC$. In this case the optimal content production rate $\bCsd(\cdot|y)$ of agent $y \in \intd$ is given by
$$\bCsd(\cdot|y) = \underset{\bCd(\cdot|y): || \bCd(\cdot|y)|| \leq \bCd(y)}{\arg\max} \int_{\setR} \bCd(x|y) [ q(x|y) \PCd(x) - \aCd c] dx,$$
where
$$\aCd = \sum_{y \in \intdd} \aCd(y),$$
and
$$ \PCd(x) = \sum_{y \in \intdd} \aCd(y) p(x|y).$$

Our first result provides a characterization of the optimal content production rate allocation for discrete interval communities.
\begin{lemma}\label{lemma:optimal_production_rate}
Consider an interval $I_C  =   [ mid(I_C) - L_C, mid(I_C) + L_C) \subset \setR$, and a discrete interval community  $\dC =\Cd$ on $I_C$ with distance $\dl$.
For $y \in \intsd$ and $\beta_C(y) > 0$, let the function $\bCsd(\cdot|y)$  be given by
$$\bCsd(\cdot|y) = \underset{\bCd(\cdot|y): || \bCd(\cdot|y)|| \leq \bCd(y)}{\arg\max} \int_{x \in \setR} \bCd(x|y) \Big [ q(x|y) \PCd(x) - \aCd c \Big] dx$$
where
$$\aCd = \sum_{y \in \intdd} \aC(y),$$
and
$$ \PCd(x) = \sum_{y \in \intdd} p(x|y).$$
Furthermore,  let
$$ b^* = \max_{x \in \setR} q(x|y) \PCd(x),$$
and let the set $A \subseteq \setR$ be given by
$$A = \Bsl x \in \setR | q(x|y) \PCd(x) = b^*\Bsr.$$
If  we have that
$$\max_{x \in \setR} \Bbl q(x|y) \PCd(x) - \aCd c \Bbr > 0, \qquad y \in I_C,$$
then the optimal production rate allocation has the property that
$$ \int_{\in A} \bCsd(x|y) dx =  \int_{\setR} \bCsd(x|y) dx.$$
\end{lemma}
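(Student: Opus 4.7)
The plan is to treat the maximization as a linear program over non-negative rate measures with a total-mass budget $\bCd(y)$, so that all mass should pile up on the argmax set of the coefficient function $h(x) := q(x|y)\PCd(x) - \aCd c$. Since $\setR$ is a compact torus, $q(\cdot|y)$ is continuous by Assumption~\ref{ass:fg}, and $\PCd$ is continuous (indeed smooth) by Lemma~\ref{lemma:PCd}, the function $h$ attains its maximum; this maximum equals $b^* - \aCd c$ and is attained exactly on the nonempty closed set $A$.

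The first step is to establish the sharp linear upper bound. For any feasible $\bCd(\cdot|y) \geq 0$ with $\|\bCd(\cdot|y)\| \leq \bCd(y)$, the pointwise inequality $h(x) \leq b^* - \aCd c$ gives
$$\int_{\setR} \bCd(x|y)\, h(x)\, dx \;\leq\; (b^* - \aCd c) \int_{\setR} \bCd(x|y)\, dx \;\leq\; (b^* - \aCd c)\, \bCd(y),$$
where the second inequality uses precisely the hypothesis $b^* - \aCd c > 0$. A Dirac allocation of total mass $\bCd(y)$ at any $x_0 \in A$ attains this bound, so the optimal objective value equals $(b^* - \aCd c)\bCd(y)$, and the optimizer $\bCsd(\cdot|y)$ must attain it.

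The second step converts this attained-equality into the support statement. Since $(b^* - \aCd c) - h(x)$ is non-negative everywhere and strictly positive on $\setR \setminus A$, the identity
$$\int_{\setR} \bCsd(x|y)\, \bigl[(b^* - \aCd c) - h(x)\bigr]\, dx \;=\; (b^* - \aCd c)\,\|\bCsd(\cdot|y)\| \,-\, (b^* - \aCd c)\,\bCd(y) \;\leq\; 0,$$
combined with the integrand being non-negative, forces the integrand to vanish $\bCsd(\cdot|y)$-almost-everywhere; in particular $\bCsd(\cdot|y)$ assigns no mass to $\setR \setminus A$, which is the claimed equality $\int_A \bCsd(x|y)\, dx = \int_{\setR} \bCsd(x|y)\, dx$.

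The main obstacle is cosmetic rather than mathematical: the rate ``functions'' $\bCd(\cdot|y)$ in the paper are allowed to contain Dirac masses (as in $E_q\,\delta(x - \xd(y))$ in the definition of $\dCSd$), so the argument has to be phrased in terms of non-negative Borel measures, with $\int \bCd(x|y)\, dx$ interpreted as total mass. Once this convention is fixed the proof is the standard linear-programming argument above, and the only nontrivial ingredient is that $A$ is nonempty, which follows from continuity of $h$ and compactness of the torus $\setR$.
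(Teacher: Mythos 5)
Your proof is correct, and it is the standard linear-programming / bang-bang argument that the paper invokes by reference to \cite{continuous_model_arxiv}: bound the linear objective by $(b^* - \aCd c)\,\beta_C(y)$ using the pointwise maximum and the budget, show a Dirac on $A$ attains the bound (which is where the hypothesis $b^* - \aCd c > 0$ is needed to justify saturating the budget), and conclude by a complementary-slackness argument that any optimizer must assign zero mass to $\setR \setminus A$. Your observation that the ``production rate function'' must really be a non-negative measure is an accurate reading of the paper's implicit convention, since the equilibrium allocation in the definition of $\dCSd$ is itself a Dirac.
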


\begin{proof}
This lemma can be proved using the same argument as given in~\cite{continuous_model_arxiv} where the same result is proven for interval communities $C=(I_C,I_C)$ under the continuous agent model.
\end{proof}

The next result follows directly from Lemma~\ref{lemma:optimal_production_rate}. 
\begin{cor}\label{cor:optimal_production_rate}
Consider constants $E_p$, $0< E_p\leq 1$, and $E_q$, $0 < E_q$, and an interval $I_C  =   [ mid(I_C) - L_C, mid(I_C) + L_C) \subset \setR$.
For every $\Dl_y$, $0 < \Dl_y$, there exists a $\dlO >0$ such that for all discrete interval communities $\dC =\Cd$ on $I_C$ with distance $\dl < \dlO$ the following is true.
Suppose that
$$\aCd(y) = E_p, \qquad  y \in \intdd,$$
and
$$L_C < \frac{L}{2}.$$
If for agent $y \in \intdd$ we have that
$$\max_{x \in \setR} \Bbl q(x|y) \PCd(x) - \aCd c \Bbr > 0,$$
where
$$\aCd = \sum_{y \in \intdd} E_p,$$
and
$$ \PCd(x) = \sum_{y \in \intdd} E_p p(x|y),$$
then
$$\bCsd(x|y) = E_q \dl\big (x-\xd(y) \big ), \qquad x \in \setR,$$
where 
$$\xd(y) = \arg \max_{x \in \setR}  q(x|y) \PCd(x),$$
is an optimal solution for  
$$\bCsd(\cdot|y) = \underset{\bCd(\cdot|y): || \bCd(\cdot|y)|| \leq E_q}{\arg\max} \int_{x \in \setR} \bCd(x|y) \Big [ q(x|y) \PCd(x) - \aCd c \Big ] dx.$$
\end{cor}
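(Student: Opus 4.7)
The plan is to derive the corollary directly from Lemma~\ref{lemma:optimal_production_rate} together with the uniqueness statement of Lemma~\ref{lemma:xd}. The lemma tells us that \emph{any} optimal production allocation must put all of its mass on the argmax set $A = \{x \in \setR : q(x|y)\PCd(x) = b^*\}$ of the payoff-per-unit-rate function $q(\cdot|y)\PCd(\cdot)$, provided the bracket $q(x|y)\PCd(x) - \aCd c$ is positive somewhere. So the only things left to show in the corollary are that (i) the set $A$ is a singleton, namely $A = \{\xd(y)\}$, and (ii) the total allocated mass equals the upper bound $E_q$.

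For (i), I would invoke Lemma~\ref{lemma:xd} applied to the interval $I_C$ and the constant $E_p$. By assumption $\aCd(y) = E_p$ for $y \in \intdd$ and $L_C < L/2$, so the hypotheses of Lemma~\ref{lemma:xd} are satisfied. Pick any $\Dl_{x^*}$ with $0 < \Dl_{x^*} < L_C$ (for example $\Dl_{x^*} = L_C / 2$); then Lemma~\ref{lemma:xd}(a) produces a threshold $\dl_0^{(1)}$ such that for every discrete interval community on $I_C$ with distance $\dl < \dl_0^{(1)}$ the optimization problem $\max_{x \in \setR} q(x|y)\PCd(x)$ has a unique maximizer, which is $\xd(y)$. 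Hence $A = \{\xd(y)\}$.

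For (ii), note that Lemma~\ref{lemma:optimal_production_rate} gives $\int_A \bCsd(x|y)\,dx = \int_{\setR}\bCsd(x|y)\,dx$. Any feasible allocation concentrated on the single point $\xd(y)$ has the form $m\,\delta(x - \xd(y))$ for some $m$ with $0 \leq m \leq E_q$, and its objective value is $m\bigl[q(\xd(y)|y)\PCd(\xd(y)) - \aCd c\bigr]$. Since $\xd(y)$ is the maximizer of $q(\cdot|y)\PCd(\cdot)$, the bracket equals $\max_{x \in \setR} q(x|y)\PCd(x) - \aCd c$, which the hypothesis asserts is strictly positive. The objective is therefore strictly increasing in $m$, so the optimum is achieved at $m = E_q$, giving $\bCsd(x|y) = E_q\,\dl\bigl(x - \xd(y)\bigr)$.

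There is no real obstacle here; the content of the corollary is entirely a matter of assembling Lemma~\ref{lemma:optimal_production_rate} (support on the argmax set) with the uniqueness of $\xd(y)$ from Lemma~\ref{lemma:xd} and the sign condition in the hypothesis. Setting $\dlO$ equal to the threshold $\dl_0^{(1)}$ produced by Lemma~\ref{lemma:xd} yields the required $\dlO > 0$. The only minor subtlety to check is that the Dirac concentration is indeed feasible within the class of allocations considered (i.e., that the $\bCd(\cdot|y)$ are allowed to be measures, not just functions); this is implicit in the model since the equilibrium allocations in Section~\ref{section:dinterval_community} are already written with the Dirac delta, so no extra argument is needed.
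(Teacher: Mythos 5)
Your proposal is correct and takes essentially the same route the paper intends: the paper simply remarks that the corollary ``follows directly from Lemma~\ref{lemma:optimal_production_rate},'' and your argument supplies exactly the two pieces that remark elides, namely that Lemma~\ref{lemma:xd}(a) (for some choice of $\Dl_{x^*}$, whence the threshold $\dlO$) makes the argmax set $A$ a singleton $\{\xd(y)\}$, and that the strict positivity of $q(\xd(y)|y)\PCd(\xd(y)) - \aCd c$ forces the allocated mass up to the cap $E_q$. One small point worth keeping in mind: Lemma~\ref{lemma:optimal_production_rate} characterizes where an optimizer puts its mass but presupposes one exists; a slightly more self-contained route is to bound the objective for \emph{any} feasible $\beta$ by $\|\beta\| \cdot \max_x\bigl[q(x|y)\PCd(x) - \aCd c\bigr] \le E_q \cdot \max_x[\cdot]$ and observe that $E_q\,\dl(x - \xd(y))$ attains this bound, which yields optimality without appealing to existence first. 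Either way the corollary follows; your version matches the paper's stated dependence on Lemma~\ref{lemma:optimal_production_rate}.
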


\section{Properties of  $\QCsd(x)$}\label{app:QCd}
In the previous appendix we studied the properties of  the optimal content production rate allocation $\bCsd(\cdot|y)$ of an agent $y \in \intdd$ in a discrete interval community  $\dC =\Cd$. In this appendix, we characterize the resulting optimal content supply function $\QCsd(x)$, $x \in \setR$, given by
\begin{equation}\label{eq:QCsd}
\QCsd(x) = \sum_{y \in \intsd} E_q \dl\big (x-\xd(y) \big ) q(x|y)dx, \qquad x \in \setR,
\end{equation}
where
$$\xd(y) = \arg \max_{x \in \setR}  q(x|y) \PCd(x).$$

\begin{lemma}\label{lemma:QCd1}
Consider constants  $E_p$, $0< E_p \leq 1$, and $E_q$, $0<E_q$, and an interval $I_C  =   [ mid(I_C) - L_C, mid(I_C) + L_C) \subset \setR$. 
For every $\Dl_{Q}>0$ and $\Dl_x>0$, such that
$$\Dl_Q + \Dl_x < L_C,$$
there exists a $\dlO >0$ such that for all discrete interval communities $\dC =\Cd$ on $I_C$ with distance $\dl < \dlO$ the following is true.
Suppose that
$$\aCd(y) = E_p, \qquad  y \in \intdd,$$
and
$$\max_{x \in \setR} \Bbl q(x|y) \PCd(x) - \aCd c \Bbr > 0, \qquad y \in \intsd,$$
as well as
$$L_C < \frac{L}{2}.$$
Furthermore, let
$$\Id^{(1)} = [mid(I_C) - L_C, mid(I_C) - \Dl_Q] \cap \intsd,$$
$$\Id^{(2)} = (mid(I_C) - \Dl_Q, mid(I_C) + \Dl_Q) \cap \intsd,$$
and
$$\Id^{(3)} = [mid(I_C) +\Dl_Q, mid(I_C) + L_C] \cap \intsd.$$
Then the optimal content supply function $Q^*_C(x)$, $x \in \setR$, as given by Eq.~\eqref{eq:QCsd} is such that
\begin{eqnarray*}
\QCsd(x) &=& E_q \sum_{y \in \Id^{(1)}} \dl\big (x-\xd(y)) q(\xd(y)|y \big ) + \cdots\\
 && +  E_q \sum_{y \in \Id^{(2)}} \dl\big (x-\xd(y)) q(\xd(y)|y \big ) + \cdots\\
 && + E_q \sum_{y \in \Id^{(3)}}  \dl \big (x-\xd(y) \big ) q(\xd(y)|y),
\end{eqnarray*}
where $\dl(\cdot)$ is the Dirac delta function, and
\begin{enumerate}
\item[a)]  $\xd(y)$ is the unique solution to the optimization problem
$$\xd(y) = \arg \max_{x \in \setR}  q(x|y) \PCd(x).$$
\item[b)]  for $y \in \Id^{(1)}$ we have that
$$\xd(y) \in (y,mid(I_C)),$$
\item[c)]  for $y \in \Id^{(3)}$ we have that
$$\xd(y) \in (mid(I_C),y),$$
\item[d)]
and for $y \in \Id^{(2)}$ we have that
$$\xd(y) \in  (mid(I_C) - \Dl_Q - \Dl_x, mid(I_C) + \Dl_Q + \Dl_x).$$
\end{enumerate}
\end{lemma}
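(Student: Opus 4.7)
The plan is to reduce the proof to the earlier lemmas in the appendix, principally Corollary~\ref{cor:optimal_production_rate} (for the Dirac--delta form of $\bCsd$), Lemma~\ref{lemma:xd} (for parts (a), (b), (c)), and Lemma~\ref{lemma:xd_xs} together with the results from \cite{continuous_model_arxiv} about $\xs(y)$ (for part (d)). The decomposition of the sum itself is immediate once we note that $\intsd = \Id^{(1)} \cup \Id^{(2)} \cup \Id^{(3)}$ is a disjoint union, so no argument is needed beyond substituting the point--mass form of $\bCsd(\cdot | y)$ into the definition of $\QCsd$.

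First, I would apply Corollary~\ref{cor:optimal_production_rate} (observing that the hypothesis $\max_x [q(x|y)\PCd(x) - \aCd c] > 0$ is precisely the one required) to conclude that, for a sufficiently small threshold $\dl^{(1)}_0 > 0$, every agent $y \in \intsd$ has $\bCsd(x|y) = E_q \dl(x - \xd(y))$. Plugging this into the definition of $\QCsd(x) = \sum_{y \in \intsd} \bCsd(x|y) q(x|y)$ and using the sampling property of the Dirac delta gives the claimed three--block formula. Uniqueness of $\xd(y)$, i.e.\ property (a), is then a direct consequence of Lemma~\ref{lemma:xd}(a), valid for all $\dl$ below some $\dl^{(2)}_0 > 0$.

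Next, I would obtain (b) and (c) by invoking Lemma~\ref{lemma:xd}(b)--(c) with the parameter $\Dl_{x^*}$ in that lemma set equal to the present $\Dl_Q$; this yields a constant $\dl^{(3)}_0 > 0$ such that $\xd(y) \in (y, mid(I_C))$ for $y \in \Id^{(1)}$ and $\xd(y) \in (mid(I_C), y)$ for $y \in \Id^{(3)}$, exactly matching the claims. The hypothesis $\Dl_Q + \Dl_x < L_C$ in particular guarantees $\Dl_Q < L_C$, so Lemma~\ref{lemma:xd} is applicable with this choice.

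Finally, for (d), which is the main obstacle, I would proceed as follows. From the continuous--agent analysis in \cite{continuous_model_arxiv}, the function $\xs(y) = \arg\max_x q(x|y) P_C(x)$ satisfies $\xs(y) \in [y, mid(I_C)]$ for $y \in [mid(I_C) - L_C, mid(I_C)]$ and $\xs(y) \in [mid(I_C), y]$ for $y \in [mid(I_C), mid(I_C) + L_C]$; hence for every $y \in (mid(I_C) - \Dl_Q, mid(I_C) + \Dl_Q)$ we have $\xs(y) \in (mid(I_C) - \Dl_Q, mid(I_C) + \Dl_Q)$. Now apply Lemma~\ref{lemma:xd_xs} with its $\Dl_{x^*}$ set equal to the present $\Dl_x$ to obtain a threshold $\dl^{(4)}_0 > 0$ such that $\|\xs(y) - \xd(y)\| < \Dl_x$ for every $y \in I_C$. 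By the triangle inequality,
$$\xd(y) \in (mid(I_C) - \Dl_Q - \Dl_x, mid(I_C) + \Dl_Q + \Dl_x), \qquad y \in \Id^{(2)},$$
which is precisely (d). Setting $\dlO = \min\{\dl^{(1)}_0, \dl^{(2)}_0, \dl^{(3)}_0, \dl^{(4)}_0\}$ ensures all four properties hold simultaneously for every discrete interval community on $I_C$ with distance $\dl < \dlO$. The only genuinely delicate point in this plan is verifying in (d) that the approximation $\xs \approx \xd$ really can be made uniform in $y$ with the constant $\Dl_x$ chosen independently of the location of $y \in \Id^{(2)}$; but this is exactly the content of Lemma~\ref{lemma:xd_xs}, so no new estimate is needed.
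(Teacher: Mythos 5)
Your proposal is correct and follows essentially the same route as the paper's own proof: choose a single $\dlO$ small enough that Lemma~\ref{lemma:xd} (with $\Dl_{x^*}=\Dl_Q$) gives uniqueness and parts (b)--(c), while Lemma~\ref{lemma:xd_xs} (with $\Dl_{x^*}=\Dl_x$) together with the continuous-model facts $\xs(y)\in(y,mid(I_C))$ and $\xs(y)\in(mid(I_C),y)$ gives part (d) via the triangle inequality. The only cosmetic difference is your explicit invocation of Corollary~\ref{cor:optimal_production_rate} to justify the Dirac-delta form of $\bCsd$; the paper simply takes that form as given by Eq.~\eqref{eq:QCsd}, so this step, while harmless, is not needed.
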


\begin{proof}
Let $\dlO$, $ \dlO > 0$, be such that for all discrete interval communities $\dC=\Cd$ on $I_C$ with distance $\dl < \dlO$ the following is true,
\begin{enumerate}
\item[1)] the optimization problem
$$\xd(y) = \arg \max_{x \in \setR}  q(x|y) \PCd(x)$$
has a unique solution for $y \in I_C$.
\item[2)] for $y \in \Id^{(1)}$ we have that
$$\xd(y) \in (y,mid(I_C)).$$
\item[3)] for $y \in \Id^{(3)}$ we have that
$$\xd(y) \in (mid(I_C),y).$$
\item[4)] we have that
$$ || \xs(y) - \xd(y) || < \Dl_x, \qquad y \in I_C,$$
where
$$\xs(y) = \arg \max_{x \in \setR} q(x|y)P_C(x)$$
and
$$P_C(x) = E_p \int_{I_C} p(x|y) dy.$$
\end{enumerate}
Note that such a $\dlO$ exists by Lemma~\ref{lemma:xd}~and~\ref{lemma:xd_xs}.

Using Property 1)-3) of the above construction, we obtain Property a)-c) of the lemma.

Furthermore, combining Property 4  of the above construction with the analysis in~\cite{continuous_model_arxiv} which shows that
$$ \xs(y) \in (y,mid(I_C)), \qquad y \in [mid(I_C) - L_C, mid(I_C)]$$
and
$$ \xs(y) \in (mid(I_C),y), \qquad y \in [mid(I_C), mid(I_C)+L_C],$$
it follows that
$$ \xd(y) \in  (mid(I_C) - \Dl_Q - \Dl_x, mid(I_C) + \Dl_Q + \Dl_x), \qquad y \in  \Id^{(2)}.$$
The result of the lemma then follows. 
\end{proof}

The next result states that the support of the optimal content supply function $\QCsd(x)$, $x \in \setR$, is contained in $I_C$ for all discrete interval communities on $I_C$ with a small enough distance $\dl$.
More precisely, we have the following result.
\begin{lemma}\label{lemma:QCd2}
Consider constants  $E_p$, $0 < E_p \leq 1$,  and $E_q$, $0< E_q$, and an interval $I_C  =   [ mid(I_C) - L_C, mid(I_C) + L_C) \subset \setR$. 
Then there exists a $\dlO >0$ such that for all discrete interval communities $\dC =\Cd$ on $I_C$ with distance $\dl < \dlO$ the following is true.
Suppose that
$$\aCd(y) = E_p, \qquad  y \in \intdd,$$
and
$$\int_{\setR} \bCsd(x|y) dx = E_q,   \qquad y \in \intsd,$$
as well as
$$L_C < \frac{L}{2}.$$
Then we have for the content supply function $\QCsd(x)$, $x \in \setR$, as given by Equation~\eqref{eq:QCsd} that
$$supp(\QCsd(\cdot)) \subseteq [mid(I_C) - \LSd, mid(I_C) + \LSd] \subset I_C,$$
where
$$0< \LSd < L_C.$$
\end{lemma}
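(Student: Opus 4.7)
The plan is to reduce the claim to a uniform gap statement for the maximizer $\xd(y)$, and then obtain that gap by combining the continuous-limit analysis imported from~\cite{continuous_model_arxiv} with the approximation result Lemma~\ref{lemma:xd_xs}. From the assumed form of $\QCsd$ in Eq.~\eqref{eq:QCsd} we have
$$\QCsd(x) = E_q \sum_{y \in \intsd} \delta(x - \xd(y))\, q(\xd(y)|y),$$
so $supp(\QCsd(\cdot)) \subseteq \{\xd(y) : y \in \intsd\}$. It therefore suffices to exhibit $\eta > 0$ and a threshold $\dlO > 0$ such that for every discrete interval community $\dC = \Cd$ on $I_C$ with distance $\dl < \dlO$, every $y \in \intsd$ satisfies $\xd(y) \in [mid(I_C) - L_C + \eta, mid(I_C) + L_C - \eta]$. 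Then taking $\LSd := L_C - \eta < L_C$ will complete the proof.

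To produce $\eta$, I would pass to the continuous analogue $\xs(y) = \arg\max_{x \in \setR} q(x|y) P_C(x)$ with $P_C(x) = E_p \int_{I_C} p(x|y)\,dy$. The results of~\cite{continuous_model_arxiv}, invoked throughout Appendix~\ref{app:xd_Ddy}, assert that under the hypotheses $\aC \equiv E_p$ and $L_C < L/2$, $\xs$ is continuous and strictly increasing on $I_C$ with $\xs(y) \in (y, mid(I_C))$ for $y \in [mid(I_C) - L_C, mid(I_C))$ and $\xs(y) \in (mid(I_C), y)$ for $y \in (mid(I_C), mid(I_C) + L_C]$. In particular both $\xs(mid(I_C) - L_C) - (mid(I_C) - L_C)$ and $(mid(I_C) + L_C) - \xs(mid(I_C) + L_C)$ are strictly positive, so setting
$$a := \min\Bsl \xs(mid(I_C) - L_C) - (mid(I_C) - L_C),\ (mid(I_C) + L_C) - \xs(mid(I_C) + L_C) \Bsr > 0$$
and invoking strict monotonicity of $\xs$ yields $\xs(y) \in [mid(I_C) - L_C + a, mid(I_C) + L_C - a]$ for every $y \in I_C$.

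Next I would transfer this gap to the discrete setting via Lemma~\ref{lemma:xd_xs} applied with $\Dl_{x^*} := a/2$: it supplies a threshold $\dlO > 0$ such that for every discrete interval community $\dC = \Cd$ on $I_C$ with distance $\dl < \dlO$, $\Vert \xd(y) - \xs(y) \Vert < a/2$ for all $y \in I_C$. Combining this with the continuous inclusion gives $\xd(y) \in [mid(I_C) - L_C + a/2, mid(I_C) + L_C - a/2]$ for all $y \in I_C$, and in particular for all $y \in \intsd \subseteq I_C$. Setting $\LSd := L_C - a/2$ then delivers $supp(\QCsd(\cdot)) \subseteq [mid(I_C) - \LSd, mid(I_C) + \LSd] \subset I_C$ as required.

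The main obstacle is the continuous-limit input, namely the strict positivity of the boundary gap $a$, which is the monotonicity/boundary statement for $\xs$ proved in~\cite{continuous_model_arxiv} and already relied upon in Lemma~\ref{lemma:qP_C} and Lemma~\ref{lemma:xd}; once $a > 0$ is in hand, the rest of the argument is a one-step approximation via Lemma~\ref{lemma:xd_xs}. A minor technical point is confirming that every $y \in \intsd$ actually lies inside $I_C$ so that Lemma~\ref{lemma:xd_xs} applies directly; this is automatic under the identification $\intsd = \Ads \cap I_C$ used throughout Section~\ref{section:dinterval_community}, and since $\xs$ and $\xd$ depend continuously on $y$, any boundary slack of at most $\dl$ can in any case be absorbed into the choice of $\dlO$.
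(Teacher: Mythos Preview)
Your proposal is correct and follows essentially the same route as the paper: extract from~\cite{continuous_model_arxiv} that the continuous maximizer $\xs(y)$ lies strictly inside $I_C$ with a uniform positive gap, then invoke Lemma~\ref{lemma:xd_xs} with tolerance equal to half that gap to push the conclusion to $\xd(y)$. The paper phrases the gap via the half-length $L^*_C$ of the image of $\xs$ (so that your $a = L_C - L^*_C$ and your $\LSd = L_C - a/2$ coincide with its $\LSd = L^*_C + \Dl_x$), and additionally threads in the three-region decomposition of Lemma~\ref{lemma:QCd1}, but that extra step is not needed for the containment and your more direct argument suffices.
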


\begin{proof}
From the analysis in~\cite{continuous_model_arxiv}  we have for the function $\xs(y)$, $y \in I_C$, given by
$$\xs(y) = \arg \max_{x \in \setR} q(x|y)P_C(x)$$
where
$$P_C(x) = E_p \int_{I_C} p(x|y) dy,$$
that the image $I^*_C$ of $\xs(y)$, $y \in I_C$, is such that
$$I^*_C \subseteq [mid(I_C) - L^*_C,mid(I_C) + L^*_C]$$
where
$$ 0 < L^*_C < L_C.$$
Using this result, let
$$\Dl_x = \frac{L_C - L^*_C}{2} > 0$$
and
$$\Dl_Q = \frac{L^*_C}{2} > 0.$$
Using this definition of $\Dl_x$, let
$$\LSd = L^*_C + \Dl_x.$$
Note that
$$ 0 < \LSd < L_C.$$

Using the above definition of $\Dl_x$, let $\dl_0^{(1)} > 0$ be such that for all discrete interval communities $C = \Cd$  on $I_C$ with distance $\dl < \dl_0^{(1)}$ we have that
$$ || \xd(y) - \xs(y) || < \Dl_x.$$
Note that such a $\dl_0^{(1)}$ exists by Lemma~\ref{lemma:xd_xs}. 

Using the above definitions of $\Dl_x$ and $\Dl_Q$, let for a given discrete interval community $\dC=\Cd$ on the interval $I_C$ with distance $\dl$ the sets $\Id^{(1)}$, $\Id^{(2)}$, and  $\Id^{(3)}$ be given by
$$\Id^{(1)} = [mid(I_C) - L_C, mid(I_C) - \Dl_Q] \cap \intsd,$$
$$\Id^{(2)} = (mid(I_C) - \Dl_Q, mid(I_C) + \Dl_Q) \cap \intsd,$$
and
$$\Id^{(3)} = [mid(I_C) +\Dl_Q, mid(I_C) + L_C] \cap \intsd.$$
Furthermore, using the definitions of $\Id^{(1)}$, $\Id^{(2)}$, and  $\Id^{(3)}$, let  $\dl_0^{(2)}>0$ be such that  for all discrete interval communities on $I_C$ with distance $\dl < \dl_0^{(2)}$ we have that
\begin{enumerate}
\item[a)]  for $y \in \Id^{(1)}$  we have that
$$\xd(y) \in (y,mid(I_C)),$$
\item[b)]  for $y \in \Id^{(3)}$ we have that
$$\xd(y) \in (mid(I_C),y),$$
\item[c)]
and for $y \in \Id^{(2)}$ we have that
$$  \xd(y) \in  (mid(I_C) - \Dl_Q - \Dl_x, mid(I_C) + \Dl_Q + \Dl_x).$$
\end{enumerate}
Note that such a  $\dl_0^{(2)}$ exists by Lemmas~\ref{lemma:xd}. 

Let
$$\dlO = \min \{ \dl_0^{(1)}, \dl_0^{(2)} \}.$$ 

It then follows that for all that  for all discrete interval communities on $I_C$ with distance  $\dl < \dlO$, we have that
$$ \xd(y) = [mid(I_C) - \LSd, mid(I_C) + \LSd], \qquad y \in \intsd,$$
and
$$supp(\QCsd(\cdot)) \subseteq [mid(I_C) - \LSd, mid(I_C) + \LSd].$$
As by construction we have that
$$ \LSd < L_C,$$
the result of the lemma then follows.
\end{proof}


\section{Proof of Proposition~\ref{prop:QCd}}\label{app:proof_QCd}
In this appendix we prove Proposition~\ref{prop:QCd} using the results of Appendix~\ref{app:QCd}. Recall that Proposition~\ref{prop:QCd} characterizes the content supply function $\QCd(x)$ for a discrete interval community  $\dC = \Cd$ that is part of a \erNash as given by  Proposition~\ref{prop:dnash}.

We obtain the result of  Proposition~\ref{prop:QCd} using the results of Appendix~\ref{app:QCd} as follows.
Let $\dcommStrucNash$ be a \erNash as given by Proposition~\ref{prop:dnash}. Then we have that for every discrete interval community $\dC = \Cd \in \sCdNash$ that
$$ L_C < \frac{L}{2}$$
where $L_C$ is the length of the interval $I_C$ such that
$$ \intdd = \Add \cap I_C,$$
as well as
$$ \aCd(y) = E_p, \qquad y \in \intdd,$$
and
$$ \bCdNash(\cdot|y) = E_q \delta(\xCd(y) - x), \qquad y \in \intsd$$
where
$$ \xCd(y) = \arg \max_{x \in \setR} q(x|y)\PCd(x),$$
and $\delta$ is the Dirac delta function.

As a result, the properties of a discrete interval community  $\dC = \Cd \in \sCdNash$ satisfy the assumption in Lemma~\ref{lemma:QCd1}~and~\ref{lemma:QCd2}. Proposition~\ref{prop:QCd} then follows immediately from Lemma~\ref{lemma:QCd1}~\ref{lemma:QCd2}.

\newpage
\section{Proof of Proposition~\ref{prop:UCdd}}\label{app:UCdd}
In this appendix we  prove Proposition~\ref{prop:UCdd}. To do that, we first study for a given discrete interval communities $\dC =\Cd$ the properties of the function $\FCdd(y)$  given by
$$ \FCdd(y) = E_p E_q \sum_{z \in \intsd} \int_{x \in \setR} \Big [ p(\xd(z)|y) q(\xd(z)|z) - c \Big ]dx, \qquad y \in \setR.$$
where $E_p$, $0 < E_p \leq 1$, and $E_q$, $E_q >0$, are the bounds on the agents' content consumption and production rates.

Our first lemma shows that $\dls \FCdd(y)$ closely approximates the function $\FCd(y)$ that we used for our analysis of interval communities $C = (I_C,I_C)$ in~\cite{continuous_model_arxiv}. More precisely, we have the following result.
\begin{lemma}\label{lemma:Riemann_FCdd}
Consider  constants  $E_p$, $0 < E_p \leq 1$,  and $E_q$, $0< E_q$, and an interval $I_C  =   [ mid(I_C) - L_C, mid(I_C) + L_C) \subset \setR$. 
For every $\Dl_U >0$ there exists a $\dlO >0$ such that for all discrete interval communities $\dC =\Cd$ on $I_C$ with distance $\dl < \dlO$ the following is true. Suppose that we have that
$$2 L_C < \min \{ b, L \}$$
where $b$ is the constant of Assumption~\ref{ass:fg},
as we well as
$$\aCd(y) = E_p, \qquad y \in \intdd.$$
Then  the function $\FCdd(y)$ given by
$$ \FCdd(y) = E_pE_q \sum_{z \in \intsd} \Big [ p(\xd(z)|y) q(\xd(z)|z) - c \Big ]dx, \qquad y \in \setR, $$
where
$$\xd(z) =  \arg\max_{x \in \setR} q(x|z) \PCd(x), \qquad z \in \intsd,$$
and
$$\PCd(x) = E_p \sum_{y \in \intdd} p(x|y),$$
has the property that
$$\Big |\dls \FCdd(y) - \FCd(y) \Big | < \Dl_U, \qquad y \in \setR,$$
for
$$\FCd(y) =  E_p E_q \int_{z \in I_C} \Big [ p(\xs(z)|y) q(\xs(z)|z) - c \Big ]dz$$
where
$$ \xs(z) =  \arg\max_{x \in \setR} q(x|z) P_C(x), \qquad z \in I_C,$$
and
$$P_C(x) = E_p \int_{I_C} p(x|y) dy.$$
\end{lemma}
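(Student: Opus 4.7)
The plan is to decompose $\dls \FCdd(y) - \FCd(y)$ into two pieces by inserting an intermediate quantity, in the same spirit as the proof of Lemma~\ref{lemma:Riemann_PCd}. Define
$$ \tilde F_{\dC}(y) := \dls E_p E_q \sum_{z \in \intsd} \Big[p(\xs(z)|y) q(\xs(z)|z) - c\Big], $$
i.e.\ the Riemann sum with the continuous-model optimizer $\xs(z)$ substituted for $\xd(z)$. Then by the triangle inequality,
$$ \big|\dls \FCdd(y) - \FCd(y)\big| \le \big|\dls \FCdd(y) - \tilde F_{\dC}(y)\big| + \big|\tilde F_{\dC}(y) - \FCd(y)\big|, $$
and it will suffice to make each piece smaller than $\Dl_U/2$ by choosing $\dl_0$ small enough.

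For the first piece, each summand is bounded by $E_p E_q [\,|p(\xd(z)|y) - p(\xs(z)|y)|\cdot q(\xd(z)|z) + p(\xs(z)|y)\cdot|q(\xd(z)|z) - q(\xs(z)|z)|\,]$. Assumption~\ref{ass:fg} provides uniform bounds on the first derivatives of $f$ and $g$, hence Lipschitz constants for $p(\cdot|y)$ and $q(\cdot|z)$ in the first argument. Combined with Lemma~\ref{lemma:xd_xs}, which makes $\|\xd(z)-\xs(z)\|$ uniformly small, each summand can be made smaller than any prescribed $\eta$. Since there are $|\intsd|$ summands and $\dls |\intsd| \le 2L_C + 2\dls$ is bounded, the first piece is of order $\eta$ and is controlled by tightening $\dl_0$.

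For the second piece, $\tilde F_{\dC}(y)$ is a Riemann sum with step $\dls$ for the integrand
$$ h_y(z) := E_p E_q \Big[p(\xs(z)|y) q(\xs(z)|z) - c\Big], \qquad z \in I_C. $$
From the analysis in~\cite{continuous_model_arxiv}, $\xs$ is continuously differentiable on $I_C$ (since $2L_C < \min\{b,L\}$), and by Assumption~\ref{ass:fg} the functions $p,q$ are twice continuously differentiable with bounded derivatives. Hence $h_y$ is differentiable in $z$ with a derivative bounded uniformly in $y$, and Proposition~\ref{prop:Riemann}, adjusted at the endpoints as in the proof of Lemma~\ref{lemma:Riemann_PCd} (the endpoints of $\intsd$ lie within $\dls$ of $mid(I_C) \pm L_C$, contributing an additional error $\le 2\dls \sup_z |h_y(z)|$), yields
$$ \big|\tilde F_{\dC}(y) - \FCd(y)\big| \le (ML_C + 2\|h_y\|_\infty)\,\dls $$
for an appropriate constant $M$ independent of $y$ and $\dl$. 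This tends to $0$ with $\dls < \dl$.

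The main obstacle is verifying a uniform-in-$y$ Lipschitz bound on $h_y$ in the variable $z$: one needs the chain-rule derivative of $p(\xs(z)|y)q(\xs(z)|z)$ in $z$, which in turn relies on a uniform bound on $|(\xs)'(z)|$ over $I_C$. This is provided by the implicit-function analysis of $\xs$ in~\cite{continuous_model_arxiv}, and the uniformity in $y$ follows because $y$ only enters through $p(\xs(z)|y)$, whose partial derivatives are bounded by Assumption~\ref{ass:fg} independently of $y$. Finally $\dl_0$ is chosen as the minimum of the $\dl$-thresholds obtained from Lemma~\ref{lemma:xd_xs} and from the Riemann-sum bound above.
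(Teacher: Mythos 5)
Your proof takes essentially the same route as the paper's: the same intermediate quantity (the Riemann sum with $\xs(z)$ substituted for $\xd(z)$), the same triangle-inequality split, the first piece controlled via the Lipschitz bounds from Assumption~\ref{ass:fg} together with Lemma~\ref{lemma:xd_xs}, and the second piece by the Riemann-sum error bound (Proposition~\ref{prop:Riemann} / Lemma~\ref{lemma:Riemann_PCd}) using the boundedness of $(\xs)'$ from~\cite{continuous_model_arxiv}. The argument is correct.
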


\begin{proof}
Note that
\begin{eqnarray}\label{eq:err_I1}
&& \Big |\dls \FCdd(y) - \FCd(y) \Big | \\
&=& E_p E_q \left | \dls \sum_{z \in \intsd} \Bsbl  p(\xd(z)|y)q(\xd(z)|z) - c \Bsbr  -  \int_{z \in I_C} \Big [ p(\xs(z)|y) q(\xs(z)|z) - c \Big ]dz \right | \nonumber\\
&\leq& \dls \sum_{z \in \intsd} \Big | p(\xd(z)|y) q(\xd(z)|z) - p(\xs(z)|y) q(\xs(z)|z) \Big | + \cdots \nonumber\\
&&+   \left |  \dls \sum_{z \in \intsd} \Big [ p(\xs(z)|y) q(\xs(z)|z) - c \Big ] -  \int_{z \in I_C} \Big [ p(\xs(z)|y) q(\xs(z)|z) - c \Big ] dz \right |. \nonumber
\end{eqnarray}

In the following we show that  there exists a $\dlO >0$ such that for all discrete interval communities on $I_C$ with distance $\dl < \dlO$ we have that
\begin{equation}\label{eq:err_I1_1}
\dls \sum_{z \in \intsd} \Big | p(\xd(z)|y) q(\xd(z)|z) - p(\xs(z)|y) q(\xs(z)|z) \Big | < \frac{\Dl_U}{2E_p E_q}
\end{equation}
and
\begin{equation}\label{eq:err_I1_2}
\left |  \dls \sum_{z \in \intsd} \Big [ p(\xs(z)|y) q(\xs(z)|z) - c \Big ] -  \int_{I_C} \Big [ p(\xs(z)|y) q(\xs(z)|z) - c \Big ] dz \right | < \frac{\Dl_U}{2 E_p E_q}.
\end{equation}
Using Eq.~\ref{eq:err_I1}, these two results establish that
$$E_p E_q \left | \dls \sum_{z \in \intsd}   \Bsbl p(\xd(z)|y)q(\xd(z)|z)  - c \Bsbr -  \int_{I_C} \Big [ p(\xs(z)|y) q(\xs(z)|z) - c \Big ]dz \right | <  \Dl_U,$$
which is the bound that we wish to obtain.

In order to derive these two results, we first consider the term given by Eq.~\eqref{eq:err_I1_1}
for which we have that
\begin{eqnarray*}
&\dls& \sum_{z \in \intsd} \Big | p(\xd(z)|y) q(\xd(z)|z) - p(\xs(z)|y) q(\xs(z)|z) \Big | \\
&\leq& \dls \sum_{z \in \intsd} p(\xd(z)|y) \Big | q(\xd(z)|z) - q(\xs(z)|z) \Big | +   \dls \sum_{z \in \intsd} q(\xs(z)|z) \Big | p(\xd(z)|y) - p(\xs(z)|y) \Big |  \\
&\leq& \dls \sum_{z \in \intsd}  \Big | q(\xd(z)|z) - q(\xs(z)|z) \Big | +   \dl \sum_{z \in \intsd} \Big | p(\xd(z)|y) - p(\xs(z)|y) \Big |    \\
&\leq& \dls \max_{x \in supp(g(\cdot))} |g'(x)|\sum_{z \in \intsd} ||\xd(z) - \xs(z)| | + \max_{x \in [0,L]} |f'(x)| \sum_{z \in \intsd} ||\xd(z) - \xs(z) ||,
\end{eqnarray*}
where we used the fact that by Assumption~\ref{ass:fg} the derivatives of the function $f$ and $g$ that define $p(x|y)$ and $q(x|y)$ are bounded. In particular, by Assumption~\ref{ass:fg} the functions $f$ and $g$ have bounded derivatives, i.e. we have that
$$ \left | f'(x) \right | < M_f, \qquad x \in [0,L],$$
and
$$ \left | g'(x) \right | < M_g, \qquad x \in supp(g(\cdot)).$$
It then follows that
\begin{eqnarray}\label{eq:I_1_1}
 \dls \sum_{z \in \intsd} \Big | p(\xd(z)|y) q(\xd(z)|z) - p(\xs(z)|y) q(\xs(z)|z) \Big |
 < \dls ( M_f +  M_g)  \sum_{z \in \intsd} ||\xd(z) - \xs(z) ||.
\end{eqnarray}

Next we consider the term given by Eq.~\eqref{eq:err_I1_2},
$$ \left |  \dls \sum_{z \in \intsd} \Big [ p(\xs(z)|y) q(\xs(z)|z) - c \Big ] -  \int_{z \in I_C} \Big [ p(\xs(z)|y) q(\xs(z)|z) - c \Big ] dz \right |.
$$
Note that the expression
$$ \dls \sum_{z \in \intsd} \Big [ p(\xs(z)|y) q(\xs(z)|z) - c \Big ]$$
can be interpreted as  the Riemann sum approximation of the integral
$$\int_{z \in I_C} \Big [ p(\xs(z)|y) q(\xs(z)|z) - c \Big ] dz,$$
and we can use the same argument as given in Lemma~\ref{lemma:Riemann_PCd} to establish an upper bound on the  term given by Eq.~\eqref{eq:err_I1_2}.

More precisely, by Assumption~\ref{ass:fg} the functions $f$ and $g$ have bounded derivatives, i.e. we have that
$$ \left | f'(x) \right | < M_f, \qquad x \in [0,L],$$
and
$$ \left | g'(x) \right | < M_g, \qquad x \in supp(g(\cdot)).$$
Furthermore, from the analysis in~\cite{continuous_model_arxiv} we have that there exists a constant $M_z > 0$ such that
$$ \left | \frac{d}{dz} \xs(z) \right | < M_z, \qquad y \in I_C.$$
We then have that
\begin{eqnarray*}
\left | \frac{d}{dz}  p(\xs(z)|y) q(\xs(z)|z) \right |
&\leq&    \left | f'(||\xs(z) - y||) \right | \cdot  \left| \frac{d}{dz} \xs(z) \right |+ \\
&& \left  | g'(||\xs(z) - z||) \right | \cdot \left ( \left| \frac{d}{dz} \xs(z) \right | + 1 \right ) \\
&<& M_fM_z + M_g(M_z+1).
\end{eqnarray*}
Using a similar argument as given in the proof of Lemma~\ref{lemma:Riemann_PCd}, one can then show that for
$$M =   M_fM_z + M_g(M_z+1),$$
we have that
\begin{eqnarray*}
 \left |  \dls \sum_{z \in  \intsd} \Big [ p(\xs(z)|y) q(\xs(z)|z) - c \Big ] -  \int_{z \in I_C} \Big [ p(\xs(z)|y) q(\xs(z)|z) - c \Big ] dz \right | \\ 
< 2 (ML_C + 1) \dls.
\end{eqnarray*}
As by definition we have that
$$\dls < \dl,$$
it follows that
\begin{eqnarray}\label{eq:I_1_2} \nonumber
 \left |  \dls \sum_{z \in  \intsd} \Big [ p(\xs(z)|y) q(\xs(z)|z) - c \Big ] -  \int_{z \in I_C} \Big [ p(\xs(z)|y) q(\xs(z)|z) - c \Big ] dz \right | \\ 
< 2 (ML_C + 1) \dl.
\end{eqnarray}

Let $\dlO$,
$$ 0 < \dlO < \frac{\Dl_U}{4 E_p E_q(ML_C + 1)} $$
be such that for  all discrete interval communities on $I_C$ with distance $\dl < \dlO$ we have that
$$||\xd(z) - \xs(z)|| < \frac{\Dl_U}{8 E_p E_qL_C(M_f+M_g)}, \qquad z \in \setR.$$
Note that such  a $\dlO$ exists by Lemma~\ref{lemma:xd_xs}.
Using this choice of $\dlO$, it the follows from Eq.~\eqref{eq:I_1_2} that for the error term given by Eq.~\eqref{eq:err_I1_2} we have that
$$  \left |  \dls \sum_{z \in \intsd} \Big [ p(\xs(z)|y) q(\xs(z)|z) - c \Big ] -  \int_{z \in I_C} \Big [ p(\xs(z)|y) q(\xs(z)|z) - c \Big ] dz \right | < \frac{\Dl_U}{2E_p E_q},$$

Therefore, in order to prove the lemma it remains to show that  for the above choice of $\dlO$ we obtain for the error term in Eq.~\eqref{eq:err_I1_1} that
$$ \dls \sum_{z \in \intsd} \Big | p(\xd(z)|y) q(\xd(z)|z) - p(\xs(z)|y) q(\xs(z)|z) \Big | < \frac{\Dl_U}{2E_p E_q}.$$
As by construction we have  for  all discrete interval communities on $I_C$ with distance $\dl < \dlO$ that
$$||\xd(z) - \xs(z)|| < \frac{\Dl_U}{8 E_p E_qL_C(M_f+M_g)}, \qquad z \in \setR,$$
it follows from  Eq.~\eqref{eq:I_1_2} that
$$
 \dls \sum_{z \in \intsd} \Big | p(\xd(z)|y) q(\xd(z)|z) - p(\xs(z)|y) q(\xs(z)|z) \Big |
 < \frac{\Dl_U}{8 E_p E_q L_C}   \sum_{z \in \intsd} \dl_s.
$$
Note that we have that
$$\sum_{z \in \intsd} \dls < 2L_C + \dls.$$
As by assumption we have that
$$ 2L_C > \dls,$$
we obtain that
$$\sum_{z \in \intsd} \dls < 4 L_C$$
and 
$$ \dls \sum_{z \in \intsd} \Big | p(\xd(z)|y) q(\xd(z)|z) - p(\xs(z)|y) q(\xs(z)|z) \Big | < \frac{\Dl_U}{2E_p E_q}.$$
The result of the lemma then follows.

\end{proof}

The next lemma provides additional properties of the function $\FCdd(y)$, $y \in \setR$.

\begin{lemma}\label{lemma:FCdd}
Consider  constants  $E_p$, $0< E_p\leq1 $, and $E_q$, $0< E_q$, and an interval $I_C  =   [ mid(I_C) - L_C, mid(I_C) + L_C) \subset \setR$. 
For every $\Dl_U >0$ there exists a $\dlO >0$ such that for all discrete interval communities $\dC =\Cd$ on $I_C$ with distance $\dl < \dlO$ the following is true.
 Suppose that we have that
$$2 L_C < \min \{ b, L \}$$
where $b$ is the constant of Assumption~\ref{ass:fg},
as we well as
$$\aCd(y) = E_p, \qquad y \in \intdd.$$
Then  the function $\FCdd(y)$ given by
$$\FCdd(y) =  E_p E_q \sum_{z \in \intsd} \Big [ p(\xd(z)|y) q(\xd(z)|z) - c \Big ], \qquad y \in \setR, $$
where
$$\xd(z) =  \arg\max_{x \in \setR} q(x|z) \PCd(x), \qquad z \in \intsd,$$
and
$$\PCd(x) = E_p \sum_{y \in \intdd} p(x|y),$$
has the properties that
\begin{enumerate}
\item[a)]  $\FCdd(y)$ is differentiable in on $\setR$. 
\item[b)]  $\FCdd(y)$ is strictly increasing on $[mid(I_C) - L,mid(I_C) - \Dl_U]$, and we have that
$$\frac{d}{dy} \FCdd(y) > 0, \qquad y \in [mid(I_C) - L,mid(I_C) - \Dl_U].$$
\item[c)]  $\FCd(y)$ is strictly decreasing on $[mid(I_C) + \Dl_U, mid(I_C) + L)$, and we have that
$$\frac{d}{dy} \FCdd(y) < 0, \qquad y \in  [mid(I_C) + \Dl_U, mid(I_C) + L).$$
\end{enumerate}
\end{lemma}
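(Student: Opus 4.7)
My plan is to prove the three parts separately, with parts (b) and (c) following the same template by symmetry.

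For part (a), differentiability of $\FCdd(y)$ on $\setR$ is essentially a one-line observation: $\FCdd(y)$ is a finite sum
\[
\FCdd(y) = E_p E_q \sum_{z \in \intsd} \bigl[ p(\xd(z)|y)\, q(\xd(z)|z) - c \bigr],
\]
in which $q(\xd(z)|z)$ and $c$ are constants in $y$, while $p(\xd(z)|y) = f(\|\xd(z) - y\|)$ is differentiable in $y$ by Assumption~\ref{ass:fg} (since $f$ is three times continuously differentiable and the torus distance is smooth away from the cut locus, which we avoid by the assumption $2L_C < L$ placing all relevant $\xd(z)$ within $L$ of any $y$ we consider on the community's support). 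So $\FCdd$ is differentiable on $\setR$.

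For parts (b) and (c), the strategy is to transfer the sign of the derivative from the continuous companion
\[
\FCd(y) = E_p E_q \int_{I_C} \bigl[ p(\xs(z)|y)\, q(\xs(z)|z) - c \bigr]\,dz,
\]
whose derivative in $y$ is already known from the analysis of~\cite{continuous_model_arxiv} to satisfy $\tfrac{d}{dy}\FCd(y) > B_1$ on $[mid(I_C) - L, mid(I_C) - \Dl_U]$ and $\tfrac{d}{dy}\FCd(y) < -B_1$ on $[mid(I_C) + \Dl_U, mid(I_C) + L)$ for some $B_1 > 0$ (continuity of the derivative plus strict monotonicity on a compact set). Differentiating term by term gives
\[
\dls \tfrac{d}{dy}\FCdd(y) = E_p E_q \, \dls \sum_{z \in \intsd} f'(\|\xd(z) - y\|)\cdot \tfrac{d}{dy}\|\xd(z) - y\| \cdot q(\xd(z)|z),
\]
which is the Riemann sum approximation (with node spacing $\dls$) of $\tfrac{d}{dy}\FCd(y)$, but with $\xd(z)$ in place of $\xs(z)$. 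I would bound the error $|\dls\tfrac{d}{dy}\FCdd(y) - \tfrac{d}{dy}\FCd(y)|$ in two pieces: first, $|\tfrac{d}{dy}p(\xd(z)|y)q(\xd(z)|z) - \tfrac{d}{dy}p(\xs(z)|y)q(\xs(z)|z)|$ is controlled by $\|\xd(z)-\xs(z)\|$ times a constant depending only on $\max|f'|, \max|f''|, \max|g'|$ (Assumption~\ref{ass:fg}), and by Lemma~\ref{lemma:xd_xs} this can be made arbitrarily small uniformly in $z$ by shrinking $\dl$; second, the usual Riemann-sum error (Proposition~\ref{prop:Riemann}) applied to the integrand $\tfrac{d}{dy}p(\xs(z)|y)q(\xs(z)|z)$ in $z$ is $O(\dls)$ once we know this integrand has a bounded $z$-derivative, which in turn follows from the bounded derivative of $\xs(z)$ established in~\cite{continuous_model_arxiv} and the $C^2$-bounds on $f,g$.

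Choosing $\dlO$ small enough so that both error pieces together are strictly less than $B_1/2$ then yields $\dls\tfrac{d}{dy}\FCdd(y) > B_1/2 > 0$ on $[mid(I_C) - L, mid(I_C) - \Dl_U]$ and $\dls\tfrac{d}{dy}\FCdd(y) < -B_1/2 < 0$ on $[mid(I_C) + \Dl_U, mid(I_C) + L)$, giving parts (b) and (c) after dividing by the positive constant $\dls$.

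The main obstacle I expect is the first error piece: to apply the Riemann-sum bound cleanly I need the integrand in $\FCd(y)$ to be $C^1$ in $z$, but $\xs(z)$ is only known to be differentiable on $I_C \setminus (mid(I_C) - \Dl_{x^*}, mid(I_C) + \Dl_{x^*})$ from Proposition~\ref{prop:xd} (and similarly $\xd(z)$). I would handle this by splitting the summation/integration into three sub-intervals as in Lemma~\ref{lemma:QCd1}, applying the Riemann bound on the two outer intervals where smoothness holds, and bounding the middle piece directly by $O(\Dl_{x^*})$ using the uniform boundedness of the integrand; since $\Dl_{x^*}$ can itself be chosen as small as we like before fixing $\dlO$, this contribution can also be made smaller than a prescribed fraction of $B_1$.
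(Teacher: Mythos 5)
Your plan matches the paper's proof: part~(a) is the same one-line finite-sum observation, and for parts~(b),(c) the paper likewise lower-bounds $\tfrac{d}{dy}\FCd(y)$ by a constant $B>0$ away from $\middd$ (using strict monotonicity and continuity of the derivative from the continuous analysis), then splits $\big|\dls\tfrac{d}{dy}\FCdd(y)-\tfrac{d}{dy}\FCd(y)\big|$ into exactly your two pieces --- the $\xd\!\to\!\xs$ replacement error controlled by Lemma~\ref{lemma:xd_xs} plus bounded derivatives of $f,g$, and a Riemann-sum error controlled as in Lemma~\ref{lemma:Riemann_PCd}. The only divergence is your closing worry about smoothness of $\xs(\cdot)$: you misattribute Proposition~\ref{prop:xd} (which concerns the discrete $\xd$) to the continuous $\xs$; the paper instead takes differentiability of $\xs$ on all of $I_C$ with a uniform derivative bound directly from the cited continuous analysis (as also used in the proof of Lemma~\ref{lemma:Riemann_FCdd}), so the three-interval split you propose is unnecessary, though it would still yield the same $O(\dl)$ bound.
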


\begin{proof}
We first prove part a) of the lemma. To do this, we note that by Assumption~\ref{ass:fg} the function $p(x|y)$, $x \in \setR$, is differentiable in $y$ for $y \in \setR$. Using this result, it then follows that the function $ \FCdd(y)$ is differentiable in $y$ on $\setR$.

We next prove part b) of the lemma. 
To do that, we consider the function $\FCd(y)$, $y \in \setR$, for the interval community $C = (I_C,I_C)$ given by
$$\FCd(y) =  E_p E_q \int_{z \in I_C} \Big [ p(\xs(z)|y) q(\xs(z)|z) - c \Big ] dz,$$
where
$$ \xs(z) =  \arg\max_{x \in \setR} q(x|z) P_C(x), \qquad z \in I_C.$$
From the analysis in~\cite{continuous_model_arxiv} we have that the function $\FCd(y)$ is continuous on $y$ and strictly increasing in $y$ on $[mid(I_C) - L,mid(I_C))$. This implies that  there exists a constant  $B > 0$ such that
$$\frac{d}{dy} \FCd(y) > B, \qquad y \in [mid(I_C) - L,mid(I_C) - \Dl_U].$$
Using this result, it follows that if we have that
$$ \left | \dls \frac{d}{dy} \FCdd(y) - \frac{d}{dy} \FCd(y) \right | < \frac{B}{2}, \qquad y \in  [mid(I_C) - L,mid(I_C) - \Dl_U],$$
then we have that
$$ \dls \frac{d}{dy} \FCdd(y) > \frac{B}{2}, \qquad y \in [mid(I_C) - L,mid(I_C) - \Dl_U],$$
and the function $\FCdd(y)$ is strictly increasing on $ [mid(I_C) - L,mid(I_C) - \Dl_U]$. 

By Assumption~\ref{ass:fg} we have that the function $f$ has bounded first and second derivatives, and  we can use  a similar argument as given in proof of Lemma~\ref{lemma:Riemann_PCd} to show that there exists a $\dl^{(1)}_0 > 0$
such that for all discrete interval communities on $I_C$ with distance $\dl < \dl^{(1)}_0$ we have that
$$\dls \sum_{z \in \intsd} \left | \frac{d}{dy}p(\xd(z)|y) q(\xd(z)|z) - \frac{d}{dy}p(\xs(z)|y) q(\xs(z)|z) \right | < \frac{B}{2E_p E_q}$$
and
$$
\left |  \dls \sum_{z \in \intsd} \Big [ \frac{d}{dy}p(\xs(z)|y) q(\xs(z)|z) - c \Big ] -  \int_{I_C} \Big [ \frac{d}{dy} p(\xs(z)|y) q(\xs(z)|z) - c \Big ] dz \right | < \frac{B}{2 E_p E_q}.$$
Note that this results establishes part b) of the lemma.

Using the same argument as given for part b), we can prove part c) of the lemma, i.e. prove that there exists a $\dl^{(2)}_0 >0$
such that for all discrete interval communities on $I_C$ with distance $\dl < \dl^{(2)}_0$ we have that the $\FCd(y)$ is strictly decreasing on $[mid(I_C) + \Dl_U, mid(I_C) + L)$ and we have that
$$\frac{d}{dy} \FCdd(y) < 0, \qquad y \in  [mid(I_C) + \Dl_U, mid(I_C) + L).$$

The result of the lemma then follows by setting
$$\dlO = \min \Bsl \dl^{(1)}_0, \dl^{(2)}_0 \Bsr.$$
\end{proof}


\subsection{Proof of Proposition~\ref{prop:UCdd}}\label{app:proof_UCdd}
We are now in the position to prove Proposition~\ref{prop:UCdd}. Recall that Proposition~\ref{prop:UCdd} characterizes the content consumption  utility rate function  $\UCdd(y)$ for a discrete interval community  $\dC = \Cd$ that is part of a \erNash as given by  Proposition~\ref{prop:dnash}.

We obtain the result of  Proposition~\ref{prop:UCdd} using the results of Lemma~\ref{lemma:FCdd} as follows.
Let $\dcommStrucNash$ be a \erNash as given by Proposition~\ref{prop:dnash}. Then we have that for every discrete interval community $\dC = \Cd \in \sCdNash$ that
$$ \aCd(y) = E_p, \qquad y \in \intdd,$$
and
$$ L_C < \frac{L}{2},$$
where $L_C$ is the length of the interval $I_C$ such that
$$ \intdd = \Add \cap I_C.$$
Furthermore, for a  discrete interval community  $\dC = \Cd \in \sCdNash$ as given  by Proposition~\ref{prop:dnash} we have that
$$\UCdd(y) =  E_p E_q \sum_{z \in \intsd} \Big [ p(\xd(z)|y) q(\xd(z)|z) - c \Big ], \qquad y \in \intdd, $$
where
$$\xd(z) =  \arg\max_{x \in \setR} q(x|z) \PCd(x), \qquad z \in \intsd,$$
and
$$\PCd(x) = E_p \sum_{y \in \intdd} p(x|y).$$
As a result, the properties of a discrete interval community  $\dC = \Cd \in \sCdNash$ satisfy the assumptions in Lemma~\ref{lemma:FCdd} and we have that
$$\UCdd(y) =  \FCdd(y) = E_p E_q \sum_{z \in \intsd} \Big [ p(\xd(z)|y) q(\xd(z)|z) - c \Big ], \qquad y \in \intdd,$$
where the function $\FCdd(y)$ is as given in Lemma~\ref{lemma:FCdd}.
Proposition  then follows immediately from Lemma~\ref{lemma:FCdd}.

\newpage
\section{Proof of Proposition~\ref{prop:UCsd}}\label{app:UCsd}
In this appendix we prove  Proposition~\ref{prop:UCsd}. To do this, we first study  for a given discrete interval communities $\dC =\Cd$ the properties of the function $\FCdd(y)$  given by
$$\FCsd(y) = E_q [  q(\xd(y)|y) \PCd(\xd(y)- \aCd c], \qquad  x \in \setR,$$
where $E_q$, $0 < E_q$, is the bound on the content production rate of agent $y \in \Add$, $\PCd(x)$ is the demand function given by
$$ \PCd(x) = E_p \sum_{y \in \intdd} p(x|y),$$
as well as
$$\xd(y) = \arg \max_{x \in \setR} q(x|y)\PCd(x)$$
and
$$\aCd = \sum_{y \in \intdd} E_p.$$

Our first lemma shows that $\dld \FCsd(y)$ closely approximates the function $\FCd(y)$ that we used for our analysis of interval communities $C = (I_C,I_C)$ in~\cite{continuous_model_arxiv}. More precisely, we have the following result.

\begin{lemma}\label{lemma:Riemann_FCsd}
Consider  constants  $E_p$, $0 < E_p \leq 1$, and $E_q$, $0< E_q$, and an interval $I_C  =   [ mid(I_C) - L_C, mid(I_C) + L_C) \subset \setR$. 
For every $\Dl_U >0$ there exists a $\dlO >0$ such that for all discrete interval communities $\dC =\Cd$ on $I_C$ with distance $\dl < \dlO$ the following is true. If we have that
$$\aCd(y) = E_p, \qquad y \in \intdd,$$
and
$$2 L_C < \min \{ b, L \}$$
where $b$ is the constant of Assumption~\ref{ass:fg}, then  the function $\FCsd(y)$ given by
$$ \FCsd(y) = E_q \Big [  q\big (\xd(y)|y \big ) \PCd(\xd(y))- \aCd c \Big ], \qquad  y \in \setR,$$
where
$$\xd(y) = \arg \max_{x \in \setR} q(x|y)\PCd(x)$$
and
$$\aCd = \sum_{y \in \intdd} E_p,$$
has the property that
$$\Big |\dld \FCsd(y) - \FCd(y) \Big | < \Dl_U, \qquad y \in I_C,$$
where
$$ \FCd(y) = E_q \Big [  q \big (\xs(y)|y \big ) P_C(\xs(y))- \aC c \Big ]$$
with
$$\xs(y) = \arg \max_{x \in \setR} q(x|y)P_C(x)$$
and
$$\aC = 2 L_C E_p.$$
\end{lemma}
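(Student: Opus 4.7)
\medskip

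\noindent\textbf{Proof plan for Lemma \ref{lemma:Riemann_FCsd}.}
The plan is to reduce the claim to three approximation facts already established in earlier lemmas, plus one elementary counting estimate on the number of agents in $\intdd$. Writing out the two utility expressions explicitly and using $\aCd = E_p K_d$ with $K_d = |\intdd|$, the difference $\dld \FCsd(y) - \FCd(y)$ splits as
\begin{equation*}
\dld \FCsd(y) - \FCd(y)
= E_q \Bigl[ \dld\, q(\xd(y)|y)\PCd(\xd(y)) - q(\xs(y)|y) P_C(\xs(y)) \Bigr]
- E_q c \bigl[ \dld \aCd - \aC \bigr] .
\end{equation*}
I would handle the two brackets separately and make each smaller than $\Dl_U/(2E_q)$.

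\medskip

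\noindent For the first bracket, the idea is a three term telescoping identity:
\begin{align*}
\dld\, q(\xd(y)|y)\PCd(\xd(y)) - q(\xs(y)|y) P_C(\xs(y))
&= q(\xd(y)|y) \bigl[\dld \PCd(\xd(y)) - P_C(\xd(y))\bigr] \\
&\quad + q(\xd(y)|y) \bigl[P_C(\xd(y)) - P_C(\xs(y))\bigr] \\
&\quad + \bigl[q(\xd(y)|y) - q(\xs(y)|y)\bigr] P_C(\xs(y)) .
\end{align*}
Since $q,P_C \in [0,1]$-scale with bounds coming from Assumption \ref{ass:fg}, each of these is easy to control. The first summand is at most $|\dld \PCd(\xd(y)) - P_C(\xd(y))|$ and can be made arbitrarily small by Lemma \ref{lemma:Riemann_PCd}. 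The third summand is bounded by $|q(\xd(y)|y) - q(\xs(y)|y)|$ times the (bounded) value of $P_C(\xs(y))$ and is controlled by Lemma \ref{lemma:qPCd_qPC}. The middle summand is bounded by $\max_x |P_C'(x)| \cdot \|\xd(y)-\xs(y)\|$, using that $P_C$ is continuously differentiable with bounded derivative on $\setR$ (which follows from Assumption \ref{ass:fg} applied to $f$), and then Lemma \ref{lemma:xd_xs} makes $\|\xd(y)-\xs(y)\|$ as small as desired.

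\medskip

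\noindent For the second bracket I would use the counting observation that any discrete set on $I_C$ with distance $\dld$ has $K_d$ points with $(K_d-1)\dld = 2 L_{\intdd} \leq 2 L_C$ and $|L_{\intdd}-L_C|\leq \dld$. Hence
\begin{equation*}
|\dld \aCd - \aC| = E_p |K_d \dld - 2L_C| \leq E_p \bigl( |2L_{\intdd} - 2L_C| + \dld \bigr) \leq 3 E_p \dld ,
\end{equation*}
which is $O(\dld)$ and thus smaller than $\Dl_U/(2 E_q c)$ for $\dld$ small (a trivial adjustment handles the degenerate $c=0$ case).

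\medskip

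\noindent The final $\dlO$ is the minimum of the four thresholds produced above (one from each of Lemma \ref{lemma:Riemann_PCd}, Lemma \ref{lemma:qPCd_qPC}, Lemma \ref{lemma:xd_xs}, and the elementary endpoint count). The main obstacle, in the sense of being the only nontrivial step, is the middle telescoped term $q(\xd(y)|y)[P_C(\xd(y)) - P_C(\xs(y))]$: this is the piece where we actually need Lipschitz regularity of $P_C$ combined with the convergence $\xd\to\xs$ from Lemma \ref{lemma:xd_xs}; the other pieces are essentially direct quotations of previously established approximation results. Everything else is additive bookkeeping in $\Dl_U$.
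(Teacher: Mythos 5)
Your proposal is correct and follows essentially the same route as the paper's own proof: split off the $\aCd c$ term (bounded by an $O(\dld)$ endpoint count using $|\Ldd - L_C|\leq\dld$), and control the $q\cdot P$ term by the same three-piece telescope — Riemann error on $\PCd$ at $\xd(y)$ via Lemma~\ref{lemma:Riemann_PCd}, Lipschitz continuity of $P_C$ combined with $\|\xd(y)-\xs(y)\|\to 0$ via Lemma~\ref{lemma:xd_xs}, and the $q$-difference via Lemma~\ref{lemma:qPCd_qPC} (the paper instead bounds this last piece directly by $M_g\|\xd-\xs\|$, which is the same estimate that Lemma~\ref{lemma:qPCd_qPC} itself invokes). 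The only cosmetic difference is the choice of pivot in the telescoping identity, and a small imprecision in your phrasing that $P_C$ is ``$[0,1]$-scale'' — it is merely bounded (by $P_C(mid(I_C))$), which is all the argument needs.
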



\begin{proof}
Let $M_f$ and $M_g$ be such that
$$|f'(x)| < M_f, \qquad x \in [0,L]$$
and
$$|g'(x)| < M_g, \qquad x \in supp(g(\cdot)).$$
Note that such constants $M_f$ and $M_g$ exist by Assumption~\ref{ass:fg}.

We then have that
$$\left | \frac{d}{dx} P_C(x) \right | = \int_{y \in I_C} \aC(y) \left | \frac{d}{dx} p(x|y) \right | dy
< E_p \int_{y \in I_C} M_f dy = E_p 2 L_C M_f.$$
Let
$$B_P =  2 E_p L_C M_f$$
and
$$B = \max\Bsl  P_C \big (mid (I_C) \big ) M_g, B_P\Bsr.$$
Using this definition,  let $\dlO$,
$$0 < \dlO < \frac{\Dl_U}{16 E_p E_q c},$$
be such that 
for all discrete interval communities on $I_C$ with distance $\dl < \dlO$, we have that
$$|| \xs(y) - \xd(y)|| < \frac{\Dl_U}{4  E_q B}, \qquad y \in \setR,$$
and
$$\big | P_C(\xd(y)) -  \dld P_{\dl,C}(\xd(y)) \big | < \frac{\Dl_U}{4 E_q }, \qquad y \in \setR.$$
Note that such a $\dlO$ exists by Lemma~\ref{lemma:Riemann_PCd} and Lemma~\ref{lemma:xd_xs}.

Then we have for $y \in \setR$ that 
\begin{eqnarray*}
&&  \Big | P_C(\xs(y))q(\xs(y)|y) - \dld \PCd(\xd(y)) q(\xd(y)|y) \Big | \\
&& \hsa \leq \hsb \Big | P_C(\xs(y))q(\xs(y)|y) - P_C(\xd(y)) q(\xd(y)|y) \Big |  + \cdots \\
&& \hsc + \Big |P_C(\xd(y)) q(\xd(y)|y) -  \dld \PCd(\xd(y)) q(\xd(y|y) \Big | \\
&& \hsa \leq  \hsb q(\xs(y)|y) \Big | P_C(\xs(y)) - P_C(\xd(y)) \Big | + \cdots \\
&& \hsc +  P_C(\xs(y)) \Big |q(\xs(y)|y) - q(\xd(y)|y) \Big | + \cdots \\
&&  \hsc +  q(\xd(y)|y)   | P_C(\xd(y) -  \dld \PCd(\xd(y))| \\
&& \hsa \leq  \hsb  \big | P_C(\xs(y)) - P_C(\xd(y))\big | + \cdots \\
&& \hsc +  P_C(\xs(y)) \Big |q(\xs(y)|y) - q(\xd(y)|y) \Big | + \cdots \\
&& \hsc +   \big | P_C(\xd(y) -  \dld \PCd(\xd(y))\big | \\
&& \hsa <  \hsb B_P ||\xs(y) - \xd(y)|| + \cdots \\
&& \hsc + P_C \big (mid(I_C) \big )  M_g ||\xs(y) - \xd(y)|| + \cdots \\
&& \hsc +  \big | P_C(\xd(y) -  \dld \PCd(\xd(y))\big |,
\end{eqnarray*}
where we used the result from the analysis in~\cite{continuous_model_arxiv} which states that 
$$mid(I_C) = \arg \max_{x \in \setR} P_C(x).$$

By construction we have for $\dl < \dlO$ that
\begin{eqnarray*}
&& B_P ||\xs(y) - \xd(y)|| \\
&& \hsc + P_C \big (mid(I_C) \big ) M_g ||\xs(y) - \xd(y)|| \\
&& \hsc +  \big | P_C(\xd(y) - \dld \PCd(\xd(y))\big | \\
&& \hsa < \hsb  B \frac{\Dl_U}{4 E_q B} + B \frac{\Dl_U}{4 E_q B} + \frac{ \Dl_U}{4 E_q} \\
&& \hsa = \hsb   \frac{3}{4}\frac{\Dl_U}{ E_q }.
\end{eqnarray*}
Recall that we have that
$$\aC = 2 L_C E_p$$
and
$$
\dld \aCd = \dld \sum_{y \in \intdd} E_p = 2 \Ld E_p + \dld E_p = 2 E_p \Big [ \Ld + \dld/2 \Big ].$$
Combining the above result with the fact that
$$\left |L_C - \Ldd \right | \leq \dld$$
and
$$0 < \dl < \dlO < \frac{\Dl_U}{16 E_p E_q c},$$
we obtain for $y \in \setR$ that
\begin{eqnarray*}
&& \left | \dld \FCsd(y) - \FCd(y) \right | \\ 
&& \hsa \leq \hsb   E_q \Big | P_C(x^*)q(\xs(y)|y) - \dld \PCd(\xd(y)) q(\xd(y)|y) \Big | + \cdots \\
&& \hsc + E_q c \big |\aC - \dld\aCd \big | \\
&& \hsa < \hsb   E_q  \frac{3}{4} \frac{\Dl_U}{ E_q }
+ E_q E_p c 2 \Big |L_C - \Ldd - \dld/2 \Big | \\
&& \hsa \leq \hsb     \frac{3}{4} \Dl_U
+ 2 E_q E_p c \Big |L_C - \Ldd - \dld/2 \Big |\\
&& \hsa \leq \hsb    \frac{3}{4} \Dl_U
+ 2 E_q E_p c \dld + E_q E_p c  \\
&& \hsa < \hsb    \frac{3}{4} \Dl_U
+ 4 E_q E_p c \dld \\
\end{eqnarray*}
As by definition we have that
$$\dld < \dl$$
and by construction we have that
$$0 < \dl < \dlO < \frac{\Dl_U}{16 E_p E_q c},$$
it follows from the above result that
\begin{eqnarray*}
&& \left | \dld \FCsd(y) - \FCd(y) \right | \\
&& \hsa < \hsb    \frac{3}{4} \Dl_U
+ 4 E_q E_p c \dl \\
&& \hsa < \hsb    \frac{3}{4}\Dl_U
+ 4 E_q E_p c \frac{\Dl_U}{ 16 E_q E_p c } \\
&& \hsa = \hsb  \Dl_U.
\end{eqnarray*}
The result of the lemma then follows.
\end{proof}


The next two lemmas provide additional properties of the function $\FCsd(y)$, $y \in \setR$.
\begin{lemma}\label{lemma:FCsd}
Consider  constants  $E_p$, $0 < E_p \leq 1$, and $E_q$, $0< E_q$, and an interval $I_C  =   [ mid(I_C) - L_C, mid(I_C) + L_C) \subset \setR$. 
For every $\Dl_U$, $0 < \Dl_U <L_C$ there exists a $\dlO >0$ such that for all discrete interval communities $\dC =\Cd$ on $I_C$ with distance $\dl < \dlO$, the following is true.
If we have that
$$\aCd(y) = E_p, \qquad y \in \intdd,$$
and
$$2 L_C < \min \{ b, L \}$$
where $b$ is the constant of Assumption~\ref{ass:fg}, then  the function $\FCsd(y)$ given by
$$ \FCsd(y) = E_q \Big [  q\big (\xd(y)|y \big ) \PCd(\xd(y))- \aCd c \Big ], \qquad  y \in \setR,$$
where
$$\xd(y) = \arg \max_{x \in \setR} q(x|y)\PCd(x)$$
and
$$\aCd = \sum_{y \in \intdd} E_p,$$
has the properties that
\begin{enumerate}
\item[a)]  $\FCsd(y)$ is strictly increasing and twice continuously differentiable on $[mid(I_C) - L_C,mid(I_C) - \Dl_U]$, and we have that
$$\frac{d}{dy} \FCsd(y) > 0, \qquad y \in [mid(I_C)-L_C,mid(I_C) - \Dl_U].$$
\item[b)]  $\FCsd(y)$ is strictly decreasing and twice continuously differentiable on $[mid(I_C)+\Dl_U, mid(I_C) + L_C]$, and we have that
$$\frac{d}{dy} \FCsd(y) < 0, \qquad y \in  [mid(I_C)+\Dl_U, mid(I_C) + L_C].$$
\end{enumerate}

\end{lemma}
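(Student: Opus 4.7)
The plan is to bypass the Riemann-sum comparison used in Lemma~\ref{lemma:FCdd} and argue directly by an envelope-theorem computation, which is available here because $\xd(y)$ is by definition the maximizer of $q(x|y)\PCd(x)$ in $x$, and $\aCd$ does not depend on $y$. Envelope then gives
\[
\frac{d}{dy}\FCsd(y) \;=\; E_q\, \PCd(\xd(y))\, q_y(\xd(y)|y),
\]
where $q_y(x|y) = \partial q(x|y)/\partial y$. This formula is the workhorse for both parts of the lemma and avoids differentiating $\xd(y)$ when establishing the sign of $\FCsd'$.

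To handle part (a), I would choose $\dlO$ no larger than the threshold in Lemma~\ref{lemma:xd} for $\Dl_{x^*}=\Dl_U$ (and additionally the threshold in Lemma~\ref{lemma:PCd}). For $y\in[mid(I_C)-L_C, mid(I_C)-\Dl_U]$, Lemma~\ref{lemma:xd}(b) then gives $\xd(y)\in(y,mid(I_C))\cap supp(q(\cdot|y))$, so $\xd(y)-y>0$ and $q(x|y)=g(\|x-y\|)$ is smooth at $x=\xd(y)$ with $q_y(\xd(y)|y) = -g'(\xd(y)-y)$. Assumption~\ref{ass:fg} forces $g'<0$ on the positive part of $supp(g)$ (from $g'(0)=0$ and strict concavity on $supp(g)$), hence $q_y(\xd(y)|y)>0$; since $\PCd(\xd(y))>0$ by Lemma~\ref{lemma:PCd}, the envelope identity gives $\frac{d}{dy}\FCsd(y)>0$. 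Part (b) is the mirror argument: for $y\in[mid(I_C)+\Dl_U, mid(I_C)+L_C]$, Lemma~\ref{lemma:xd}(c) gives $\xd(y)<y$, so $q_y(\xd(y)|y)=g'(y-\xd(y))<0$, yielding $\frac{d}{dy}\FCsd(y)<0$.

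For the twice-continuous differentiability, I would differentiate the envelope expression once more via the chain rule, obtaining
\[
\frac{d^2}{dy^2}\FCsd(y) \;=\; E_q\Bigl[\PCd'(\xd(y))\,\xd'(y)\,q_y(\xd(y)|y) + \PCd(\xd(y))\bigl(q_{yx}(\xd(y)|y)\,\xd'(y) + q_{yy}(\xd(y)|y)\bigr)\Bigr].
\]
The existence and continuity of $\xd'(y)$ come from Lemma~\ref{lemma:xd}(d); continuity of $\PCd$ and $\PCd'$ from Lemma~\ref{lemma:PCd} (which in turn uses $f\in C^3$); and continuity of $q_y, q_{yx}, q_{yy}$ on the regime $\xd(y)\neq y$ from the $C^2$ regularity of $g$ on $supp(g)$ in Assumption~\ref{ass:fg}. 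All these inputs are available simultaneously on the relevant intervals because $\xd(y)-y$ stays bounded away from zero (again by Lemma~\ref{lemma:xd}).

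The main obstacle is purely a bookkeeping one: one has to ensure $\dlO$ is small enough that all of the following hold simultaneously on the two intervals in question, namely (i) $\xd(y)$ is uniquely defined and lies strictly between $y$ and $mid(I_C)$ (so we are in the smooth branch of $g(\|\cdot\|)$), (ii) $\xd(y)$ is continuously differentiable, and (iii) $\PCd$ has the smoothness and positivity guaranteed by Lemma~\ref{lemma:PCd}. Taking $\dlO$ to be the minimum of the thresholds supplied by Lemmas~\ref{lemma:xd} and~\ref{lemma:PCd} for the chosen $\Dl_U$ (and for a fixed $\Dl_P<\Dl_U$) closes the argument, in direct analogy with the construction used in the proof of Lemma~\ref{lemma:FCdd}.
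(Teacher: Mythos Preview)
Your argument is correct and takes a genuinely different route from the paper's. The paper's proof of Lemma~\ref{lemma:FCsd} is a \emph{reduction}: it shows (via Lemmas~\ref{lemma:PCd} and~\ref{lemma:xd}) that on the relevant subinterval $\PCd$ is increasing and strictly concave and $\xd(y)$ is unique, lies in $(y,mid(I_C))$, and is smooth, and then simply invokes ``the same argument as in~\cite{continuous_model_arxiv}'' for the continuous $\FCs$. No computation is actually carried out in the paper; the work is delegated to the companion reference. You instead carry out that computation explicitly via the envelope theorem, obtaining $\FCsd'(y)=E_q\,\PCd(\xd(y))\,q_y(\xd(y)|y)$ and reading off the sign from $g'(0)=0$ together with strict concavity of $g$. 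This is more self-contained and makes transparent \emph{why} the sign is what it is; the paper's version is shorter on the page but requires the reader to consult~\cite{continuous_model_arxiv}.

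Two small remarks on your write-up. First, Lemma~\ref{lemma:xd}(d) as stated only gives differentiability of $\xd$, not continuous differentiability; to justify continuity of $\xd'$ you should note that the implicit function theorem applied to the first-order condition in Lemma~\ref{lemma:xd}(a) (with $g\in C^2$ and $f\in C^3$) yields $\xd\in C^1$. The paper's proof has the same implicit gap, so you are no worse off. Second, your claim that $\xd(y)-y$ is bounded away from zero is correct but not quite ``again by Lemma~\ref{lemma:xd}'': it follows from Lemma~\ref{lemma:xd}(b) (strict inequality $\xd(y)>y$) together with continuity of $\xd$ on the compact interval $[mid(I_C)-L_C,mid(I_C)-\Dl_U]$.
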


\begin{proof}
We prove part a) of the lemma. Part b) can be shown using the same argument.

Let
$$y_0 = mid(I_C),$$
and let
$$y_l = mid(I_C) - \Dl_U.$$

Using these definitions, let $\dlO>0$ be such that  for all discrete interval communities on $I_C$ with distance $\dl < \dlO$ we have that
\begin{enumerate}
\item[a)] there exists a unique solution to the optimization problem
$$\xd(y) =  \arg \max_{x \in \setR} q(x|y) \PCd(x), \qquad y \in [y_0-L_C,y_0 - \Dl_U]$$
and $\xd(y)$ is twice continuously differentiable on $[y_0-L_C,y_0 - \Dl_U]$.
\item[b)] $\xd(y_l) < y_0$ and
$$\xd(y) \in [y, \xs(y_l)],  \qquad y \in [y_0-L_C,y_0 - \Dl_U].$$
\item[c)] the function $\PCd(x)$ is increasing and strictly concave on $[y_0 - L_C,y_0 - \xd(y_l)]$.
\end{enumerate}
Note that such a $\dlO$ exists by Lemma~\ref{lemma:PCd}~and~\ref{lemma:xd}. 

Using this construction,
we show next that for
$$ 0< \dl < \dlO$$
the function $\FCsd(y)$ is strictly increasing and twice continuously differentiable on $[y_0 - L_C,y_l]$

To do this, we use the fact that by construction  there exists a unique solution to the optimization problem
$$\xd(y) =  \arg \max_{x \in \setR} q(x|y) \PCd(x), \qquad y \in [y_0-L_C,y_0 - \Dl_U],$$
and we have that 
$$\xd(y) \in [y, \xs(y_l)],  \qquad y \in [y_0-L_C,y_0 - \Dl_U]$$
with
$$\xd(y_l) < y_0.$$
Furthermore, by construction the function $\PCd(x)$ is increasing and strictly concave on $[y_0 - L_C,y_0 - \xd(y_l)]$. Using these properties, we can  show that the function  $\FCsd(y)$ is strictly increasing and twice continuously differentiable for
$$ y \in [y_0 - L_C,y_0 - \Dl_U],$$
using the same argument as given in~\cite{continuous_model_arxiv} to show that the function  $\FCs(y)$ is strictly increasing and twice continuously differentiable for
$$ y \in [y_0 - L_C,y_0].$$
The result of the lemma then follows.
\end{proof}

\begin{lemma}\label{lemma:FCsd2}
Consider  constants  $E_p$, $0 < E_p \leq 1$, and $E_q$, $0< E_q$, and an interval $I_C  =   [ mid(I_C) - L_C, mid(I_C) + L_C) \subset \setR$. 
For every $\Dl_U$, $0 < \Dl_U <L_C$ there exists a $\dlO >0$ such that for all discrete interval communities $\dC =\Cd$ on $I_C$ with distance $\dl < \dlO$, the following is true.
If we have that
$$\aCd(y) = E_p, \qquad y \in \intdd,$$
and
$$2 L_C < \min \{ b, L \}$$
where $b$ is the constant of Assumption~\ref{ass:fg}, then  the function $\FCsd(y)$ given by
$$ \FCsd(y) = E_q \Big [  q\big (\xd(y)|y \big ) \PCd(\xd(y))- \aCd c \Big ], \qquad  y \in \setR,$$
where
$$\xd(y) = \arg \max_{x \in \setR} q(x|y)\PCd(x)$$
and
$$\aCd = \sum_{y \in \intdd} E_p,$$
has the properties that
\begin{enumerate}
\item[a)]  $\FCsd(y)$ is non-decreasing on $[mid(I_C) - L,mid(I_C) - L_C]$.
\item[b)]  $\FCsd(y)$ is non-increasing on $[mid(I_C)+L_C, mid(I_C) + L)$.
\end{enumerate}
\end{lemma}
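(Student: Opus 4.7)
The plan is to prove part (a) directly via a two-candidate comparison argument; part (b) is the mirror image, obtained by replacing left with right throughout. Fix $y_1, y_2$ in $[mid(I_C) - L, mid(I_C) - L_C]$ with $y_1 \leq y_2$ and put $\Dl = y_2 - y_1$. Because $\aCd = |\intdd| E_p$ does not depend on $y$, the monotonicity of $\FCsd$ reduces to showing that
$$V(y) := q(\xd(y)|y)\PCd(\xd(y)) = \max_{x \in \setR} q(x|y)\PCd(x)$$
is non-decreasing on this interval. I shrink $\dl$ so that Lemma~\ref{lemma:midd} places $\middd$ within $\dl$ of $mid(I_C)$, Lemma~\ref{lemma:PCd} gives exact symmetry and unimodality of $\PCd$ around $\middd$ with unique peak there, and Lemmas~\ref{lemma:xd} and \ref{lemma:xd2} together give $\xd(y_1) \leq \xd(mid(I_C)-L_C) < \middd$, so that in linear (torus-unwrapped) coordinates $\xd(y_1) = y_1 + r$ with $r := \xd(y_1) - y_1 \in [0,\middd - y_1]$, and $\eta := \middd - \xd(y_1) \geq 0$.

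The key step is to test the optimization at $y_2$ against two carefully chosen candidates. The first is the translate $x_T = \xd(y_1) + \Dl = y_2 + r$; since $\| x_T - y_2 \| = r$ without wraparound, $q(x_T|y_2) = g(r) = q(\xd(y_1)|y_1)$. The second is the peak $x_P = \middd$, for which $\PCd(x_P) = \max \PCd$ and $q(x_P|y_2) = g(\|\middd - y_2\|)$. If $\Dl \leq 2\eta$ then $x_T$ lies in $[\xd(y_1),\, 2\middd - \xd(y_1)]$, an interval centered at $\middd$, so symmetry and unimodality of $\PCd$ yield $\PCd(x_T) \geq \PCd(\xd(y_1))$ and hence $V(y_2) \geq q(\xd(y_1)|y_1)\PCd(x_T) \geq V(y_1)$. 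If instead $\Dl > 2\eta$, then $\| \middd - y_2 \| = (\middd - y_1) - \Dl = (r + \eta) - \Dl < r$, so monotonicity of $g$ gives $q(x_P|y_2) \geq g(r) = q(\xd(y_1)|y_1)$, and combined with $\PCd(\middd) \geq \PCd(\xd(y_1))$ this yields $V(y_2) \geq q(x_P|y_2)\PCd(\middd) \geq V(y_1)$. Either way, $V(y_2) \geq V(y_1)$, as required.

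The main obstacle is bookkeeping around the torus metric. One must verify that $\|x_T - y_2\| = r$ and $\|\middd - y_2\| \leq L$ without wraparound; the first uses only $r \leq L$, while the second rests on $y_2$ lying within distance $L$ of $\middd$ along the relevant arc, which in turn uses the convention implicit in Lemma~\ref{lemma:xd2} that the interval $[mid(I_C) - L, mid(I_C) - L_C]$ is traversed along the short-way arc to $\middd$ (this is consistent with $L_C < L/2$). The only genuinely delicate point is the antipode $y_1 = mid(I_C) - L$, where the short-way direction is ambiguous; this is handled by continuity of $V$ in $y_1$, extending the inequality from $y_1 > mid(I_C) - L$ to the antipode by taking limits. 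Part (b) follows by applying the same argument to the reflected interval $[mid(I_C) + L_C, mid(I_C) + L)$, swapping the roles of "left" and "right" throughout.
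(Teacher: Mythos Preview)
Your argument is correct and takes a genuinely different route from the paper's. The paper's proof is brief: it chooses $\dlO$ so that (i) $\xd(y_l)$ (with $y_l = mid(I_C) - L_C$) is unique and strictly below $mid(I_C)$, (ii) $\PCd$ is strictly concave on $[mid(I_C) - L_C, mid(I_C) - \xd(y_l)]$ and increasing on $[mid(I_C) - L, mid(I_C) - L_C]$, and (iii) $\xd(y) \leq \xd(y_l)$ on the outer interval (via Lemma~\ref{lemma:xd2}); it then says that, given these structural ingredients, one can transfer verbatim the continuous-model argument from~\cite{continuous_model_arxiv} establishing that $\FCs$ is non-decreasing on $[mid(I_C) - L, mid(I_C) - L_C]$. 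By contrast, you give a self-contained, elementary proof: rather than appealing to the continuous analogue, you compare $V(y_2)$ to $V(y_1)$ directly by testing the maximization at $y_2$ against two explicit candidates (the translate $x_T$ and the peak $\middd$), and split on whether the shift $\Dl$ overshoots the symmetric window about $\middd$. This makes the mechanism explicit and avoids any dependence on the external reference; the paper's route is shorter to write down but opaque unless one has~\cite{continuous_model_arxiv} in hand.

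One small gap worth patching: the assertion $r = \xd(y_1) - y_1 \geq 0$ does not follow from Lemmas~\ref{lemma:xd} and~\ref{lemma:xd2} alone, since those lemmas only bound $\xd(y_1)$ from above for $y_1$ outside $I_C$. You can either (a) prove $r \geq 0$ directly from the unimodality of $\PCd$ about $\middd$ together with the symmetry of $q(\cdot|y_1)$ about $y_1$, by reflecting any candidate $x < y_1$ to $2y_1 - x$; or (b) observe that your case split survives anyway: if $r < 0$ then $\eta = \middd - \xd(y_1) > \middd - y_1 \geq \Dl$ (since $y_2 \leq mid(I_C) - L_C < \middd$ for small $\dl$), so the case $\Dl > 2\eta$ never occurs, and your Case~1 argument goes through with $g(|r|)$ in place of $g(r)$. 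Similarly, the inequality $\xd(y_l) < \middd$ (rather than merely $< mid(I_C)$) needs one extra step, e.g.\ via Lemma~\ref{lemma:xd_xs} and shrinking $\dl$; this is routine.
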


\begin{proof}
We prove part a) of the lemma. Part b) can be shown using the same argument.

Let
$$y_l = mid(I_C) - L_C.$$
Furthermore, let $\dlO>0$ be such that  for all discrete interval communities on $I_C$ with distance $\dl < \dlO$ we have that
\begin{enumerate}
\item[a)] there exists a unique solution to the optimization problem
$$\xd(y_l) =  \arg \max_{x \in \setR} q(x|y) \PCd(x),$$
and
$$\xd(y_l) < mid(I_C),$$
\item[b)] the function $\PCd(x)$ is strictly concave on $[mid(I_C) - L_C,mid(I_C) - \xd(y_l)]$ and increasing on $[mid(I_C) - L,mid(I_C) - L_C]$,
\item[c)] for
$$ y \in [mid(I_C) - L,mid(I_C) - L_C],$$
we have
$$\xd(y) \leq \xd(y_l), \qquad y \in [mid(I_C) - L,mid(I_C) - L_C].$$
\end{enumerate}
Note that such a $\dlO$ exists by Lemma~\ref{lemma:PCd},~\ref{lemma:xd} and~\ref{lemma:xd2}. 

Using the fact that by construction we have that
$$\xd(y) \leq \xd(y_l) < mid(I_C), \qquad y \in [mid(I_C) - L,mid(I_C) - L_C],$$
and that the function $\PCd(x)$ is strictly concave on $[mid(I_C) - L_C,mid(I_C) - \xd(y_l)]$ and increasing on $[mid(I_C) - L,mid(I_C) - L_C]$, we can show that the function  $\FCsd(y)$ is non-decreasing on $[mid(I_C) - L,mid(I_C) - L_C]$
using the same argument as given in~\cite{continuous_model_arxiv} to show that the function  $\FCs(y)$ is non-decreasing on $[mid(I_C) - L,mid(I_C) - L_C]$.

The result of the lemma then follows.
\end{proof}


\subsection{Proof of Proposition~\ref{prop:UCsd}}
We are now in the position to prove Proposition~\ref{prop:UCsd}. Recall that Proposition~\ref{prop:UCsd} characterizes the content production utility rate function  $\UCsd(y)$ for a discrete interval community  $\dC = \Cd$ that is part of a \erNash as given by  Proposition~\ref{prop:dnash}.

We obtain the result of  Proposition~\ref{prop:UCsd} using the results of Lemma~\ref{lemma:FCsd} as follows.
Let $\dcommStrucNash$ be a \erNash as given by Proposition~\ref{prop:dnash}. Then we have that for every discrete interval community $\dC = \Cd \in \sCdNash$ that
$$ \aCd(y) = E_p, \qquad y \in \intdd,$$
and
$$ L_C < \frac{L}{2},$$
where $L_C$ is the length of the interval $I_C$ such that
$$ \intdd = \Add \cap I_C.$$
Furthermore, for a  discrete interval community  $\dC = \Cd \in \sCdNash$ as given  by Proposition~\ref{prop:dnash} we have that
$$\UCdd(y) =  E_p E_q \sum_{z \in \intsd} \Big [ p(\xd(z)|y) q(\xd(z)|z) - c \Big ], \qquad y \in \intdd, $$
where
$$\xd(z) =  \arg\max_{x \in \setR} q(x|z) \PCd(x), \qquad z \in \intsd,$$
and
$$\PCd(x) = E_p \sum_{y \in \intdd} p(x|y).$$
As a result, the properties of a discrete interval community  $\dC = \Cd \in \sCdNash$ satisfy the assumptions in Lemma~\ref{lemma:FCsd} and we have that
$$\UCdd(y) =  \FCdd(y) = E_p E_q \sum_{z \in \intsd} \Big [ p(\xd(z)|y) q(\xd(z)|z) - c \Big ], \qquad y \in \intdd,$$
where the function $\FCdd(y)$ is as given in Lemma~\ref{lemma:FCsd}.
Proposition  then follows immediately from Lemma~\ref{lemma:FCsd}.

\newpage
\section{Optimal Community Selection for Content Consumers and Producers}
In Appendices~\ref{app:PCd}~and~\ref{app:QCd}, we derived for a given discrete interval community $\dC=\Cd$  the properties of content demand function $\PCd(x)$ given by
$$\PCd(x) = \sum_{y \in \intdd} \aCd(y) p(x_y)$$
and content supply $\QCsd(x)$ given by
$$\QCsd(x) = \sum_{y \in \intsd} \bCsd(x|y) q(x|y),$$
under the  assumptions that
$$\aCd(y) = E_p, \qquad y \in \intdd,$$
and
$$||\bCsd(\cdot|y)|| = E_q, \qquad y \in \intsd,$$
where
$$\bCsd(\cdot|y) = \underset{\bCd(\cdot|y): || \bCd(\cdot|y)|| \leq E_q}{\arg\max} \int_{x \in \setR} \bCd(x|y) \Bsbl q(x|y) \PCd(x) - \aCd c \Bsbr dx$$
and
$$\aCd = \sum_{y \in \intdd} E_p.$$
Or in other words  we derived for a given discrete interval community $\dC=\Cd$  the properties of content demand function $\PCd(x)$ and content supply $\QCsd(x)$ under the assumptions that agents $y \in \intdd$, $y \in \intsd$ in community $C$ allocate all their production and consumption rate to community $C$. In this appendix we show that this is indeed optimal for an agent to allocate its production and consumption rate to a single community. 

To do this, we consider a given community structure $\CSd = \left ( \sCd, \{\aSCd(y)\}_{y \in \Add},  \{\bSCd(\cdot|y)\}_{y \in \Ads} \right )$ as defined in Section~\ref{section:dnash}; in particular, we have that $\sCd$ defines the set of communities $\dC = (\Cdd,\Cds)$,
$$ \Cdd \subset \Add \mbox{ and } \Cds \subset \Ads,$$
of the community structure $\CSd$, and  
$$\aSCd(y) = \{ \aCd(y) \}_{\dC \in \sCd}, \qquad y \in \Add,$$
and
$$\bSCd(y) = \{ \bCd(\cdot |y) \}_{\dC \in \sCd}, \qquad y \in \Ads,$$
indicate the  rates that content consumers and producers allocate to the different communities $\dC \in \sCd$.  We assume that the total content consumption and production rates of each agent can not exceed a given threshold, and we have that 
$$|| \aSCd(y) || = \sum_{\dC \in \sCd} \aCd(y) \leq E_p, \qquad y \in \Add,$$
and
$$|| \bSCd(y) || = \sum_{\dC \in \sCd} || \bCd(\cdot|y) || \leq E_q, \qquad y \in \Ads,$$
where
$$ || \bCd(\cdot|y) || = \int_{x \in R}  \bCd(x|y) dx.$$

More precisely, given a community structure $\CSd = \dcommStruc$, we  analyze in the following the situation where an agent $y \in \Add$ changes its rate allocation $\aSCd(y)$ to $\aSCd'(y)$ given by
$$\aSCd'(y)=  \{ \aCd'(y) \}_{\dC \in \sCd},$$
where $\aCd'(y)$ is the rate for content consumption allocated to community $\dC \in \sCd$ such that
$$ \sum_{\dC \in \sCd} \aCd'(y) \leq E_p.$$
In particular, we want to analyze  whether this change will increase the utility for contention consumption  of agent $y$, assuming that all other agents keep their rate allocation fixed. Given a community structure  $\CSd = \dcommStruc$, let $ \UCSdd(\aSCd'(y)|y)$ be the utility that agent $y \in \Add$ receives under the new allocation $\aSCd'(y)$, while all other agents keep their rate allocation fixed, i.e. we have that
$$  \UCSdd(\aSCd'(y)|y) =
\sum_{\dC \in \sCd} \aCd'(y) \int_{\setR} \Bsbl p(x|y) \QCd(x) - \bCd (x) c \Bsbr dx$$
with
$$\QCd(x) = \sum_{z \in \Ads} q(x|z) \bCd(x|z)$$
and
$$\bCd (x) = \sum_{z \in \Ads} \bCd(x|z).$$

Similarly, given a community structure $\CSd = \dcommStruc$, we  want to analyze the situation where an agent $y \in \Ads$ changes its rate allocation $\bSCd(y)$ to $\bSCd'(y)$ given by
$$\bSCd'(y)=  \{ \bCd'(\cdot|y) \}_{\dC \in \sCd},$$
where $\bCd'(\cdot|y)$ is the rate allocated for content production in community $\dC \in \sCd$ such that
$$ || \bSC'd(y) || \leq E_q.$$
Again, we analyze  whether this change will increase the utility for contention consumption  of agent $y$, assuming that all other agents keep their rate allocation fixed. Given a community structure  $\CSd = \dcommStruc$, let $ \UCSsd(\bSCd'(y)|y)$ be the utility that agent $y$ receives under the new allocation $\bSCd'(y)$, while all other agents keep their rate allocation fixed, i.e. we have that
$$  \UCSsd(\bSCd'(y)|y) = \sum_{\dC \in \sCd} \int_{\setR} \bCd'(x|y) \Bsbl q(x|y) \PCd(x) - \aCd c \Bsbr dx$$
with
$$\PCd(x) = \sum_{z \in \Add} \aCd(z) p(x|y)$$
and
$$ \aCd  = \sum_{z \in \Add} \aCd(z).$$

Our first result states that it is indeed optimal for agent $y \in \Add$ to allocate all its content consumption rate to a single community. That is,  given a community structure  $\CSd = \dcommStruc$ there exists a community 
$$\dC^*(y) \in \sCd$$
such that an optimal rate allocation  $\aSCd^*(y)$,
$$ \aSCsd(y) = \arg \max_{\aSCd'(y): || \aSCd'(y) || \leq E_p}  \UCSdd(\aSCd'(y)|y), \qquad y \in \Add, $$
is given by
$$ \aCsd(y) = \left \{
\begin{array}{rl}
E_p, & \dC = \dC^*(y) \\
0, & \dC \neq \dC^*(y).
\end{array} \right .$$
Or in other words,  given a community structure  $\CSd = \dcommStruc$ there exists an optimal rate allocation  $\aSCd^*(y)$ for agent $y \in \Add$ such that $y$ allocates all its rate to a single community $\dC^* \in \sCd$. 

\begin{lemma}\label{lemma:single_community_consumption}
Let $\dcommStruc$ be a given community structure, and let $y \in \Add$ be a given content consumer, and let $E_p$, $0 < E_p \leq 1$, be a positive scalar. If we have that
$$ \max_{\dC \in \sCd} \int_{x \in \setR}  \Bsbl p(x|y) \QCd(x) - \bCd(x) c \Bsbr > 0,$$
then an optimal content consumption rate allocation  $\aSCd^*(y)$,
$$ \aSCsd(y) = \arg \max_{\aSCd'(y): || \aSCd'(y) || \leq E_p}  \UCSdd(\aSCd'(y)|y),$$
for agent $y$ is given by
$$ \aCsd(y) = \left \{
\begin{array}{rl}
E_p, & \dC = \dC^*(y) \\
0, & \dC \neq \dC^*(y),
\end{array} \right .$$
where
$$\dC^*(y) = \arg \max_{\dC \in \sCd} \int_{x \in \setR} \Bsbl p(x|y) \QCd(x) - \bCd(x) c \Bsbr .$$
\end{lemma}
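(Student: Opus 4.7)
The plan is to exploit the fact that the objective $\UCSdd(\aSCd'(y)|y)$ is a linear functional of the finite-dimensional vector of rates $\{\aCd'(y)\}_{\dC \in \sCd}$, and the feasible set is a simplex with a single $\ell_1$-type budget constraint. On such a set a linear functional attains its maximum at an extreme point, and the extreme points are either the zero allocation or allocations that concentrate the full budget $E_p$ on a single community.

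Concretely, first I would introduce the shorthand
$$W_{\dC}(y) \;=\; \int_{\setR} \Bsbl p(x|y) \QCd(x) - \bCd(x)c \Bsbr dx, \qquad \dC \in \sCd,$$
so that by definition
$$\UCSdd(\aSCd'(y)|y) \;=\; \sum_{\dC \in \sCd} \aCd'(y)\, W_{\dC}(y).$$
Note that the numbers $W_{\dC}(y)$ depend only on the fixed allocations of the \emph{other} agents and on the community structure, and therefore are constants of the optimization problem over $\aSCd'(y)$. Let $\dC^*(y) \in \arg\max_{\dC \in \sCd} W_{\dC}(y)$, and set $W^*(y) = W_{\dC^*(y)}(y)$; by the hypothesis of the lemma, $W^*(y) > 0$.

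Next I would bound an arbitrary feasible allocation against the candidate. For any $\aSCd'(y)$ satisfying $\aCd'(y) \geq 0$ and $\sum_{\dC \in \sCd} \aCd'(y) \leq E_p$, we have
$$\UCSdd(\aSCd'(y)|y) \;=\; \sum_{\dC \in \sCd} \aCd'(y) W_{\dC}(y) \;\leq\; W^*(y) \sum_{\dC \in \sCd} \aCd'(y) \;\leq\; E_p\, W^*(y),$$
where the first inequality uses $W_{\dC}(y) \leq W^*(y)$ together with $\aCd'(y) \geq 0$, and the second uses the budget constraint together with $W^*(y) > 0$. On the other hand, the candidate allocation $\aSCsd(y)$ given in the statement is feasible (it is nonnegative and its sum equals $E_p$), and substituting it directly gives $\UCSdd(\aSCsd(y)|y) = E_p\, W^*(y)$. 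Thus the candidate achieves the upper bound and is optimal.

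I do not expect any real obstacle here; the argument is essentially the linear-programming fact that a linear functional on a budget simplex is maximized at a vertex, specialized to this setting. The only thing to be careful about is invoking $W^*(y) > 0$ at the right step, namely to conclude that it is strictly better to saturate the budget than to leave slack; without this hypothesis one could only say that \emph{some} optimal allocation concentrates on a single community (possibly with rate $0$). Since the conclusion of the lemma asserts that the given concentrated allocation \emph{is} optimal rather than that it is the \emph{unique} optimum, no further analysis of ties between communities is required.
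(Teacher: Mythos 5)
Your proof is correct, and it is a complete, self-contained argument: you rewrite $\UCSdd(\aSCd'(y)|y)$ as $\sum_{\dC} \aCd'(y) W_{\dC}(y)$, observe linearity in the allocation, and bound by $E_p W^*(y)$, which the candidate allocation attains since $W^*(y) > 0$. The paper itself does not spell out a proof here but delegates to the continuous-model reference, where the underlying argument is the same linear-programming fact that a linear functional on a budget simplex is maximized by concentrating the full budget on a maximizing coordinate; your writeup simply makes that explicit, and you are right that the strict positivity hypothesis is needed precisely to justify saturating the budget rather than allocating nothing.
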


\begin{proof}
This lemma can be proved using the same argument as given in~\cite{continuous_model_arxiv} where we proved the same result for community structures  $\CS=\commStruc$ under the continuous agent model.
\end{proof}

Similarly, the next result states that it is optimal for an agent $y \in \Ads$ to allocate all its content production rate to a single community. That is,  given a community structure  $\CSd = \dcommStruc$ there exists a community
$$\dC^*(y) \in \sCd$$
such that an optimal rate allocation $\bSCd^*(y)$, 
$$ \bSCsd(y) = \arg \max_{\bSCd'(y): || \bSCd'(y) || \leq E_q}  \UCSsd(\bSCd'(y)|y), \qquad y \in \Add, $$
is given by
$$ || \bCsd(y) || = \left \{
\begin{array}{rl}
E_q, & \dC = \dC^*(y) \\
0, & \dC \neq \dC^*(y).
\end{array} \right .$$
Or in other words,  given a community structure  $\CSd = \dcommStruc$ there exists an optimal rate allocation  $\bSCd^*(y)$ for agent $y \in \Ads$ such that $y$ allocates all its rate to a single community $\dC^* \in \sCd$.

\begin{lemma}\label{lemma:single_community_production}
Let $\dcommStruc$ be a given community structure, and let $y \in \Ads$ be a given content producer, and let $E_q >0$ be a positive scalar. If we have that
$$ \max_{\dC \in \sCd} \Bsbl q(\xCd(y)|y)\PCd(x) - \aCd c \Big ]  > 0$$
where
$$\xCd(y) = \arg \max_{x \in \setR} q(x|y) \PCd(x),$$
then an optimal content consumption rate allocation  $\aSCd^*(y)$,
$$ \bSCsd(y) = \arg \max_{\bSCd'(y): || \bSCd'(y) || \leq E_q}  \UCSsd(\bSCd'(y)|y), $$
for agent $y$ is given by
$$ \bCsd(x|y) = \left \{
\begin{array}{rl}
E_q \dl(x^*_{\dl,\dC}(y) -x) , & \dC = \dC^*(y) \\
0, & \mbox{otherwise,}
\end{array} \right .$$
where
$$\dC^*(y) = \arg \max_{\dC \in \sCd} q(\xCd(y)|y)\PCd(x^*_{C,y})$$
and $\dl(\cdot)$ is the Dirac delta function.
\end{lemma}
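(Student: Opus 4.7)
The plan is to reduce the problem to two nested one-dimensional optimizations and apply the previously established Lemma~\ref{lemma:optimal_production_rate} to the inner one, then exploit the linearity of the utility in the allocation to handle the outer one by a swap argument, exactly parallel to the proof strategy of Lemma~\ref{lemma:single_community_consumption}.

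First I would observe that the objective
$$ \UCSsd(\bSCd'(y)|y) = \sum_{\dC \in \sCd} \int_{\setR} \bCd'(x|y) \Bsbl q(x|y) \PCd(x) - \aCd c \Bsbr dx $$
decomposes as a sum of per-community contributions, and the budget constraint $\sum_{\dC \in \sCd} \int_{\setR} \bCd'(x|y)\,dx \leq E_q$ is a single coupling constraint on the total mass. So for each community $\dC$, once the total mass $m_\dC = \int_{\setR} \bCd'(x|y)\,dx$ devoted to $\dC$ is fixed, the within-community problem is exactly the one treated in Lemma~\ref{lemma:optimal_production_rate} (up to rescaling the budget from $E_q$ to $m_\dC$). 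That lemma shows that the optimal shape concentrates on the argmax set $A_\dC$ of $q(\cdot|y)\PCd(\cdot)$, so the best achievable per-community value is
$$ m_\dC \cdot V_\dC, \qquad V_\dC \triangleq \max\Bsl\, 0,\; q(\xCd(y)|y)\PCd(\xCd(y)) - \aCd c \,\Bsr, $$
attained in particular by $\bCd'(x|y) = m_\dC\,\dl(x - \xCd(y))$ whenever $V_\dC>0$, where $\xCd(y) = \arg\max_{x\in\setR} q(x|y)\PCd(x)$.

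Having reduced the problem to choosing nonnegative scalars $\{m_\dC\}_{\dC \in \sCd}$ subject to $\sum_{\dC} m_\dC \leq E_q$ in order to maximize $\sum_{\dC} m_\dC V_\dC$, the outer problem is a linear program over the simplex. Under the hypothesis $\max_{\dC \in \sCd}\big[ q(\xCd(y)|y)\PCd(\xCd(y)) - \aCd c\big] > 0$, we have $V_{\dC^*(y)}>0$ where $\dC^*(y) = \arg\max_{\dC \in \sCd} V_\dC$, and the standard swap argument shows that any allocation placing positive mass on some $\dC \neq \dC^*(y)$ can be strictly improved (or weakly improved, if $V_\dC = V_{\dC^*(y)}$) by transferring that mass to $\dC^*(y)$. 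Therefore setting $m_{\dC^*(y)} = E_q$ and $m_\dC = 0$ for $\dC \neq \dC^*(y)$ is optimal, and the within-community optimizer at $\dC^*(y)$ then yields exactly the allocation stated in the lemma.

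The only non-routine step is the inner problem, but that is already handled by Lemma~\ref{lemma:optimal_production_rate}; the outer LP argument is straightforward. One minor technical point to keep in mind is that the Dirac-delta formulation must be interpreted as a limit of admissible rate densities (or in the space of nonnegative measures of bounded total mass), since a strict pointwise density cannot equal a delta; this is the same measure-theoretic convention already used implicitly in the definition of $\dCSd$ and in Lemma~\ref{lemma:optimal_production_rate}, so no additional work is required.
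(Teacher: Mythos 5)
Your argument is correct, but it is worth noting that the paper itself does not prove this lemma directly: its proof is a one-line pointer to the continuous-model reference, where (presumably) the same decomposition is carried out. Your version is therefore a self-contained reconstruction rather than a literal match, and the reconstruction is sound. The key observation --- that the objective $\UCSsd(\bSCd'(y)|y)$ is linear in the per-community densities $\bCd'(\cdot|y)$, so that one can first optimize the shape within each community at fixed mass $m_\dC$ (which is exactly Lemma~\ref{lemma:optimal_production_rate}, yielding concentration on the argmax set of $q(\cdot|y)\PCd(\cdot)$), and then optimize the mass split $\{m_\dC\}$ as a linear program over the simplex $\sum m_\dC \leq E_q$ --- is the natural and correct route, and closely parallels Lemma~\ref{lemma:single_community_consumption}.

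Two small points. First, your definition $V_\dC = \max\{0,\; q(\xCd(y)|y)\PCd(\xCd(y)) - \aCd c\}$ slightly conflates two things: once you fix $m_\dC$ and require all of it to be used in $\dC$, the best achievable contribution is $m_\dC\big[q(\xCd(y)|y)\PCd(\xCd(y)) - \aCd c\big]$, which may be negative; the clipping at zero only appears when you fold the option of assigning less total mass to $\dC$ into the per-community value. It is cleaner to keep the raw per-unit value $v_\dC = q(\xCd(y)|y)\PCd(\xCd(y)) - \aCd c$ and observe that the outer LP $\max_{m \geq 0,\; \sum m_\dC \leq E_q} \sum_\dC m_\dC v_\dC$ is solved by putting $E_q$ on the community with largest $v_\dC$ precisely because the hypothesis guarantees $\max_\dC v_\dC > 0$; otherwise you risk silently asserting optimality in the degenerate case. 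Second, you have implicitly repaired what appears to be a typo in the lemma statement: as written, $\dC^*(y)$ is defined as $\arg\max_{\dC} q(\xCd(y)|y)\PCd(\xCd(y))$ without the $-\aCd c$ term, but since $\aCd$ varies across communities the argmax with and without that term can genuinely differ, and the correct selection criterion (matching both your derivation and the consumer-side Lemma~\ref{lemma:single_community_consumption}) includes the $-\aCd c$. It would be worth stating explicitly that you are using the corrected criterion.
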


\begin{proof}
This lemma can be proved using the same argument as given in~\cite{continuous_model_arxiv} where we proved the same result for community structures  $\CS=\commStruc$ under the continuous agent model.
\end{proof}

\newpage
\section{Proof of Proposition~\ref{prop:dnash}}\label{app:dnash}

In this appendix we prove  Proposition~\ref{prop:dnash}. To do that, we make a connection with the analysis for the continuous agent model in~\cite{continuous_model_arxiv}. In~\cite{continuous_model_arxiv}  we showed that there always exists a Nash equilibrium under the continuous model. Our next results states that for each  Nash equilibrium that we obtained in~\cite{continuous_model_arxiv} there exists a corresponding \erNash for the two-sided model, given that the distances $\dld$ and $\dls$ that define  the agent sets $\Add$ and $\Ads$ are small enough.  More precisely we have the following result. 

\begin{lemma}\label{lemma:dnash}
Suppose that
$$ f(0) g(0) - c > 0.$$
Furthermore let $E_p$, $0 < E_p \leq 1$, and $E_q$, $0< E_q$, be given positive scalars, and let $\CSNash = \commStrucNash$ be a Nash equilibrium  as given in~\cite{continuous_model_arxiv}. That is, the Nash equilibrium $\commStrucNash$ has the properties that
\begin{enumerate}
\item[(a)] each community $C=(I_C,I_C) \in \setC$ is an interval community
where the interval $I_C \subset \setR$ is given by
$$I_C = [mid(I_C) - L_C, mid(I_C) + L_C).$$
\item[(b)] the set $\{I_C\}_{C \in \setC}$ is a set of mutually non-overlapping intervals that covers $\setR$, i.e. we have that
  $$ I_C \cap I_{C'} = \emptyset, \qquad C,C' \in \setC, C \neq C',$$
  and 
  $$\cup_{C \in \setC} I_C = \setR.$$
\item[(c)] there exists a constant $L_I > 0$ such that
$$ L_C = L_I, \qquad C \in \setC*$$
where
$$ 2 L_I < \min \{ b, L \}$$
and  $b$ is the constant of Assumption~\ref{ass:fg}.
\end{enumerate}
Furthermore, for each community $C=(I_C,I_C) \in \setC$ we have that
\begin{enumerate}
\item[(a)] $\alpha^*_C(y) = E_p$, $y \in I_C$.
\item[(b)] $ \beta^*_C(\cdot|y) = E_q \delta(\xs(y) - x)$, $y \in I_C$, where  $\delta$ is the Dirac delta function and $\xs(y)$ is given by
$$ \xs(y) = \arg \max_{x \in \setR} q(x|y)P_C(x)$$
with
$$P_C(x) = E_p \int_{I_C} p(x|y) dy.$$
\item[(c)] $\UCd(y) = E_pE_q \int_{I_C} \Bsbl q(\xs(z)|y)p(\xs(z)|y) - c \Bsbr dz  > 0$, $y \in I_C$.
\item[(d)]  $\UCs(y) = E_q \Bsbl q(\xs(y)|y) P_C(\xs(y) - \aC c \Bsbr  > 0$, $y \in I_C$, where
$$\aC = 2L_CE_p.$$
\end{enumerate}

Then for every $\epsilon > 0$ there exists a $\dlO > 0$ such that for all discrete agent sets $\Add$ and $\Ads$ with 
$$ 0 < \dld,\dls < \dlO,$$
the following discrete interval community structure $\CSddNash = \ddcommStrucNash$ is a $\epsilon$-relative Nash equilibrium with
$$\UCSdNash(y) > 0, \qquad y \in \Add,$$
and
$$\UCSsNash(y) > 0, \qquad y \in \Ads,$$
where $\UCSdNash(y)$ and $\UCSsNash(y)$ are the utility rates for content consumption, and production, under the discrete community structure $\CSddNash = \ddcommStrucNash$.

The  discrete interval community structure $\CSddNash = \ddcommStrucNash$ is given as follows. If there exists a community $C = \CI$ in the Nash equilibrium  $\CSNash = \commStrucNash$ under the continuous model, then the discrete interval community $\ddC = \Cd$ on $I_C$ with distance $\dl$ given by
$$ \Cdd = \intdd = \Add \cap I_C$$
and
$$ \Cds = \intsd = \Ads \cap I_C,$$
exists in the  community structure $\CSddNash  = \ddcommStrucNash$.

Furthermore, for a discrete interval community $\ddC = \Cd \in \sCdNash$ we have that
$$\aCd(y) = E_p, \qquad y \in \intdd,$$
and
$$ \bCd(x|y) = E_q \delta(\xd(y) - x), \qquad y \in \intsd,$$
where
$$\xd(y) = \arg \max_{x \in \setR} q(x|y)\PCd(x).$$
\end{lemma}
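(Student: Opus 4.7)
The plan is to verify the four claims of the lemma for the discrete structure $\CSddNash$ as constructed: membership in $\dCSd$, strict positivity of the consumer and producer utility rates, and the \erNash inequalities for every consumer in $\Add$ and every producer in $\Ads$. Each will be reduced to the corresponding property of the continuous Nash $\CSNash$ through the Riemann-sum approximations developed in Appendices~\ref{app:PCd}--\ref{app:UCsd}.

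The structural condition is immediate: since $\{I_C\}_{C \in \setC}$ partitions $\setR$ with common half-length $L_I < \min\{b,L\}/2$, the sets $\intdd = \Add \cap I_C$ and $\intsd = \Ads \cap I_C$ place every discrete agent in exactly one community, and the prescribed allocations $\aCd(y)=E_p$ and $\bCd(x|y)=E_q\delta(\xd(y)-x)$ respect the per-agent rate caps. For strict positivity of the discrete utility rates, I would apply Lemma~\ref{lemma:Riemann_FCdd} to obtain $|\dls \UCdd(y) - \UCd(y)| < \Dl_U$ uniformly in $y$ and $\dC \in \sCdNash$, and Lemma~\ref{lemma:Riemann_FCsd} to obtain the analogous bound for $\UCsd(y)$. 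Since by hypothesis $\UCd(y) > 0$ and $\UCs(y) > 0$ are bounded below uniformly, choosing $\Dl_U$ less than the corresponding continuous minima yields $\UCdd(y), \UCsd(y) > 0$, and hence $\UCSdNash(y), \UCSsNash(y) > 0$.

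For the \erNash bounds, I would first invoke Lemma~\ref{lemma:single_community_consumption} (and, on the producer side, Lemma~\ref{lemma:single_community_production} together with Corollary~\ref{cor:optimal_production_rate}) to reduce the best response to placing the agent's entire rate on a single alternative community $C'$ in $\sCdNash$, with producer deviations concentrated on the corresponding Dirac mass at $\xd(y)$ of the new community. For each such $C'$, the discrete deviation gain is then compared, after scaling by $\dls$ (respectively $\dld$), to its continuous analog via Lemma~\ref{lemma:Riemann_FCdd} (respectively Lemma~\ref{lemma:Riemann_FCsd}); since $\CSNash$ is a Nash equilibrium of the continuous model, these continuous deviation gains are nonpositive, so the discrete deviation gain is controlled by the approximation error, which can be driven below $\epsilon$ by taking $\dlO$ small.

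The main obstacle is uniform control of the deviation bound for agents whose position lies close to the boundary between two adjacent intervals $I_C$ and $I_{C'}$. In the continuous Nash such boundary agents are exactly indifferent between $C$ and $C'$, so the sign and magnitude of the discrete deviation gain are dictated entirely by the combined approximation errors. The key quantitative step is therefore to choose $\dlO$ small enough that the composition of the $\xd$-versus-$\xs$ error from Lemma~\ref{lemma:xd_xs} with the Riemann-sum errors of Lemmas~\ref{lemma:Riemann_PCd}, \ref{lemma:Riemann_FCdd}, and~\ref{lemma:Riemann_FCsd} stays below $\epsilon$ uniformly over all communities in $\sCdNash$, all choices of destination community, and all boundary agents in $\Add$ and $\Ads$.
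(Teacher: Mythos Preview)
Your proposal is correct and follows essentially the same route as the paper: establish positivity by combining the uniform lower bound on the continuous utilities with the Riemann approximations of Lemmas~\ref{lemma:Riemann_FCdd} and~\ref{lemma:Riemann_FCsd}; reduce any best response to a single-community deviation via Lemmas~\ref{lemma:single_community_consumption} and~\ref{lemma:single_community_production}; and then telescope the discrete deviation gain through the continuous one, using the Nash property of $\CSNash$ to kill the middle term and the Riemann bounds (each taken below $\epsilon/2$) to control the two outer terms. One small remark: the ``boundary-agent obstacle'' you highlight does not require a separate argument in the paper's proof---because Lemmas~\ref{lemma:Riemann_FCdd} and~\ref{lemma:Riemann_FCsd} give bounds that are uniform in $y\in\setR$, and the continuous Nash inequality $\FsCd(y)-\FCd(y)\le 0$ holds for every $y\in I_C$ (with equality precisely at the boundary), the telescoping bound applies uniformly without distinguishing interior from boundary agents.
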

Before we prove Lemma~\ref{lemma:dnash}, we make the following observation.  In~\cite{continuous_model_arxiv} we showed that there always exists a Nash equilibrium  $\CSNash = \commStrucNash$ as given in in the statement of Lemma~\ref{lemma:dnash}. Furthermore, the \erNash  $\CSddNash = \ddcommStrucNash$ given in the statement of the Lemma~\ref{lemma:dnash} has the same properties as the \erNash in  Proposition~\ref{prop:dnash}. As a result, we have that Proposition~\ref{prop:dnash} follows immediately from  Lemma~\ref{lemma:dnash}. Or in other words, proving Lemma~\ref{lemma:dnash} is sufficient to prove  Proposition~\ref{prop:dnash}.


\begin{proof}
Let $\commStrucNash$ be a Nash equilibrium as given in the statement of the lemma.  From the analysis in~\cite{continuous_model_arxiv} we have for all communities
 $C = (I_C,I_C) \in \sCNash$ that the functions $\UCd(y)$ and  $\UCs(y)$ are continuous and symmetric with respect to $mid(I_C)$ on $[mid(I_C) - L_C, mid(I_C) + L_C]$.
Furthermore, by Property (a) in the statement of the lemma,  we have  we have for all communities $C = (I_C,I_C) \in \sCNash$ that
$$ I_C = [mid(I_C) - L_C, mid(I_C) + L_C).$$
Combining these two results, it follows that there exists a $\Dl_U$,
$$ 0 < \Dl_U,$$
such that for all communities
 $C = (I_C,I_C) \in \sCNash$ we have that
$$\UCd(y) > 2\Dl_U >0, \qquad y \in I_C,$$
and
$$\UCs(y) > 2\Dl_U >0, \qquad y \in I_C.$$
Let $\dlO>O$ be a positive scalar such that the following is true.
For each interval community $C = (I_C,I_C) \in \sCNash$ in the Nash equilibrium $\CSNash$ given in the statement of the lemma, and each discrete interval community  $\ddC = \Cd$ on $I_C$ with distance $\dl$ such that
$$ 0 < \dl < \dlO,$$
we have that
$$ \left | \dls \FCdd(y) - \FCd(y) \right | < \min\left \{ \frac{\epsilon}{2}, \Dl_U \right \}, \qquad y \in \setR$$
and
$$ \left | \dld \FCsd(y) - \FCs(y) \right | < \min\left \{ \frac{\epsilon}{2}, \Dl_U \right \}, \qquad y \in \setR,$$
where the functions $\FCdd(y)$, $\FCd(y$, and $\FCsd(y)$ and $\FCs(y)$, are as defined in Section~\ref{app:UCdd}~and~\ref{app:UCsd}. 
Note that that such a $\dlO$ exists by Lemma~\ref{lemma:Riemann_FCdd}~and~\ref{lemma:Riemann_FCsd}. Furthermore, from the analysis in~\cite{continuous_model_arxiv} we have for each community $C = (I_C,I_C \in \CSNash$ that
$$ \UCd(y) = \FCd(y), \qquad y \in I_C,$$
and
$$ \UCs(y) = \FCs(y), \qquad y \in I_C.$$
It then follows that if the sets $\Add$ and $\Ads$ are such that
$$ 0 < \dld,\dls < \dlO,$$
then we have for all communities $\ddC = \Cd \in \sCdNash$ in the discrete interval community structure $\CSddNash = \ddcommStrucNash$ given in the statement of the lemma that
$$\UCdd(y) = \FCdd(y) =
E_p E_q \sum_{z \in \intdd} \Big [ p(\xd(z)|y) q(\xd(z)|z) - c \Big ]dz > \Dl_U >0, \qquad y \in \intdd,$$
where
$$ \xd(z) =  \arg\max_{x \in \setR} q(x|z) \PCd(x),$$
and
$$\UCsd(y) = \FCsd(y) =
 E_q \Big [  q \big (\xd(y)|y \big ) \PCd(\xd(y))- \aCd c \Big ]  > \Dl_U >0, \qquad y \in \intsd,$$
where
$$\aCd = \sum_{y \in \intdd} E_p.$$

Let $\ddC = \Cd \in \sCdNash$ be a community in the community structure $\ddcommStrucNash$ as given in the statement of the lemma. By definition,

Furthermore, by the definition we have for the the discrete interval community structure  $\CSddNash = \ddcommStrucNash$ given in the statement of the lemma that for the following is true. Let $\ddC = \Cd \in \sCdNash$ be a community in the community structure $\ddcommStrucNash$, then for all agents $y \in \intdd$ we have that
$$\alpha_{\ddC'}(y) = 0, \qquad \ddC' \in \sCdNash, \ddC' \neq \ddC$$
and
$$||\beta_{\ddC'}(\cdot|y)|| = 0,  \qquad  \ddC' \in \sCdNash, \ddC' \neq \ddC.$$
It then follows  that 
$$ \UCSddNash(y) = \UCdd(y), \qquad y \in \intdd$$
and
$$\UCSsdNash(y) = \UCsd(y), \qquad y \in \intsd.$$
Combining the above results, we obtain that  for the  discrete interval community structure $\CSddNash = \ddcommStrucNash$ given in the statement of the lemma we have
$$\UCSddNash(y) > 0, \qquad y \in \Add,$$
and
$$\UCSsdNash(y) > 0, \qquad y \in \Ads.$$

In order to prove the lemma, it then remains to show that the discrete interval community structure  $\CSddNash = \ddcommStrucNash$ is a \erNash, i.e. we have to show that
\begin{enumerate}
\item[a)]
$$\UCSddNash(\aCd(y)|y) - \UCSddNash(y)< \epsilon, \qquad y \in \intdd.$$
where
$$ \aSCdd(y) = \arg \max_{\aSCd'(y): || \aSCd'(y) || \leq E_p}  \UCSddNash(\aSCd'(y)|y), $$
and
\item[b)]
$$\UCSsdNash(\bSCd(y)|y) - \UCSsdNash(y)< \epsilon, \qquad y \in \intsd.$$
where
$$ \bSCdd(y) = \arg \max_{\bSCd'(y): || \bSCd'(y) || \leq E_q}  \UCSsdNash(\bSCd'(y)|y). $$
\end{enumerate}
As by definition, we have for all communities $\ddC = \Cd \in \sCdNash$ in the discrete interval community structure $\ddcommStrucNash$ as given in the statement of the lemma for all agents $y \in \intdd$ we have that
$$\alpha_{\ddC'}(y) = 0, \qquad \ddC' \in \sCdNash, \ddC' \neq \ddC$$
and
$$||\beta_{\ddC'}(\cdot|y)|| = 0,  \qquad  \ddC' \in \sCdNash, \ddC' \neq \ddC,$$
this is equivalent to show that for each community $\ddC = \Cd \in \sCdNash$ in the community structure $\ddcommStrucNash$ given in the statement of the lemma, we have that
\begin{enumerate}
\item[a)]
$$\UCSddNash(\aCd(y)|y) - \UCdd(y)< \epsilon, \qquad y \in \intdd.$$
where
$$ \aSCd(y) = \arg \max_{\aSCd'(y): || \aSCd'(y) || \leq E_p}  \UCSddNash(\aSCd'(y)|y), $$
and
\item[b)]
$$\UCSsdNash(\bSCd(y)|y) - \UCsd(y)< \epsilon, \qquad y \in \intsd.$$
where
$$ \bSCd(y) = \arg \max_{\bSCd'(y): || \bSCd'(y) || \leq E_q}  \UCSsdNash(\bSCd'(y)|y). $$
\end{enumerate}

We first prove property that  for all communities $\ddC = \Cd \in \sCdNash$ we have
$$\UCSddNash(\aCd(y)|y) - \UCdd(y)< \epsilon, \qquad y \in \intdd.$$
To do this, we consider a given community  $\tdC = \tCd \in \sCdNash$ in the discrete interval community structure $\ddcommStrucNash$, and a given agent $\ty \in \tintdd$.  In order to prove the above result, it suffices to show for the given  community $\tdC$ and given agent $\ty$ we have that
$$\UCSddNash(\aCd(\ty)|\ty) - \tUCdd(\ty)< \epsilon.$$

By Lemma~\ref{lemma:single_community_consumption}, we have that
$$ \aSCd(\ty) = \arg \max_{\aSCd'(\ty): || \aSCd'(\ty) || \leq E_p}  \UCSddNash(\aSCd(\ty)|\ty)$$
is given by
$$ \aCd(\ty) = \left \{
\begin{array}{rl}
E_p, & \ddC = \ddCs(\ty) \\
0, & \ddC \neq \ddCs(\ty),
\end{array} \right .$$
where
$$\ddCs(\ty) = \arg \max_{\ddC \in \sCd} \int_{x \in \setR} \Bsbl p(x|\ty) \QCd(x) - \bCd(x) c \Bsbr$$
and
$$\QCd(x) = \sum_{z \in \intsd} \bCd(x|z) q(x|z).$$ 
To simplify the notation let
$$ \ddCs =  \ddCs(\ty),$$
and let 
 $\intsdd$ and $\intssd$ be the set of agents that consume and produce content in $\ddCs$, i.e. we have that
$$\ddCs = \ddCs(\ty) = \Csd.$$
It then follows that
$$\UCSddNash(\aCd(\ty)|\ty) = \FsCdd(\ty)
= E_p E_q \sum_{z \in \intssd} \Big [ p(\xd(z)|\ty) q(\xd(z)|z) - c \Big ]dx,$$
where
$$\xd(z) = \arg \max_{x \in \setR} q(x|z)P_{\dl,\ddCs}(x).$$

Let the interval community $C = \CI$ in the Nash equilibrium  $\commStrucNash$ given in the statement of the lemma be such that for the given  $\tdC = \tCd \in \sCdNash$ we have that
$$ \tintdd = \Add \cap I_C$$
and
$$ \tintsd = \Ads \cap I_C.$$
Furthermore, let the interval community $C^* = \Cs$ in the Nash equilibrium  $\commStrucNash$ given in the statement of the lemma be such that for the community
$$\ddCs =  \ddCs(\ty) =  \Csd$$
we have that
$$ \intsdd = \Add \cap I_{C^*},$$
and
$$ \intssd = \Ads \cap I_{C^*}.$$
Note that such interval communities  $C = \CI$ and   $C^* = \Cs$ by the construction of the community structure  $\ddcommStrucNash$ given in the statement of the lemma.

Using these definition, we then have that for the given community $\tdC$ and given agent $\ty$ that
\begin{eqnarray}
\dls \Bbl \UCSddNash(\aCd(\ty)|\ty)  - \tUCdd(\ty) \Bbr
&=& \dls \UCSddNash(\aCd(\ty)|\ty) - \FsCd(\ty)  +  \cdots \label{eq:diff_UCSddNash}\\ \nonumber
&& \FsCd(\ty) -  \FCd(\ty) + \cdots \\ \nonumber
&& \FCd(\ty) - \dls  \tUCdd(\ty) \\ \nonumber
&=& \dls \FsCdd(\ty) - \FsCd(\ty)  +  \cdots \\ \nonumber
&& \FsCd(\ty) -  \FCd(\ty) + \cdots \\ \nonumber
&& \FCd(\ty) - \dls  \tFCdd(\ty)
\end{eqnarray}
By the properties of a Nash equilibrium as defined in~\cite{continuous_model_arxiv}, we have for each community $C = \CI$ in the Nash equilibrium $\commStrucNash$ given in the statement of the lemma that 
$$ \FCd(y) \geq F^{(d)}_{C'}(y), \qquad y \in I_C, C' \in \sCNash.$$
It then follows that in Eq.~\eqref{eq:diff_UCSddNash} we have
$$ \FsCd(\ty) -  \FCd(\ty) \leq 0,$$
and we obtain that
\begin{eqnarray*}
\dls \Bbl \UCSddNash(\aCd(\ty)|\ty)  - \tUCdd(\ty) \Bbr
&\leq & \Big | \dls \FsCdd(\ty) - \FsCd(\ty) \Big | +  \cdots \\
&& \Big |  \FCd(\ty) - \dls \tFCdd(\ty) \Big |.
\end{eqnarray*}
By construction we have for
$$ 0 < \dl < \dlO$$
that
$$\Big | \dls \FsCdd(\ty) - \FsCd(\ty) \Big | <  \frac{\epsilon}{2}$$
and
$$\Big |  \FCd(\ty) - \dls \FCdd(\ty) \Big | <  \frac{\epsilon}{2}.$$
Combining the above results, we obtain that
\begin{eqnarray*}
\dls \Bbl \UCSddNash(\aCd(\ty)|\ty)  - \UCdd(\ty) \Bbr
&\leq & \Big | \dls \FsCdd(\ty) - \FsCd(\ty) \Big | +  \cdots \\
&& \Big |  \FCd(\ty) - \dls \FCdd(\ty) \Big |\\
&<& 2 \frac{\epsilon}{2} \\
&=& \epsilon,
\end{eqnarray*}
and
$$\UCSddNash(\bCd(\ty)|y) - \UCSddNash(\ty)< \epsilon, \qquad y \in \intdd.$$
It then follows that  for community structure $\CSddNash = \ddcommStrucNash$ given in the statement of the lemma we have 
$$\UCSddNash(\aCd(y)|y) - \UCSddNash(y)< \epsilon, \qquad y \in \Add.$$

Using the same line of argument, we can show that for community structure $\CSddNash = \ddcommStrucNash$ given in the statement of the lemma we have 
$$\UCSsdNash(\bCd(y)|y) - \UCSsdNash(y)< \epsilon, \qquad y \in \Ads.$$
It then follows that the  community structure $\CSddNash = \ddcommStrucNash$ given in the statement of the lemma is indeed a \erNash. This completes the proof of the lemma.
\end{proof}

\end{document}